\begin{document}

\begin{frontmatter}

\title{Reconstructing semi-directed level-1 networks using few quarnets\tnoteref{fund1}}

\author[inst1]{Martin Frohn}
\ead{martin.frohn@maastrichtuniversity.nl}

\author[inst2]{Niels Holtgrefe\corref{cor1}}
\ead{n.a.l.holtgrefe@tudelft.nl}

\author[inst2]{Leo van Iersel}
\ead{l.j.j.vaniersel@tudelft.nl}

\author[inst2]{Mark Jones}
\ead{m.e.l.jones@tudelft.nl}

\author[inst1]{Steven Kelk}
\ead{steven.kelk@maastrichtuniversity.nl}

\affiliation[inst1]{
            organization={Department of Advanced Computing Sciences, Maastricht University},
            addressline={Paul-Henri~Spaaklaan~1},
            city={Maastricht},
            postcode={6229~EN}, 
            country={The Netherlands}
            }
\affiliation[inst2]{
            organization={Delft Institute of Applied Mathematics, Delft University of Technology},
            addressline={Mekelweg~4}, 
            city={Delft},
            postcode={2628~CD}, 
            country={The Netherlands}
            }
            
\tnotetext[fund1]{This work was supported by grants OCENW.M.21.306 and OCENW.KLEIN.125 of the Dutch Research Council (NWO).}

\cortext[cor1]{Corresponding author}

\begin{abstract}
Semi-directed networks are partially directed graphs that model evolution where the directed edges represent reticulate evolutionary events. We present an algorithm that reconstructs binary $n$-leaf semi-directed level-1 networks in $\bigO( n^2)$ time from its quarnets (4-leaf subnetworks). Our method assumes we have direct access to all quarnets, yet uses only an asymptotically optimal number of $\bigO(n \log n)$ quarnets. When the network is assumed to contain no triangles, our method instead relies only on four-cycle quarnets and the splits of the other quarnets. A variant of our algorithm works with quartets rather than quarnets and we show that it reconstructs most of a semi-directed level-1 network from an asymptotically optimal $\bigO(n \log n)$ of the quartets it displays. Additionally, we provide an $\bigO(n^3)$ time algorithm that reconstructs the tree-of-blobs of any binary $n$-leaf semi-directed network with unbounded level from $\bigO(n^3)$ splits of its quarnets.
\end{abstract}

\begin{keyword}
Phylogenetics \sep Semi-directed network \sep Quarnets \sep Quartets \sep Tree-of-blobs \sep Splits
\end{keyword}

\end{frontmatter}

\section{Introduction}
Phylogenetic networks are directed acyclic graphs that model the evolutionary history of taxa, represented by the leaves of the graph. Unlike phylogenetic trees, these networks can account for reticulate events such as hybridization and horizontal gene transfer, making them essential tools in evolutionary studies \cite{bapteste2013networks}. Among these, phylogenetic \emph{level-1} networks (originally called `galled trees' \cite{wang2001perfect,gusfield2003efficient}) form a fundamental class \cite{kong2022classes}, characterized by tree-like components and isolated cycles with a single reticulation. Sol\'is-Lemus and An\'e \cite{solis2016inferring} introduced the semi-directed topology of a phylogenetic network (see \cref{fig:preliminaries}): partially directed graphs that only model the direction of evolutionary change for reticulate events while the location of the root of the network remains unknown. If an outgroup (i.e. a more distantly related taxon) is included in the analysis, the directed topology can be recovered from the semi-directed network \cite{solis2016inferring}. At the expense of containing less evolutionary information than their directed counterparts, semi-directed level-1 networks show favourable identifiability results under many evolutionary models \cite{banos2019identifying,solis2020identifiability,xu2023identifiability,allman2024identifiability,gross2018distinguishing,hollering2021identifiability,gross2021distinguishing}. That is, under these models, it is theoretically possible to infer (most of) the semi-directed level-1 network from different types of biological data (such as aligned nucleotide sequences or gene trees).

A major challenge in phylogenetics is the reliable construction of networks that best fit the finite biological data that is available. Various existing practical software tools can be used to infer phylogenetic level-1 networks (both directed and semi-directed) from biological data under different model assumptions and using different techniques. For example, \textsc{PhyloNet} \cite{than2008phylonet,yu2015maximum}, \textsc{SNaQ} \cite{solis2016inferring,solis2017phylonetworks} and \textsc{PhyNEST} \cite{kong2022inference} first build a starting network and then explore the space of networks to choose a best network based on a likelihood criterion. On the other hand, \textsc{Lev1athan} \cite{huber2010practical}, \textsc{TriLoNet} \cite{oldman2016trilonet}, and \textsc{NANUQ} \cite{allman2019nanuq} utilize information on substructures with few leaves to puzzle together the structure of the full network. Recently, two new programs that combine 4-leaf substructures to construct a semi-directed level-1 network were introduced: \textsc{NANUQ$^+$} \cite{NANUQplus} and \textsc{Squirrel} \cite{holtgrefe2024squirrel}. Several theoretical contributions focused on building phylogenetic level-1 networks from substructures have also appeared, see e.g. \cite{warnow2024advances, huebler2019constructing,gambette2012quartets,keijsper2014reconstructing, huber2018quarnet} and references therein.

A common starting point in many of these methods is to first infer a set of unrooted displayed \emph{quartets} (4-leaf trees) for every subset of four leaves \cite{solis2016inferring,allman2019nanuq,NANUQplus,warnow2024advances,gambette2012quartets,keijsper2014reconstructing}. Note that the displayed quartets of a semi-directed network are the 4-leaf trees displayed in the 4-leaf subnetworks of the network. Ba{\~n}os \cite{banos2019identifying} showed that the displayed quartets of a semi-directed level-1 network do not provide enough information to reconstruct every possible semi-directed level-1 network. On the positive side, Huber et al. \cite{huber2024splits} recently showed that all semi-directed level-1 (and even level-2) networks can be distinguished by the semi-directed \emph{quarnets} (4-leaf subnetworks) they induce. Motivated by positive identifiability results for inferring quarnets under certain group-based models of evolution (Jukes-Cantor, Kimura-2 and Kimura-3) \cite{gross2018distinguishing,hollering2021identifiability,gross2021distinguishing}, some practical programs have already arisen that infer such quarnets from practical sequence data \cite{holtgrefe2024squirrel,barton2022statistical,martin2023algebraic}. These programs have inspired the software tool \textsc{Squirrel} \cite{holtgrefe2024squirrel}, which employs quarnets without triangles (3-cycles) and then builds a semi-directed triangle-free level-1 network. Additionally, Huebler et al. \cite{huebler2019constructing} described two theoretical algorithms that reconstruct an $n$-leaf semi-directed level-1 network using all of its $\Theta (n^4)$ induced quarnets, although full correctness was not formally proved. This is in stark contrast with the best algorithms that reconstruct binary phylogenetic trees from quartets. Assuming direct access to all quartets is available, this can be done in $\bigO (n \log n)$ time, while using only $\bigO (n \log n)$ quartets \cite{lingas1999efficient,brodal2001complexity}. Here, assuming direct access means that an `oracle' is available from which the algorithm can query a quartet on any set of leaves.

%Similar to \cite{huebler2019constructing}, almost all the previously mentioned methods for constructing phylogenetic level-1 networks use \emph{dense} (or sometimes \emph{super dense}) sets of quarnets or quartets. Such sets contain exactly one (or at least one) quarnet or quartet for each subset of four leaves. The only algorithm that

%Notably, Keijsper and Pendavingh \cite{keijsper2014reconstructing} present an algorithm that constructs undirected phylogenetic level-1 networks from quartets by using all quartets that contain a certain fixed leaf. 

%strictly pertain to \emph{dense} sets of quarnets or \emph{super-dense} sets of quartets. Such sets contain exactly one quarnet or at least one quartet for each subset of four leaves, respectively. The one exception is the theoretical algorithm from \cite{keijsper2014reconstructing} to construct undirected phylogenetic level-1 network, which instead uses a close-to-dense subset of quartets that all contain a certain fixed leaf. All these methods are in stark contrast with the best theoretical algorithms that reconstruct binary phylogenetic trees from quartets. Assuming direct access to all quartets is available, this can be done in $\bigO (n \log n)$ time, while using only $\bigO (n \log n)$ quartets \cite{lingas1999efficient,brodal2001complexity}. Here, assuming direct access means that an `oracle' is available from which the algorithm can query a quartet on any set of leaves.

Reducing the computational overhead by utilizing only the essential quarnets could significantly speed up practical network construction algorithms, enabling more efficient and scalable approaches to reconstructing semi-directed level-1 networks. We present an algorithm that constructs a semi-directed level-1 network from its induced quarnets with bounds similar to those for tree reconstruction, thus taking a first step towards network construction methods that rely only on a small subset of quarnets. Specifically, our algorithm reconstructs an $n$-leaf semi-directed level-1 network using $\bigO (n \log n)$ quarnets (again assuming direct access to all quarnets), and running in $\bigO (n^2)$ time. Similar to trees, we show that one needs $\Omega (n \log n)$ quarnets, thus proving optimality of the number of quarnets that are used.

Our algorithm works by repeatedly attaching leaves to construct a canonical form of a network while using only the splits of the quarnets. Subsequently, the four-cycle quarnets are used to determine the remaining parts of the larger cycles, while the triangles in this canonical form can be inferred from the triangles in the quarnets. The structure of our algorithm aligns well with the earlier mentioned existing quarnet inference methods which provide evidence that inferring the splits of quarnets from data is easier than inferring four-cycle quarnets, while identifying triangles in quarnets is the most challenging \cite{holtgrefe2024squirrel,barton2022statistical,martin2023algebraic}. Having access to the displayed quartets of a network allows one to deduce most of the topology of the quarnets of the network \cite{banos2019identifying}. Consequently, our approach also provides similar complexity improvements for the related problem of constructing (most of) a semi-directed level-1 network from its displayed quartets.

%The structure of our algorithm aligns well with the previously mentioned quarnet inference methods, which provide evidence that, in practice, inferring quarnet-splits is relatively easier than inferring four-cycle quarnets, while identifying triangles in quarnets is the most challenging.

%hard to locate triangles in quarnets under the Jukes-Cantor \cite{barton2022statistical,martin2023algebraic} and Kimura 2-parameter \cite{martin2023algebraic} group-based models of evolution.

%Due to this two-step approach and the relation between quarnets and displayed quartets, our approach also has similar complexity implications for the related problem of constructing (most of) a semi-directed level-1 network from its displayed quartets.

%The structure of our algorithm aligns nicely with the earlier mentioned existing quarnet inference methods who present evidence that in practice it is hard to locate triangles in quarnets under the Jukes-Cantor \cite{barton2022statistical,martin2023algebraic} and Kimura 2-parameter \cite{martin2023algebraic} group-based models of evolution. That is, under those models only four-cycle quarnets and the splits of the other quarnets can be inferred from finite nucleotide sequence data with high accuracy. Since our algorithm uses the triangles in the quarnets only to determine triangles in the network, it can reconstruct a triangle-free network using only this `practically identifiable' information of the quarnets. 

Additionally, this paper covers the \emph{tree-of-blobs} of a semi-directed network, originally introduced by Gusfield et al. \cite{gusfield2007decomposition} for directed phylogenetic networks. Such a tree shows only the tree-like aspects of a network and was shown to be identifiable under several models \cite{xu2023identifiability,allman2023tree,rhodes2024identifying}. The tree-of-blobs is for example useful in the context of \cite{rhodes2024identifying}. There it was shown that under several evolutionary models, once the tree-of-blobs is known, displayed quartets can be used to infer the circular order in which subnetworks attach to an outer-labeled planar blob. On the algorithmic side, Allman et al. \cite{allman2024tinnik} recently introduced the practical quartet-based program \textsc{TINNiK} that constructs a tree-of-blobs from gene trees under the network multispecies coalescent model. Gambette, Berry, and Paul \cite{gambette2012quartets} presented an algorithm to construct the `SN-tree' from a set of quartets, which can also be used to construct the tree-of-blobs of an $n$-leaf semi-directed network from all $\Theta (n^4)$ quartets displayed by the network. Furthermore, the \textsc{IQ$^*$} algorithm \cite{berry2000inferring} can be applied to construct the tree-of-blobs using all $\Theta (n^4)$ splits of the quarnets induced by a semi-directed network (see the proof in the supplementary material of \cite{holtgrefe2024squirrel}). In our paper, we improve upon these aforementioned algorithms by reconstructing the tree-of-blobs of any $n$-leaf semi-directed network in $\bigO (n^3)$ time using $\bigO (n^3)$ splits of its quarnets, which can also be deduced from the displayed quartets of the network.

The remainder of the paper is structured as follows. \cref{sec:preliminaries} contains definitions and notation, while \cref{sec:splits} introduces some crucial results on splits in semi-directed networks. We continue with the tree-of-blobs reconstruction algorithm in \cref{sec:blobtree} and refine our approach in \cref{sec:level-1} to reconstruct level-1 networks. We end with a discussion in \cref{sec:discussion}.

\section{Preliminaries}\label{sec:preliminaries}

\paragraph{Phylogenetic trees and networks}
A \emph{blob} of a directed graph is a maximal subgraph without any cut-edges and it is an \emph{$m$-blob} if it has $m$ edges incident to it. A \emph{(binary) directed phylogenetic network} on a set of at least two leaves $\X$ is a rooted directed acyclic graph with no edges in parallel such that (i): it has no 1-blobs or 2-blobs, other than possibly a blob with no incoming and two outgoing edges; (ii): the root has out-degree two; (iii): each vertex with out-degree zero has in-degree one and the set of such vertices is $\X$; (iv): all other vertices either have in-degree one and out-degree two, or in-degree two and out-degree one. A vertex of the last type is a \emph{reticulation vertex}, and the two edges directed towards it are called \emph{reticulation edges}. A directed phylogenetic network without reticulation vertices is a \emph{(binary) directed phylogenetic tree}. The type of network this paper revolves around can be obtained from a directed phylogenetic network as follows.
\begin{definition}[Semi-directed network]\label{def:semi_directed}
A \emph{(binary) semi-directed phylogenetic network} $\N$ on $\X$ is a partially directed graph that (i): can be obtained from a directed phylogenetic network by undirecting all non-reticulation edges and suppressing the former root; (ii): does not have any parallel edges, degree-2 vertices, 1-blobs or 2-blobs.
\end{definition}
For the sake of brevity, we refer to the above networks simply as \emph{semi-directed networks}. Since we do not undirect reticulation edges, we can still refer to the reticulation vertices and edges of a semi-directed network. The \emph{skeleton} of a semi-directed network is the undirected graph that is obtained from the network by removing all edge directions. A \emph{(binary undirected) phylogenetic tree} on at least two leaves $\X$ is an undirected binary tree with leaf set $\X$ and no degree-2 vertices. Clearly, such a tree is a specific type of semi-directed network. The non-leaf vertices of such a tree are \emph{internal vertices}. We say that a phylogenetic tree is a \emph{displayed tree} of a semi-directed network on $\X$ if the tree can be obtained from the network by deleting one reticulation edge per reticulation vertex, undirecting the remaining reticulation edges, removing all vertices not on any path between a pair of leaves in $\X$, and suppressing degree-2 vertices.

Analogous to a directed phylogenetic network, a \emph{blob} of a semi-directed network is a maximal subgraph without any cut-edges and it is an \emph{$m$-blob} or has \emph{degree-$m$} if it has $m$ edges incident to it. The blob is \emph{trivial} if it consists of a single vertex and it is \emph{internal} if it is not a leaf. A semi-directed network is \emph{simple} if it has at most one internal blob, it is \emph{level-$\ell$} if every blob contains at most $\ell$ reticulation vertices, and it is \emph{strict level-$\ell$} if it is not level-$(\ell-1)$. Consequently, a semi-directed network is level-1 if every blob is either a single vertex or a \emph{$k$-cycle}: a cycle of $k\geq 3$ vertices where directions are disregarded. We refer to 3-cycles as \emph{triangles} and call a semi-directed network \emph{triangle-free} if none of its blobs are triangles.

The \emph{tree-of-blobs} $\T ( \N )$ of a semi-directed network $\N$ is obtained by contracting every blob $\B$ to a single vertex $v$, in which case we say that \emph{$v$ represents $\B$}. By definition, the tree-of-blobs $\T (\N)$ must be an undirected phylogenetic tree on $\X$ since a semi-directed network contains no 2-blobs. It is well-known that constructing the tree-of-blobs of a given network takes linear time \cite{hopcroft1973algorithm}. To illustrate some of the previous definitions, we refer to \cref{fig:preliminaries}, which shows a semi-directed level-2 network and its tree-of-blobs.

\begin{figure}[htb]
\centering
\begin{tikzpicture}[scale=0.5]
	\begin{pgfonlayer}{nodelayer}
		\node [style={leaf_node}, label={above:1}] (0) at (2.25, 7.75) {};
		\node [style={leaf_node}, label={left:2}] (1) at (-0.75, 4.75) {};
		\node [style={leaf_node}, label={below:3}] (2) at (2.25, 1.75) {};
		\node [style={leaf_node}, label={below:4}] (3) at (4.75, 0.5) {};
		\node [style={leaf_node}, label={below:5}] (4) at (6.5, 2.25) {};
		\node [style={leaf_node}, label={below:6}] (5) at (8, 2.75) {};
		\node [style={leaf_node}, label={right:7}] (6) at (10, 4.75) {};
		\node [style={leaf_node}, label={above right:8}] (7) at (9, 7.75) {};
		\node [style={leaf_node}, label={above left:9}] (8) at (7, 7.75) {};
		\node [style={leaf_node}, label={above:10}] (9) at (4.5, 7) {};
		\node [style={internal_node}] (16) at (2.25, 6.75) {};
		\node [style={internal_node}] (17) at (2.25, 2.75) {};
		\node [style={internal_node}] (18) at (0.25, 4.75) {};
		\node [style={internal_node}] (19) at (4.25, 4.75) {};
		\node [style={internal_node}] (20) at (0.75, 6.25) {};
		\node [style={internal_node}] (21) at (0.75, 3.25) {};
		\node [style={internal_node}] (22) at (3.75, 3.25) {};
		\node [style={internal_node}] (23) at (3.75, 6.25) {};
		\node [style={internal_node}] (24) at (4.5, 2.5) {};
		\node [style={internal_node}] (25) at (4.75, 1.5) {};
		\node [style={internal_node}] (26) at (5.5, 2.25) {};
		\node [style={internal_node}] (27) at (7, 4.75) {};
		\node [style={internal_node}] (28) at (8, 5.75) {};
		\node [style={internal_node}] (29) at (8, 3.75) {};
		\node [style={internal_node}] (30) at (9, 4.75) {};
		\node [style={internal_node}] (31) at (8, 6.75) {};
		\node [style={main_label}] (32) at (-0.25, 8.5) {$\N$};
		\node [style={internal_node}] (33) at (21.5, 4.5) {};
		\node [style={internal_node}] (34) at (21.5, 3.5) {};
		\node [style={internal_node}] (35) at (22.25, 4) {};
		\node [style={internal_node}] (36) at (23.5, 4) {};
		\node [style={internal_node}] (37) at (24.75, 4) {};
		\node [style={internal_node}] (38) at (24.75, 3.25) {};
		\node [style={internal_node}] (39) at (24.25, 2.5) {};
		\node [style={internal_node}] (40) at (25.25, 2.5) {};
		\node [style={internal_node}] (41) at (26, 4) {};
		\node [style={leaf_node}, label={left:1}] (42) at (20.75, 5.25) {};
		\node [style={leaf_node}, label={left:2}] (43) at (20.75, 2.75) {};
		\node [style={leaf_node}, label={below:3}] (44) at (23.5, 3.25) {};
		\node [style={leaf_node}, label={below:4}] (45) at (23.75, 1.75) {};
		\node [style={leaf_node}, label={below:5}] (46) at (25.75, 1.75) {};
		\node [style={leaf_node}, label={right:6}] (47) at (26.75, 3.25) {};
		\node [style={leaf_node}, label={right:7}] (48) at (26.75, 4.75) {};
		\node [style={main_label}] (49) at (24.25, 7) {$\N|_{\X_7}$};
		\node [style={internal_node}] (50) at (14.5, 4) {};
		\node [style={internal_node}] (51) at (16.5, 4) {};
		\node [style={leaf_node}, label={above:9}] (52) at (16, 5.75) {};
		\node [style={leaf_node}, label={above:8}] (53) at (17, 5.75) {};
		\node [style={leaf_node}, label={right:7}] (54) at (17.75, 4) {};
		\node [style={leaf_node}, label={below:6}] (55) at (16.5, 2.75) {};
		\node [style={leaf_node}, label={above:10}] (56) at (14.5, 5.25) {};
		\node [style={leaf_node}, label={left:1}] (57) at (13.5, 5) {};
		\node [style={leaf_node}, label={left:2}] (58) at (13, 4) {};
		\node [style={leaf_node}, label={left:3}] (59) at (13.5, 3) {};
		\node [style={leaf_node}, label={below:4}] (60) at (14, 2.25) {};
		\node [style={leaf_node}, label={below:5}] (61) at (15, 2.25) {};
		\node [style={internal_node}] (62) at (16.5, 5) {};
		\node [style={internal_node}] (63) at (14.5, 3) {};
		\node [style={main_label}] (64) at (15.25, 8) {$\T (\N)$};
	\end{pgfonlayer}
	\begin{pgfonlayer}{edgelayer}
		\draw [style={ret_arc}, bend left] (20) to (21);
		\draw [style={ret_arc}, bend right=15] (18) to (21);
		\draw [style={ret_arc}, bend left=15] (16) to (23);
		\draw [style={ret_arc}, bend right=15] (19) to (23);
		\draw [style=edge, bend left=15] (19) to (22);
		\draw [style=edge, bend left=15] (22) to (17);
		\draw [style=edge, bend left=15] (17) to (21);
		\draw [style=edge, bend left=15] (18) to (20);
		\draw [style=edge, bend left=15] (20) to (16);
		\draw [style=edge] (1) to (18);
		\draw [style=edge] (0) to (16);
		\draw [style=edge] (17) to (2);
		\draw [style=edge] (23) to (9);
		\draw [style=edge] (22) to (24);
		\draw [style=edge] (25) to (3);
		\draw [style=edge] (26) to (4);
		\draw [style={ret_arc}] (24) to (25);
		\draw [style={ret_arc}] (26) to (25);
		\draw [style=edge] (24) to (26);
		\draw [style={ret_arc}, bend left] (27) to (28);
		\draw [style={ret_arc}, bend right] (30) to (28);
		\draw [style=edge] (29) to (5);
		\draw [style=edge, bend right] (29) to (30);
		\draw [style=edge] (30) to (6);
		\draw [style=edge] (28) to (31);
		\draw [style=edge] (31) to (7);
		\draw [style=edge] (8) to (31);
		\draw [style=edge] (27) to (19);
		\draw [style=edge, bend left] (29) to (27);
		\draw [style={ret_arc}] (33) to (35);
		\draw [style={ret_arc}] (34) to (35);
		\draw [style={ret_arc}] (38) to (39);
		\draw [style={ret_arc}] (40) to (39);
		\draw [style=edge] (39) to (45);
		\draw [style=edge] (40) to (46);
		\draw [style=edge] (40) to (38);
		\draw [style=edge] (38) to (37);
		\draw [style=edge] (37) to (41);
		\draw [style=edge] (41) to (48);
		\draw [style=edge] (47) to (41);
		\draw [style=edge] (36) to (37);
		\draw [style=edge] (36) to (44);
		\draw [style=edge] (35) to (36);
		\draw [style=edge] (34) to (33);
		\draw [style=edge] (33) to (42);
		\draw [style=edge] (34) to (43);
		\draw [style=edge] (56) to (50);
		\draw [style=edge] (50) to (57);
		\draw [style=edge] (58) to (50);
		\draw [style=edge] (50) to (59);
		\draw [style=edge] (50) to (63);
		\draw [style=edge] (63) to (60);
		\draw [style=edge] (61) to (63);
		\draw [style=edge] (50) to (51);
		\draw [style=edge] (51) to (55);
		\draw [style=edge] (51) to (54);
		\draw [style=edge] (62) to (51);
		\draw [style=edge] (62) to (53);
		\draw [style=edge] (62) to (52);
	\end{pgfonlayer}
\end{tikzpicture}
\caption{A semi-directed level-2 network $\N$ on $\X = \{1, \ldots, 10\}$, its tree-of-blobs $\T(\N)$, and its subnetwork $\N|_{\X_7}$ induced by the first seven leaves $\X_7 = \{1, \ldots, 7\}$.}
\label{fig:preliminaries}
\end{figure}

An \emph{up-down path} between two leaves $x_1$ and $x_2$ of a semi-directed network is a path of $k$ edges where the first $\ell$ edges are directed towards $x_1$ and the last $k - \ell$ edges are directed towards $x_2$. Here, we consider undirected edges to be bidirected. With this notion, we can define the (induced) subnetwork of a semi-directed network (illustrated by \cref{fig:preliminaries}). This was shown to be well-defined in \cite{huber2024splits}, while Ba{\~n}os \cite{banos2019identifying} showed the useful fact that first taking the semi-directed network of a directed network and then inducing a subnetwork is equivalent to first inducing a directed subnetwork and then making it a semi-directed network. In case the original network is a phylogenetic tree, we often call the subnetwork a \emph{subtree} instead.
\begin{definition}[Subnetwork]\label{def:restriction}
Given a semi-directed network $\N$ on $\X$ and some $\Y \subseteq \X$ with $| \Y | \geq 2$, the \emph{subnetwork} of $\N$ induced by $\Y$ is the semi-directed network $\N|_\Y$ obtained from $\N$ by taking the union of all up-down paths between leaves in $\Y$, followed by exhaustively suppressing all 2-blobs and degree-2 vertices, and identifying parallel edges.    
\end{definition}
An interesting fact is that, in general, $\T ( \N)|_\Y \neq \T (\N|_\Y)$. In other words, the subtree of a tree-of-blobs need not be equal to the tree-of-blobs of a subnetwork. In particular, the latter can be more refined. As an example, consider \cref{fig:preliminaries}, where we have that $\T ( \N)|_{\X_7} \neq \T (\N|_{\X_7})$. This property shows why reconstructing the tree-of-blobs of a semi-directed network is inherently different than doing so for undirected networks. The difficulty for semi-directed networks is that a cut-edge separating two sets of leaves in a subnetwork might not exist in the full network, which would be the case for undirected networks.

As a technical convention, whenever we consider a semi-directed network $\N$ on $\X$ and some $\Y \subseteq \X$, we implicitly assume that $|\Y| \geq 2$ such that $\N|_\Y$ is a valid semi-directed network. Similarly, whenever we consider some $\Y \subset \X$ with $x \in \X \setminus \Y$, we assume that $|\Y| \geq 2$ such that $|\X| \geq 3$.

\paragraph{Splits, quarnets, quarnet-splits and displayed quartets}

Given a semi-directed network $\N$ on $\X$ and a partition $A|B$ of $\X$ (with $A$ and $B$ both non-empty), we say that $A|B$ is a \emph{split} in $\N$ if there exists a cut-edge of $\N$ whose removal disconnects the leaves in $A$ from those in $B$. The split and the cut-edge are \emph{non-trivial} if the corresponding partition is non-trivial, that is, if $|A|, |B| \geq 2$. For example, $\{1, 2, 3, 4, 5, 10\} | \{6, 7, 8, 9\}$ is a non-trivial split in the semi-directed network $\N$ from \cref{fig:preliminaries}. We may sometimes omit the word `non-trivial' if it is clear from the context. A split $A|B$ is \emph{induced by the cut-edge $uv$} if $u$ (resp. $v$) is \emph{on the side of $A$} (resp. $B$), meaning that $u$ (resp. $v$) is in the component of $\N$ containing $A$ (resp. $B$) after removing $uv$. For splits with few leaves, we sometimes omit the set notation, e.g. $ab | cd$ instead of $\{a, b\} | \{c, d\}$. A very well-known result is that an undirected (possibly non-binary) phylogenetic tree is uniquely determined by its (non-trivial) splits \cite{buneman1971recovery}. Consequently, the tree-of-blobs of a semi-directed network is the unique undirected phylogenetic tree with the same set of splits as the network itself.

Given a semi-directed network $\N$ on $\X$ and four different leaves $a, b, c, d \in \X$, the \emph{(semi-directed) quarnet} of $\N$ on $\{ a, b, c, d \}$ is the subnetwork of $\N$ induced by $\{ a, b, c, d \}$. A semi-directed level-1 network can have six different types of quarnets, up to labeling the leaves: the \emph{quartet tree}, two \emph{single triangles} that share the same skeleton, two \emph{double triangles} that share the same skeleton, and the \emph{four-cycle} (which we distinguish from a 4-cycle in a network by writing out the number 4). \cref{fig:quarnets} shows these six quarnet types. Any quarnet $\N|_{\{a, b, c, d\}}$ has four trivial splits: one cutting off each leaf. Next to that, the quarnet either has no non-trivial split at all, or it has exactly one non-trivial split ($ab|cd$, $ac|bd$, or $ad|bc$). We say that $ab | cd$ is a \emph{quarnet-split} of $\N$ if $ab |cd$ is a split of the quarnet $\N|_{\{a, b, c, d\}}$. We call the set of quartet trees that are displayed by the quarnets of a semi-directed network the \emph{(displayed) quartets} of the semi-directed network. Note that there are $\Theta (n^4)$ quarnets and $\Theta (n^4)$ displayed quartets of a semi-directed network on $n$ leaves. An important observation we will often implicitly use is the following.
\begin{observation}\label{obs:restriction}
Let $\N$ be a semi-directed network on $\X$ with $\{a, b, c, d\} \subseteq \Y \subseteq \X$. Then, the quarnet of $\N$ on $\{a, b, c, d\}$ is equal to the quarnet of $\N|_\Y$ on $\{a, b, c, d\}$. That is, $\N|_{\{a, b, c, d\}} = (\N|_\Y)|_{\{a, b, c, d\}}$.
\end{observation}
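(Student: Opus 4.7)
The plan is to prove the more general associativity identity $\N|_\Z = (\N|_\Y)|_\Z$ for every $\Z \subseteq \Y \subseteq \X$ with $|\Z| \geq 2$, and then specialize to $\Z = \{a,b,c,d\}$. To avoid reasoning directly about the interplay between undirected and directed edges during the subnetwork operation, I would lift to the directed setting. By the result of Baños cited after \cref{def:restriction}, if $D$ is a directed phylogenetic network whose semi-directed version is $\N$, then $\N|_\Y$ is equal to the semi-directed version of $D|_\Y$. Applying this twice reduces the claim to proving $(D|_\Y)|_\Z = D|_\Z$ for directed networks, since then $(\N|_\Y)|_\Z$ is the semi-directed version of $(D|_\Y)|_\Z = D|_\Z$, which in turn is $\N|_\Z$.

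For the directed claim, I would argue at the level of up-down paths. First, I would show that every directed up-down path in $D$ between two leaves of $\Z$ survives in $D|_\Y$ (possibly with some internal degree-2 vertices suppressed), because $\Z \subseteq \Y$ and the construction of $D|_\Y$ starts from the union of up-down paths between $\Y$-leaves. Conversely, I would show that every up-down path in $D|_\Y$ between two $\Z$-leaves lifts, by reinserting suppressed vertices and 2-blob subdivisions, to an up-down path in $D$ with the same endpoints. Hence the two unions of up-down paths used to build $(D|_\Y)|_\Z$ from $D|_\Y$ and $D|_\Z$ from $D$ give the same underlying graph once the exhaustive suppression of 2-blobs, degree-2 vertices, and parallel edges is applied, because these operations are confluent and local.

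The main obstacle I expect is the careful bookkeeping around suppression: an edge of $D|_\Y$ may correspond to a concatenation of edges in $D$, and a 2-blob of $D|_\Y$ may arise from a larger structure in $D$. I would handle this by first establishing a correspondence between the vertices and edges of $D|_\Y$ and certain contracted subgraphs of $D$, then verifying that the additional suppressions performed when restricting from $D|_\Y$ to $(D|_\Y)|_\Z$ correspond exactly to the extra suppressions one would perform when going directly from $D$ to $D|_\Z$. Once this is done, the statement of the observation follows immediately by setting $\Z = \{a, b, c, d\}$ and invoking the definition of a quarnet.
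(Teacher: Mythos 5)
The paper states this as an observation and gives no proof, so there is no ``paper's own proof'' to compare against; the burden is simply on whether your sketch is sound and whether you have identified the real difficulty. Your reduction is reasonable: lifting to the directed setting via Ba\~nos's result is a clean way to avoid case analysis about how direction information is created or destroyed under suppression of degree-2 vertices (in a fully directed network every suppression merges one in-edge with one out-edge into a single directed edge, which is unambiguous), and once you are in the directed world, $(D|_\Y)|_\Z = D|_\Z$ is indeed the heart of the matter and implies the observation via two applications of Ba\~nos's commutation. You have also correctly located the real technical obstacle: the ``confluence and locality'' of the suppression/identification operations. This is not a triviality, since the 2-blobs that get suppressed can be nontrivial subgraphs, and an edge of $D|_\Y$ can correspond to a whole path or contracted subgraph of $D$. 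To close the argument you would need to make explicit the correspondence you gesture at --- e.g.\ associating to each edge of $D|_\Y$ the maximal contracted segment of $D$ it represents, showing that an up-down path in $D|_\Y$ between $\Z$-leaves lifts along these segments to an up-down path in $D$ (and conversely projects), and then checking that the extra suppressions performed in the step $D|_\Y \to (D|_\Y)|_\Z$ are exactly those that would be performed in $D \to D|_\Z$ but were not already forced by $D \to D|_\Y$. The well-definedness of the subnetwork operation (established in Huber et al., which the paper cites) gives you the confluence you need for the exhaustive suppression, so this is bookkeeping rather than a conceptual gap. In short, the strategy is correct and the plan is reasonable; what remains is the verification you yourself flag as the remaining work.
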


\begin{figure}[htb]
\centering
\tikzstyle{main_label}=[none,scale=1.0]

\begin{tikzpicture}[scale=0.4]
	\begin{pgfonlayer}{nodelayer}
		\node [style={internal_node}] (0) at (2, 5) {};
		\node [style={internal_node}] (1) at (2, 3) {};
		\node [style={leaf_node}] (2) at (0.5, 6.5) {};
		\node [style={leaf_node}] (3) at (3.5, 6.5) {};
		\node [style={leaf_node}] (4) at (0.5, 1.5) {};
		\node [style={leaf_node}] (5) at (3.5, 1.5) {};
		\node [style={internal_node}] (6) at (7.75, 5.5) {};
		\node [style={internal_node}] (7) at (8.5, 4.5) {};
		\node [style={internal_node}] (8) at (9.25, 5.5) {};
		\node [style={internal_node}] (9) at (8.5, 3) {};
		\node [style={leaf_node}] (10) at (7, 6.5) {};
		\node [style={leaf_node}] (11) at (10, 6.5) {};
		\node [style={leaf_node}] (12) at (7, 1.5) {};
		\node [style={leaf_node}] (13) at (10, 1.5) {};
		\node [style={internal_node}] (14) at (11.75, 5.5) {};
		\node [style={internal_node}] (15) at (12.5, 4.5) {};
		\node [style={internal_node}] (16) at (13.25, 5.5) {};
		\node [style={internal_node}] (17) at (12.5, 3) {};
		\node [style={leaf_node}] (18) at (11, 6.5) {};
		\node [style={leaf_node}] (19) at (14, 6.5) {};
		\node [style={leaf_node}] (20) at (11, 1.5) {};
		\node [style={leaf_node}] (21) at (14, 1.5) {};
		\node [style={internal_node}] (22) at (18.25, 5.5) {};
		\node [style={internal_node}] (23) at (19, 4.5) {};
		\node [style={internal_node}] (24) at (19.75, 5.5) {};
		\node [style={leaf_node}] (25) at (17.5, 6.5) {};
		\node [style={leaf_node}] (26) at (20.5, 6.5) {};
		\node [style={internal_node}] (27) at (19, 3.5) {};
		\node [style={internal_node}] (28) at (18.25, 2.5) {};
		\node [style={internal_node}] (29) at (19.75, 2.5) {};
		\node [style={leaf_node}] (30) at (17.5, 1.5) {};
		\node [style={leaf_node}] (31) at (20.5, 1.5) {};
		\node [style={internal_node}] (32) at (22.25, 5.5) {};
		\node [style={internal_node}] (33) at (23, 4.5) {};
		\node [style={internal_node}] (34) at (23.75, 5.5) {};
		\node [style={leaf_node}] (35) at (21.5, 6.5) {};
		\node [style={leaf_node}] (36) at (24.5, 6.5) {};
		\node [style={internal_node}] (37) at (23, 3.5) {};
		\node [style={internal_node}] (38) at (22.25, 2.5) {};
		\node [style={internal_node}] (39) at (23.75, 2.5) {};
		\node [style={leaf_node}] (40) at (21.5, 1.5) {};
		\node [style={leaf_node}] (41) at (24.5, 1.5) {};
		\node [style={leaf_node}] (50) at (28, 6.25) {};
		\node [style={leaf_node}] (51) at (32, 6.25) {};
		\node [style={leaf_node}] (52) at (28, 1.75) {};
		\node [style={leaf_node}] (53) at (32, 1.75) {};
		\node [style={internal_node}] (54) at (29, 5) {};
		\node [style={internal_node}] (55) at (29, 3) {};
		\node [style={internal_node}] (56) at (31, 3) {};
		\node [style={internal_node}] (57) at (31, 5) {};
		\node [style={main_label}] (58) at (2, 0.25) {Quartet tree};
		\node [style={main_label}] (59) at (10.5, 0.25) {Single triangle};
		\node [style={main_label}] (60) at (21, 0.25) {Double triangle};
		\node [style={main_label}] (61) at (30, 0.25) {Four-cycle};
	\end{pgfonlayer}
	\begin{pgfonlayer}{edgelayer}
		\draw (0) to (1);
		\draw (2) to (0);
		\draw (0) to (3);
		\draw (1) to (4);
		\draw (1) to (5);
		\draw (12) to (9);
		\draw (9) to (13);
		\draw (9) to (7);
		\draw (6) to (8);
		\draw (8) to (11);
		\draw (10) to (6);
		\draw (20) to (17);
		\draw (17) to (21);
		\draw (17) to (15);
		\draw (16) to (19);
		\draw (18) to (14);
		\draw (15) to (16);
		\draw [style={ret_arc}] (16) to (14);
		\draw [style={ret_arc}] (15) to (14);
		\draw [style={ret_arc}] (6) to (7);
		\draw [style={ret_arc}] (8) to (7);
		\draw (24) to (26);
		\draw (25) to (22);
		\draw [style=edge] (23) to (27);
		\draw [style=edge] (29) to (31);
		\draw [style=edge] (28) to (30);
		\draw (34) to (36);
		\draw (35) to (32);
		\draw [style=edge] (33) to (37);
		\draw [style=edge] (39) to (41);
		\draw [style=edge] (38) to (40);
		\draw [style={ret_arc}] (23) to (22);
		\draw [style={ret_arc}] (24) to (22);
		\draw [style={ret_arc}] (27) to (28);
		\draw [style={ret_arc}] (29) to (28);
		\draw [style={ret_arc}] (32) to (33);
		\draw [style={ret_arc}] (34) to (33);
		\draw [style={ret_arc}] (37) to (38);
		\draw [style={ret_arc}] (39) to (38);
		\draw [style=edge] (37) to (39);
		\draw [style=edge] (34) to (32);
		\draw [style=edge] (24) to (23);
		\draw [style=edge] (27) to (29);
		\draw [style=edge] (56) to (53);
		\draw [style=edge] (55) to (52);
		\draw [style=edge] (54) to (50);
		\draw [style=edge] (57) to (51);
		\draw [style={ret_arc}, bend right] (54) to (55);
		\draw [style={ret_arc}, bend left] (56) to (55);
		\draw [style=edge, bend right] (56) to (57);
		\draw [style=edge, bend left=330] (57) to (54);
	\end{pgfonlayer}
\end{tikzpicture}
\caption{The six possible quarnets of a semi-directed level-1 network, up to labeling the leaves (indicated by the filled black vertices).}
\label{fig:quarnets}
\end{figure}

The algorithms in this paper revolve around quarnet-splits and quarnets of semi-directed networks. We emphasize that from an informational perspective, having access to the displayed quartets of a semi-directed network lies somewhere in between these two. In particular, $ab|cd$ is a quarnet-split of a semi-directed network $\N$ if and only if $\N$ has exactly one displayed quartet on leaf set $\{a,b,c,d\}$ and this quartet has the split $ab|cd$ \cite[Lem.\,5.1]{rhodes2024identifying}. Therefore, the quarnet-splits of a semi-directed network can be deduced from its displayed quartets. Although the quarnet-splits might contain less information, they are possibly easier to infer from data \cite{martin2023algebraic}. Since there are at most three times as many displayed quartets compared to quarnet-splits, our asymptotic complexity results of the two algorithms that rely on quarnet-splits (\cref{alg:blobtree,alg:canonical_network_recursive}) also hold when using displayed quartets instead. Lastly, whenever we say that an algorithm `uses' $m$ quarnet-splits of $\N$, this means the algorithm examines $m$ quarnets to determine if they have a non-trivial split, and, if so, identifies the specific split. Thus, the count $m$ accounts for all quarnets analyzed for the presence of a split, even those turning out not to have a non-trivial split. We use this terminology to emphasize that only the (possible absence of a) non-trivial split of a quarnet is used, and not the complete topology of the quarnet.

\section{Determining splits of semi-directed networks with quarnets}\label{sec:splits}

This section is focused on a result that connects the splits in a semi-directed network to the quarnet-splits of that network, which will be useful in the algorithms in later sections. Recall that by \cref{obs:restriction} the quarnet(-split) of a subnetwork is equal to the quarnet(-split) of the whole network. Before proving the main result, we will first present \cref{lem:splits}, which is a slightly stronger version of a result by Huber et al. \cite{huber2024splits}. They show that $A|B$ is a split of a semi-directed phylogenetic network if and only if $a_1 a_2 | b_1 b_2$ is a quarnet-split for all $a_1, a_2 \in A$ and $b_1, b_2 \in B$, whereas we allow the leaves $a_1$ and $b_1$ to be fixed. The proof is a mere adjusted version of the proof by Huber et al. \cite{huber2024splits} and is thus deferred to the appendix.

\begin{restatable}{lemma}{splitlemma}\label{lem:splits}
Given a semi-directed network $\N$ on $\X$, a non-trivial partition $A|B$ of $\X$ and any $a_1 \in A$, $b_1 \in B$, the following are equivalent:
\begin{enumerate}[label={(\roman*)}, noitemsep,topsep=0pt]
\item $A|B$ is a split in $\N$;
\item $a_1 a_2 | b_1 b_2$ is a quarnet-split of $\N$ for all $a_2 \in A \setminus \{a_1 \}$, $b_2 \in B \setminus \{ b_1 \}$.
\end{enumerate}
\end{restatable}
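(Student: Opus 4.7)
The direction (i)$\Rightarrow$(ii) is the easy one. Suppose $A|B$ is a split of $\N$ induced by a cut-edge $e$. Then for any $a_2 \in A\setminus\{a_1\}$ and $b_2 \in B\setminus\{b_1\}$, every up-down path in $\N$ joining a leaf of $\{a_1,a_2\}$ to a leaf of $\{b_1,b_2\}$ must cross $e$, since any such path runs from $A$ to $B$. Consequently, when the quarnet $\N|_{\{a_1,a_2,b_1,b_2\}}$ is formed by taking the union of all such up-down paths and then exhaustively suppressing $2$-blobs and degree-$2$ vertices, the image of $e$ persists as a cut-edge of the quarnet and realizes the split $a_1 a_2 | b_1 b_2$.

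For the direction (ii)$\Rightarrow$(i), the excerpt signals that the proof is a minor adjustment of Huber et al.'s proof of the analogous statement in which $a_1$ and $b_1$ also vary. I therefore plan to run their contrapositive argument while verifying that it still goes through with $a_1$ and $b_1$ fixed in advance. Huber et al.'s strategy is to assume $A|B$ is not a split of $\N$ and then exhibit two ``internally disjoint'' up-down paths joining $A$ to $B$; the four endpoints of these two paths determine a quarnet whose hypothetical split $a_1' a_2' | b_1' b_2'$ is ruled out, because no single cut-edge of the quarnet can separate the two endpoint pairs once the paths are internally disjoint.

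The main obstacle, and the only genuine modification of Huber et al.'s argument, is arranging for the prescribed leaves $a_1$ and $b_1$ to appear among the four endpoints produced by the construction. My plan is to extend the two witness paths into the ``$A$-side'' and ``$B$-side'' of $\N$ respectively, so that their $A$- and $B$-endpoints become $a_1$ and $b_1$. Concretely, given a failing pair of paths with $A$-endpoint $a_1'$ and $B$-endpoint $b_1'$, I prepend an up-down path from $a_1$ to $a_1'$ that stays on the $A$-side, and analogously replace $b_1'$ by $b_1$ in the other path; both such up-down paths exist by connectivity within each side. The delicate point, which I would handle via a short case analysis around any reticulation vertex encountered during the concatenation, is to confirm that the concatenated path is still a legal up-down path, so that the resulting four endpoints $\{a_1,a_2,b_1,b_2\}$ still yield a quarnet whose split $a_1 a_2 | b_1 b_2$ is absent, contradicting hypothesis~(ii).
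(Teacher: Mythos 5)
Your direction (i)$\Rightarrow$(ii) is fine and matches the paper, which treats it as an immediate consequence of the forward direction of Huber et al.'s original result.

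For (ii)$\Rightarrow$(i), however, your plan diverges from the paper's proof and contains a real gap. The paper (and, as it credits, Huber et al.) proves this direction by induction on the number of non-trivial splits of $\N$. In the base case ($\N$ simple), Huber et al. show that a non-split partition $A|B$ admits \emph{some} pair $\bar a_1 \in A$, $\bar b_1 \in B$ for which \emph{every} choice $\bar a_1 \bar a_2 | \bar b_1 \bar b_2$ fails to be a quarnet-split; the paper then observes that for your prescribed $a_1, b_1$ you may simply take $a_2 = \bar a_1$ (if $a_1 \neq \bar a_1$) and $b_2 = \bar b_1$ (if $b_1 \neq \bar b_1$) to obtain a failing quarnet-split on $\{a_1, b_1\}$. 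In the inductive step, the hypothesis (ii) is used to force $B \subseteq D$ (or a symmetric inclusion) for any existing split $C|D$; the $C$-side is then collapsed to a single leaf $c^*$, producing a smaller network $\N'$, and the paper carefully re-establishes hypothesis (ii) in $\N'$ via a two-case analysis on whether $a_1 \in C$ (in which case $c^*$ inherits the role of $a_1$) or $a_1 \notin C$ (in which case $a_1$ keeps its role and $c^*$ becomes one of the $a_2$'s). This bookkeeping is precisely the ``adjustment'' the paper refers to, and it does not involve rerouting or extending witness paths.

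Your path-extension plan has concrete failure modes that the short ``case analysis around reticulation vertices'' you gesture at will not repair. First, you invoke an up-down path from $a_1$ to $a_1'$ ``that stays on the $A$-side,'' but when $A|B$ is not a split there is no $A$-side of $\N$ to stay on, so the path may freely wander through leaves of $B$ and may meet the second witness path, destroying the internal disjointness on which the whole contradiction rests. Second, and more fundamentally, the concatenation of two up-down paths is generally not an up-down path: the join can create a v-structure (entering a reticulation along one reticulation edge and leaving along the other), and the characterization of Xu and An\'e cited in the paper shows exactly that such walks are excluded. You cannot, in general, salvage a legal up-down path from $a_1$ to $b_1$ by local surgery at the junction. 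Without an up-down path there is no guarantee that $a_1$ and $a_2'$ are on the same side of every potential cut-edge in the quarnet $\N|_{\{a_1, a_2', b_1, b_2'\}}$, so the conclusion that $a_1 a_2' | b_1 b_2'$ fails to be a quarnet-split does not follow. The paper's inductive reduction sidesteps all of these issues by never manipulating paths directly; you would need to adopt that structure (or find some entirely different argument) to close the gap.
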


Since an undirected phylogenetic tree is uniquely determined by its splits, this lemma also implies that the tree-of-blobs of a semi-directed network $\N$ can be distinguished by the quarnet-splits of $\N$. In other words, two networks with the same set of quarnet-splits must have the same tree-of-blobs. Consequently, we immediately get a slow $\bigO(2^n \cdot n^2)$ algorithm for free that reconstructs the tree-of-blobs from the quarnet-splits of $\N$: using \cref{lem:splits}, we check for all $\bigO (2^n)$ partitions of $\X$ whether it is a split of $\N$ (and thus of $\T(\N)$) in $\bigO(n^2)$ time. From these splits we can then build the tree-of-blobs.

Towards the next theorem we introduce a few notions, some of which are generalized from \cite{gross2021distinguishing}. Let $\B$ be an internal blob of a semi-directed network $\N$ on $\X$ and let $v$ be the internal vertex of $\T (\N)$ that represents it. Let $u_1 w_1, \ldots, u_s w_s$ be the cut-edges of $\N$ incident to $\B$ (with the $u_i$ in $\B$). The \emph{partition induced by $\B$} (or by $v$) is the partition $X_1| \ldots |X_s $ of $\X$ such that $x \in X_i$ if and only if $x$ is separated from $\B$ by $u_i w_i$. Given a reticulation vertex $r$ of $\B$, a set $X_i$ is \emph{below the reticulation $r$} if there exists a partially directed path from $r$ to $w_i$ and we then also say that $x$ is \emph{below the reticulation $r$} for any $x \in X_i$. A set $D \subseteq \X$ is a \emph{distinguishing set of $\B$} if it contains at least one leaf below each reticulation of $\B$ and it contains at least one leaf of three different sets $X_i$. As an example, consider the blob $\B$ of the semi-directed network $\N|_\Y$ from \cref{fig:split_theorem} (excluding the grey edge and the grey leaf $x$). The partition induced by this blob is $\{a_1, a_2, a_3, a_4\}|\{b_1\}|\{b_2\}|\{b_3, b_4\}|\{b_5\}|\{b_6\}$, while $b_3$ and $b_4$ are below one reticulation of $\B$ and $b_6$ is below the other reticulation. The sets $\{a_1, b_3, b_6\}$ and $\{b_1, b_4, b_6\}$ are examples of distinguishing sets of this blob.

\begin{figure}[htb]
\centering
\tikzstyle{special_edge3}=[-]
\tikzstyle{special_edge4}=[-, draw=red, thick]
\tikzstyle{special_edge5}=[draw=gray]
\tikzstyle{special_edge6}=[densely dashed, -{Latex[scale=.9]}, draw=gray]
\tikzstyle{special_edge7}=[densely dashed, draw=gray]
\tikzstyle{special_edge8}=[-, draw=red, ultra thick]
\tikzstyle{special_node2}=[circle, draw=gray, fill=gray, scale=0.275]
\tikzstyle{special_node3}=[none,scale=.1, fill=black]
\tikzstyle{semi-active}=[decoration={markings, mark=at position .5 with {\arrow{to}}},postaction={decorate}, ultra thick]

%		\node [style={medium_label}] (420) at (17.25, -4.25) {$\N|_{\Y \textcolor{gray}{\cup \{x\}}}$};

\begin{tikzpicture}[scale=0.5]
	\begin{pgfonlayer}{nodelayer}
		\node [style={leaf_node}, label={left:$a_2$}] (419) at (10, -8) {};
		\node [style={medium_label}] (420) at (17.25, -4.25) {$\N|_{\Y \textcolor{gray}{\cup \{x\}}}$};
		\node [style=none] (421) at (17, -6.25) {};
		\node [style=none] (422) at (17, -10.25) {};
		\node [style={small_label}] (423) at (17.5, -9.5) {$B$};
		\node [style={small_label}] (424) at (16, -9.5) {$A \textcolor{gray}{\cup \{x\}}$};
		\node [style={small_label}] (425) at (22, -7.5) {$\B$};
		\node [style={internal_node}] (426) at (14, -8) {};
		\node [style={internal_node}] (427) at (11, -8) {};
		\node [style={internal_node}] (428) at (11.75, -6.75) {};
		\node [style={internal_node}] (429) at (13.25, -9.25) {};
		\node [style={internal_node}] (430) at (11.75, -9.25) {};
		\node [style={special_node3}] (431) at (13.25, -6.75) {};
		\node [style={leaf_node}, label={above:$a_1$}] (432) at (11, -5.75) {};
		\node [style={leaf_node}, label={below:$a_3$}] (433) at (11, -10.25) {};
		\node [style={leaf_node}, label={below:$a_4$}] (434) at (14, -10.25) {};
		\node [style={leaf_node}, label={above:$b_2$}] (435) at (23.75, -5.75) {};
		\node [style={leaf_node}, label={below:$b_5$}] (436) at (21.5, -11) {};
		\node [style={leaf_node}, label={below:$b_6$}] (437) at (19.25, -10.25) {};
		\node [style={leaf_node}, label={right:$b_3$}] (438) at (27, -6.5) {};
		\node [style={leaf_node}, label={right:$b_4$}] (439) at (27, -9.5) {};
		\node [style={internal_node}] (440) at (25, -8) {};
		\node [style={internal_node}] (441) at (23.25, -8) {};
		\node [style={internal_node}] (442) at (19.75, -8) {};
		\node [style={internal_node}] (443) at (20.25, -6.75) {};
		\node [style={internal_node}] (444) at (22.75, -9.25) {};
		\node [style={internal_node}] (445) at (20.25, -9.25) {};
		\node [style={internal_node}] (446) at (22.75, -6.75) {};
		\node [style={internal_node}] (447) at (26.25, -7.25) {};
		\node [style={internal_node}] (448) at (26.25, -8.75) {};
		\node [style={special_node2}, label={above:$\textcolor{gray}{x}$}] (449) at (14, -5.75) {};
		\node [style={leaf_node}, label={below:$b_1$}] (450) at (31.25, -11.25) {};
		\node [style={leaf_node}, label={below:$b_3$}] (451) at (34.25, -11.25) {};
		\node [style={internal_node}] (452) at (32.75, -9.25) {};
		\node [style={internal_node}] (453) at (32, -10.25) {};
		\node [style={internal_node}] (454) at (33.5, -10.25) {};
		\node [style={internal_node}] (455) at (32.75, -7.5) {};
		\node [style={leaf_node}, label={above:$a_1$}] (456) at (31.25, -6) {};
		\node [style={medium_label}] (457) at (32.5, -4.25) {$\N|_{\{a_1, x, b_1, b_3\}}$};
		\node [style={leaf_node}, label={above:$x$}] (458) at (34.25, -6) {};
		\node [style={leaf_node}, label={below:$b_1$}] (459) at (36.75, -11.25) {};
		\node [style={leaf_node}, label={below:$b_6$}] (460) at (39.75, -11.25) {};
		\node [style={internal_node}] (461) at (38.25, -9.25) {};
		\node [style={internal_node}] (462) at (37.5, -10.25) {};
		\node [style={internal_node}] (463) at (39, -10.25) {};
		\node [style={internal_node}] (464) at (38.25, -7.5) {};
		\node [style={leaf_node}, label={above:$a_1$}] (465) at (36.75, -6) {};
		\node [style={medium_label}] (466) at (38, -4.25) {$\N|_{\{a_1, x, b_1, b_6\}}$};
		\node [style={leaf_node}, label={above:$x$}] (467) at (39.75, -6) {};
		\node [style={internal_node}] (468) at (21.5, -6.25) {};
		\node [style={internal_node}] (469) at (21.5, -9.75) {};
		\node [style={leaf_node}, label={above:$b_1$}] (470) at (21.5, -5) {};
		\node [style={small_label}] (471) at (14.25, -8.25) {$u$};
		\node [style={small_label}] (472) at (19.5, -8.25) {$v$};
	\end{pgfonlayer}
	\begin{pgfonlayer}{edgelayer}
		\draw [style={special_edge7}] (421.center) to (422.center);
		\draw [style=edge, bend right=15] (428) to (427);
		\draw [style=edge, bend left=15] (431) to (426);
		\draw [style=edge, bend left=15] (426) to (429);
		\draw [style={ret_arc}, bend right=15] (427) to (430);
		\draw [style={ret_arc}, bend left=15] (429) to (430);
		\draw [style=edge] (428) to (432);
		\draw [style=edge] (427) to (419);
		\draw [style=edge] (430) to (433);
		\draw [style=edge] (429) to (434);
		\draw [style=edge, bend right=15] (443) to (442);
		\draw [style={ret_arc}, bend right=15] (442) to (445);
		\draw [style={ret_arc}, bend left=15] (446) to (441);
		\draw [style={ret_arc}, bend right=15] (444) to (441);
		\draw [style=edge] (442) to (426);
		\draw [style={ret_arc}] (440) to (448);
		\draw [style={ret_arc}] (447) to (448);
		\draw [style=edge] (441) to (440);
		\draw [style=edge] (440) to (447);
		\draw [style=edge] (447) to (438);
		\draw [style=edge] (448) to (439);
		\draw [style=edge] (445) to (437);
		\draw [style=edge] (446) to (435);
		\draw [style={special_edge5}] (449) to (431);
		\draw [style={ret_arc}] (452) to (454);
		\draw [style={ret_arc}] (453) to (454);
		\draw [style=edge] (452) to (453);
		\draw [style=edge] (453) to (450);
		\draw [style=edge] (454) to (451);
		\draw [style=edge] (452) to (455);
		\draw [style=edge] (455) to (456);
		\draw [style=edge] (455) to (458);
		\draw [style={ret_arc}] (461) to (463);
		\draw [style={ret_arc}] (462) to (463);
		\draw [style=edge] (461) to (462);
		\draw [style=edge] (462) to (459);
		\draw [style=edge] (463) to (460);
		\draw [style=edge] (461) to (464);
		\draw [style=edge] (464) to (465);
		\draw [style=edge] (464) to (467);
		\draw [style=edge, bend right=15] (469) to (444);
		\draw [style=edge, bend left=15] (468) to (446);
		\draw [style=edge, bend left=345] (468) to (443);
		\draw [style={ret_arc}, bend left=15] (469) to (445);
		\draw [style=edge] (469) to (436);
		\draw [style=edge] (470) to (468);
		\draw [style=edge] (443) to (444);
		\draw [style=edge, bend left=15] (428) to (431);
	\end{pgfonlayer}
\end{tikzpicture}
\caption{\emph{Left:} The subnetwork of a semi-directed level-2 network $\N$ induced by the leaves $\Y = \{a_1, \ldots, a_4, b_1, \ldots , b_6\}$ (excluding the grey edge and the grey leaf $x$) and induced by the leaves $\Y \cup \{x\}$ (including the grey edge and the grey leaf $x$). The cut-edge $uv$ of $\N|_\Y$ induces the split $A | B$ (with $A = \{a_1, \ldots, a_4 \}$ and $B = \{b_1, \ldots, b_6 \}$) and $v$ is part of the blob $\B$. \emph{Right:} Two quarnets of $\N|_{\Y \cup \{x\}}$ with splits $a_1 x | b_1 b_3$ and $a_1 x | b_1 b_6$, respectively. By \cref{thm:determine_split} and since $\{a_1, b_1\} \cup B'$ with $B' = \{b_3, b_6\}$ is a distinguishing set of $\B$ in $\N|_\Y$, the splits of these two quarnets are enough to determine that $A \cup \{x\} |B$ is a split of $\N|_{\Y \cup \{x\}}$, without knowing anything else about $x$.}
\label{fig:split_theorem}
\end{figure}

We are now ready to prove the theorem lying at the heart of the algorithms in this paper. In particular, it suggests which quarnet-splits are important when we want to add a new leaf to the tree-of-blobs of a subnetwork (see again \cref{fig:split_theorem} for an illustration of the theorem). Note that we can always set $B'$ equal to $B \setminus \{b_1\}$ in this theorem if we have no further information on the reticulations of the network. Furthermore, in a semi-directed level-$\ell$ network, there is always a distinguishing set of size at most $\max\{ 3, \ell\}$ for any blob. This will be crucial for our level-1 reconstruction algorithm in \cref{sec:level-1}. Recall that in the next theorem, by definition, the vertex $v$ is on the side $B$ of the split $A|B$.

\begin{theorem}\label{thm:determine_split}
Given a semi-directed network $\N$ on $\X$, let $\Y \subset \X$ and $x \in \X \setminus \Y$. Let $A|B$ be a split in $ \N|_{\Y}$ induced by the cut-edge $uv$ such that $|B| \geq 2$ and $v$ is in a blob $\B$ of $\N|_{\Y}$. Let $a_1 \in A, b_1 \in B$ be arbitrary and let $B' \subseteq B \setminus \{b_1\}$ be such that $\{a_1, b_1\} \cup B'$ is a distinguishing set of $\B$, then the following are equivalent
\begin{enumerate}[label={(\roman*)}, noitemsep,topsep=0pt]
\item $A \cup \{x\} |B$ is a split in $\N|_{\Y \cup \{x\}}$;
\item $a_1 x | b_1 b_2$ is a quarnet-split of $\N|_{\Y \cup \{x\}}$ for all $b_2 \in B'$.
\end{enumerate}
\end{theorem}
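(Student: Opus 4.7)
Direction $(i) \Rightarrow (ii)$ is immediate. Assuming $A \cup \{x\} | B$ is a split of $\N|_{\Y \cup \{x\}}$, \cref{lem:splits} applied with the reference pair $a_1 \in A \cup \{x\}$ and $b_1 \in B$ yields $a_1 x | b_1 b_2$ as a quarnet-split for every $b_2 \in B \setminus \{b_1\}$; restricting $b_2$ to $B' \subseteq B \setminus \{b_1\}$ gives $(ii)$.

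For $(ii) \Rightarrow (i)$, my plan is to apply \cref{lem:splits} in reverse: to conclude that $A \cup \{x\} | B$ is a split of $\N|_{\Y \cup \{x\}}$, I fix $a_1$ and $b_1$ as reference leaves and verify that $a_1 a_2 | b_1 b_2$ is a quarnet-split of $\N|_{\Y \cup \{x\}}$ for every $a_2 \in (A \cup \{x\}) \setminus \{a_1\}$ and $b_2 \in B \setminus \{b_1\}$. When $a_2 \in A \setminus \{a_1\}$, \cref{obs:restriction} identifies the quarnet with the one in $\N|_\Y$, and since $A | B$ is already a split of $\N|_\Y$, \cref{lem:splits} supplies the required quarnet-split. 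When $a_2 = x$ and $b_2 \in B'$, the hypothesis $(ii)$ applies directly. This reduces the whole proof to the remaining case $a_2 = x$ with $b_2 \in B \setminus (\{b_1\} \cup B')$, which is where the distinguishing-set assumption must pay off.

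For that remaining case I would argue by contradiction: suppose $a_1 x | b_1 b_2$ is not a quarnet-split of $\N|_{\Y \cup \{x\}}$ for some such $b_2$, and produce some $b_2' \in B'$ for which $a_1 x | b_1 b_2'$ also fails, contradicting $(ii)$. The geometric picture is that the failure means $x$ enters $\N|_\Y$ either outside the blob $\B$ inside some subtree $X_i \subseteq B$, or interacts ``inside'' $\B$ with one of its reticulations. In the first sub-case, the requirement that $\{a_1, b_1\} \cup B'$ meets leaves of at least three distinct sides $X_j$ produces a $b_2' \in B'$ that either shares $X_i$ with $x$ (so the quarnet on $\{a_1, x, b_1, b_2'\}$ carries the split $xb_2' | a_1 b_1$ instead) or lies in a side distinct from both $X_i$ and the side containing $b_1$ (so the quarnet cuts $x$ off from $a_1, b_1, b_2'$ via the cut-edge separating $X_i \cup \{x\}$). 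In the second sub-case, $x$'s attachment uses a reticulation $r$ of $\B$; the distinguishing condition hands us $b_2' \in B'$ below $r$, and the quarnet on $\{a_1, x, b_1, b_2'\}$ inherits the same failing split pattern as the one on $\{a_1, x, b_1, b_2\}$, since both $b_2$ and $b_2'$ are pulled toward $x$ through $r$.

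The main obstacle is making the informal phrase ``where $x$ enters $\N|_\Y$'' precise, because \cref{def:restriction} suppresses degree-2 vertices and identifies parallel edges in ways that can deform the blob $\B$ unpredictably. I plan to sidestep this by lifting to an underlying directed network via the cited fact from \cite{banos2019identifying} that restriction commutes with passing to the semi-directed network, so that the entry points of $x$'s up-down paths and the descendant sets of the reticulations of $\B$ become concrete objects in a directed graph. A cleaner alternative is to localize everything inside the 5-leaf auxiliary subnetwork $\N|_{\{a_1, x, b_1, b_2, b_2'\}}$ for each candidate $b_2' \in B'$ selected via the distinguishing-set property; a finite enumeration of the possible topologies of this 5-leaf subnetwork, together with the 4-leaf quarnet-split information supplied by \cref{obs:restriction}, should turn the structural argument into a manageable case check and close the proof.
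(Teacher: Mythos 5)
Your reduction to \cref{lem:splits} and the triage of cases ($a_2 \in A$, $a_2 = x$ with $b_2 \in B'$, and the residual $a_2 = x$ with $b_2 \in B \setminus (\{b_1\} \cup B')$) matches the paper's strategy exactly. But the residual case is where essentially all the work lives, and your treatment of it is a sketch that you yourself flag as unresolved. That flag is warranted: the argument as written does not close.

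Concretely, the paper handles the residual case by first proving a nontrivial claim — that there is a single leaf $d_* \in D \cup \{x\}$ lying below \emph{every} reticulation of the blob that $b_2$ sits under (this uses the fact that two incomparable reticulations over $b_2$ would force a third below both) — and then invokes the characterization of up-down paths as paths without v-structures (Xu and An\'e) to show that a hypothetical ``bypass'' created by adding $b_2$ to $\N|_{\{a_1,b_1,b',x\}}$ forces $b_2$ below a reticulation that $d_*$ is also below, a contradiction. Your sketch gestures at ``producing a $b_2' \in B'$ that inherits the same failing split pattern'' but does not identify what structural property of $b_2'$ (namely, lying below the relevant reticulations) makes that inheritance go through, nor why the distinguishing-set condition guarantees such a $b_2'$ exists; that guarantee is exactly the content of the paper's claim. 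Moreover, one of your two proposed repair routes — ``a finite enumeration of the possible topologies of this 5-leaf subnetwork'' — is not available here: the theorem is stated for general semi-directed networks, and a binary 5-leaf semi-directed network with no 2-blobs can have arbitrarily many reticulations, so there is no finite list of topologies to enumerate. The other route (lifting to a directed network) is plausible but entirely uncarried-out, and you would still need the ``$d_*$ below all relevant reticulations'' observation or an equivalent in the lifted picture. As it stands, the proposal correctly frames the proof but leaves the essential step open.
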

\begin{proof}
That (i) implies (ii) follows readily from \cref{lem:splits} and \cref{obs:restriction}. For the other direction, let $A| B_1| \ldots | B_s$ be the partition induced by $\B$ (with $a_1 \in A$ and $b_1 \in B_1$). Since $\N$ has no 2-blobs, we have that $s\geq 2$ and thus a distinguishing set of $\B$ exists. Let $B' = \{b_2, \ldots, b_t\}$ be such that $D = \{a_1, b_1\} \cup B'$ contains as few leaves as possible, while still being a distinguishing set of $\B$. This implies that every $b_i \in B'$ is in a different $B_j$. Without loss of generality, we can then assume that $b_i \in B_i$ for all $1 \leq i \leq t \leq s$. Now suppose that $a_1 x | b_1 b_i$ is a split in $\N|_{\{a_1, b_1, b_i, x\}}$ for all $b_i \in B'$. We will prove that $a_1 x | b_1 b_*$ is also a quarnet-split of $\N$ for all $b_* \in B \setminus \{b_1, \ldots , b_t \}$. To this end, let $b_* \in B \setminus \{b_1, \ldots , b_t \}$ be arbitrary. We denote by $\B'$ the blob of $\N|_{\Y \cup \{x\}}$ corresponding to the (possibly smaller) blob $\B$ in $\N|_\Y$ and let $R_*$ be the (possibly empty) set of reticulations in the blob $\B'$ such that $b_*$ is below them in $\N|_{\Y \cup \{x\}}$. We first prove the following claim and then continue the proof of the theorem by considering four cases.

\emph{Claim:} There is some $d_* \in D \cup \{x \}$ with the property that $d_*$ is below all the reticulations of $R_*$ in $\N|_{\Y \cup \{x\}}$. \emph{Proof of claim:} If $|R_*|=0$, the claim is trivial, so we assume that $|R_*| > 0$. As with leaves, we say that a reticulation $r_2$ is \emph{below} another reticulation $r_1$ if there exists a partially directed path from $r_1$ to $r_2$. It suffices to show that there is some $r\in R_*$ below all reticulations in $R_* \setminus \{r\}$, since we can then simply let $d_* \in D \cup \{x\}$ be the leaf below $r$ (which exists because $D$ is a distinguishing set of $\B$, so $D \cup \{x\}$ is a distinguishing set of $\B'$). Towards a contradiction assume that there is no reticulation $r$ below all other reticulations in $R_* \setminus \{r\}$. Then, there are at least two reticulations $r_1$ and $r_2$ with no reticulation from $R_*$ below them. But for $b_*$ to be below both $r_1$ and $r_2$, a third reticulation has to be below both $r_1$ and $r_2$: a contradiction. $\blacksquare$

%\BLUE{\emph{Claim:} There is some $d_* \in D$ with the property that $d_*$ is below all the reticulations of $R_*$ in $\N|_\Y$. \emph{Proof of claim:} If $|R_*|=0$, the claim is trivial, so we assume that $|R_*| > 0$. We say that a reticulation $r_2$ is \emph{below} another reticulation $r_1$ if there exists a partially directed path from $r_1$ to $r_2$. It suffices to show that there is some $r\in R_*$ below all reticulations in $R_* \setminus \{r\}$, since we can then simply let $d_* \in D$ be the leaf below $r$ (which exists because $D$ is a distinguishing set of $\B$). Towards a contradiction assume that there is no reticulation $r$ below all other reticulations in $R_* \setminus \{r\}$. Then, there are at least two reticulations $r_1$ and $r_2$ with no reticulation from $R_*$ below them. But for $b_*$ to be below both $r_1$ and $r_2$, a third reticulation has to be below both $r_1$ and $r_2$: a contradiction. $\blacksquare$}

\emph{Case 1: $b_* \in B_1 \setminus \{b_1\}$.} By our assumption, we know that $a_1 x | b_1 b_2$ is a split in $\N|_{\{a_1, b_1, b_2, x\}}$. But since $b_1$ and $b_*$ are both part of the set $B_1$, we have that $a_1 x | b_2 b_*$ is then also a split in $\N|_{\{a_1, b_2, b_*, x\}}$. Using \cref{lem:splits}, these two quarnet-splits show that $a_1 x | b_1 b_2 b_*$ is a split in $\N|_{\{a_1, b_1, b_2, b_*, x\}}$, which means that $a_1 x | b_1 b_*$ is a split in $\N|_{\{a_1, b_1, b_* x\}}$.

\emph{Case 2: $b_* \in B_i \setminus \{b_i\}$ with $2 \leq i \leq t$.} By our assumption, we know that $a_1 x | b_1 b_i$ is a split in $\N|_{\{a_1, b_1, b_i, x\}}$. Clearly, since $\{b_*, b_i \} \subseteq B_i$, the quarnet $\N|_{\{a_1, b_1, b_*, x\}}$ has $a_1 x | b_1 b_*$ as its split.

\emph{Case 3: $b_* \in B_i$ with $2 \leq t < i \leq s$ and $d_* \in B' \subseteq D \cup \{x\}$}. Let $b' = d_*$ and recall that by our assumption, $a_1 x | b_1 b'$ is a split in $\N|_{\{a_1, b_1, b', x\}}$. By definition, this means that there is some non-empty set of edges $E$ in $\N|_{\Y \cup \{x\}}$ that are part of every up-down path between leaves from $\{a_1, x\}$ and $\{b_1, b'\}$. Towards a contradiction, suppose that $a_1 x | b_1 b_*$ is not a split in $\N|_{\{a_1, b_1, b_*, x\}}$, which by \cref{lem:splits} means that $a_1 x | b_1 b' b_*$ is also not a split in $\N|_{\{a_1, b_1, b', b_*, x\}}$. Thus, the addition of $b_*$ to $\N|_{\{a_1, b_1, b', x\}}$ allows for bypassing the edges in $E$ when using up-down paths to connect the leaves in $\{a_1, x\}$ with those in $\{b_1, b'\}$. The only new ways to connect leaves $\{a_1, x\}$ with $\{b_1, b'\}$ when taking the subnetwork induced by $\{a_1, b_1, b', b_*, x\}$ (instead of $\{a_1, b_1, b', x\}$) is that in $\N|_{\{a_1, b_1, b', b_*, x\}}$ we can first take an up-down path from $\{a_1, x\}$ to $b_*$ and then one from $b_*$ to $\{b_1, b'\}$, whereas in $\N|_{\{a_1, b_1, b', x\}}$ this bypass is not possible and we can only take up-down paths directly from $\{a_1, x\}$ to $\{b_1, b'\}$. Now note that up-down paths were also characterized by Xu and An{\'e} \cite{xu2023identifiability} as paths without \emph{v-structures}: path segments containing both the reticulation edges corresponding to a reticulation vertex. From this characterization it becomes clear that the previously mentioned `bypass' of the edges in $E$ is possible in $\N|_{\{a_1, b_1, b', b_*, x\}}$ and not in $\N|_{\{a_1, b_1, b', x\}}$, because in $\N|_{\{a_1, b_1, b', x\}}$ the `bypass' consists of one path directly from $\{a_1, x\}$ to $\{b_1, b'\}$ with a v-structure on it (and we can not first go to $b_*$). Hence, $b_*$ is below the reticulation corresponding to the v-structure in $\N|_{\{a_1,x,b_1,b',b_*\}}$, while none of $\{a_1, x , b_1, b'\}$ is below this reticulation. It follows that $b_*$ is also below some reticulation $r \in R_*$ in $\N|_{\Y \cup \{x\}}$, while none of $\{a_1, x , b_1, b'\}$ is below $r$: a contradiction with the fact that $b' = d_*$ is below all reticulations in $R_*$ (see the claim). Therefore, we have that $a_1 x |b_1 b_*$ is a split in $\N|_{\{a_1, x, b_1, b_* \}}$.

\emph{Case 4: $b_* \in B_i$ with $2 \leq t < i \leq s$ and $d_* \in \{a_1, b_1, x\} \subseteq D \cup \{x\}$.} In this case, pick $b' = b_2 \in D$. We again have that $a_1 x |b_1 b'$ is a split in $\N|_{\{a_1, b_1, b', x\}}$. As in the previous case, assume towards a contradiction that $a_1 x | b_1 b_*$ is not a split in $\N|_{\{a_1, b_1, b_*, x\}}$. The exact same reasoning then shows that $b_*$ is below some some reticulation $r$ of $R_*$ in $\N|_{\Y \cup \{x\}}$, while none of $\{a_1, x , b_1, b'\}$ is below $r$. This is again a contradiction because $d_* \in \{a_1, b_1, x\}$ is below all reticulations in $R_*$ (see the claim). Therefore, we again have that $a_1 x |b_1 b_*$ is a split in $\N|_{\{a_1, x, b_1, b_* \}}$.

All in all, the four cases show that $a_1 x | b_1 b_*$ is a split in $\N|_{\{a_1, b_1, b_*, x\}}$ for all $b_* \in B \setminus \{b_1\}$. Since $A|B$ is a split in $\N|_\Y$, \cref{lem:splits} tells us that $a_1 a_2 |b_1 b_*$ is a split in $\N|_{\{a_1, a_2, b_1, b_*\}}$ for all $a_2 \in A \setminus \{a_1\}$, $b_* \in B \setminus \{b_1\}$. Together, we thus get that $a_1 a_* |b_1 b_*$ is a split in $\N|_{\{a_1, a_*, b_1, b_*\}}$ for all $a_* \in (A \cup \{x\})\setminus \{a_1\}$ and $b_* \in B \setminus \{b_1\}$. So, again by \cref{lem:splits}, $A\cup \{x\}|B$ is a split in $\N|_{\Y \cup \{x \}}$. This proves the backward direction for any $B'$ such that $D$ has as few leaves as possible. It follows trivially that it also holds for any other allowed set $B'$.
\end{proof}

\section{Reconstructing a tree-of-blobs from quarnet-splits}\label{sec:blobtree}

In this section we will describe an algorithm that reconstructs the tree-of-blobs of a semi-directed network $\N$ on $\X$, using only its quarnet-splits. Our approach to achieve this will be to `grow' our tree-of-blobs one leaf at a time. In particular, we will maintain the property that every intermediate tree on leaf set $\Y \subseteq \X$ will be the tree-of-blobs $\T (\N|_\Y )$. As mentioned in \cref{sec:preliminaries}, this tree can be more refined than $\T (\N )$. Thus, we might need to contract parts of the tree before attaching a new leaf. Where to attach this leaf and what parts to contract will be based on the splits of the tree-of-blobs, which can be determined with \cref{thm:determine_split} and the quarnet-splits. To this end, we need the following concepts.

\begin{definition}\label{def:active_edge}
Given a semi-directed network $\N$ on $\X$ and $\Y \subset \X$, let $uv$ be an edge in $\T ( \N|_{\Y})$ inducing the split $A|B$ and $x \in \X \setminus \Y$. The edge $uv$ is one of the following
\begin{itemize}[noitemsep,topsep=0pt]
    \item a \emph{strong stem edge (for $x$)}, if $A |B \cup \{x\}$ and $A  \cup \{x\} |B$ are splits in $\N|_{\Y \cup \{x\}}$;
    \item a \emph{weak stem edge (for $x$)}, if neither $A |B \cup \{x\}$ nor $A \cup \{x\} |B$ are splits in $\N|_{\Y \cup \{x\}}$;
    \item a \emph{pointing edge (for $x$) with orientation $uv$}, if $A |B \cup \{x\}$ is a split in $\N|_{\Y \cup \{x\}}$ but $A  \cup \{x\} |B$ is not;
    \item a \emph{pointing edge (for $x$) with orientation $vu$}, if $A |B \cup \{x\}$ is not a split in $\N|_{\Y \cup \{x\}}$ but $A  \cup \{x\} |B$ is.
\end{itemize}
An internal vertex $v$ of $\T ( \N|_{\Y})$ is a \emph{stem vertex (for $x$)} if all its incident edges $uv$ in $\T ( \N|_{\Y})$ are pointing edges for $x$ with orientation $uv$. A subtree of $\T ( \N|_\Y )$ is a \emph{weak stem subtree (for $x$)} if it is a maximal subtree containing only weak stem edges for $x$.    
\end{definition}

If the leaf $x$ is clear from the context, we often omit `for $x$' from these descriptions. We use the terms `pointing' and `orientation' to emphasize that the pointing edges are not actual directed edges in the tree-of-blobs. To illustrate these different types of edges and vertices we refer to \cref{fig:treestem}. One can check that the edges and vertices in the depicted trees-of-blobs exactly follow the previous definition. What stands out from the figure is that the highlighted edges and vertex seem to be the focal points when obtaining $\T (\N|_{\X_{i+1}})$ from $\T( \N|_{\X_i})$. This motivates the next definition of a \emph{stem} (not to be confused with its unrelated namesake from \cite{huebler2019constructing}).

\begin{definition}[Stem]
Given a semi-directed network $\N$ on $\X$ and $\Y \subset \X$ with $x \in \X \setminus \Y$, a \emph{stem (for $x$)} of $\T( \N |_\Y)$ is a strong stem edge for $x$, stem vertex for $x$, or weak stem subtree for $x$.
\end{definition}

\begin{figure}[htb]
\centering
\begin{tikzpicture}[scale=0.45]
	\begin{pgfonlayer}{nodelayer}
		\node [style={main_label}] (26) at (1.25, 8) {$\N$};
		\node [style={internal_node}] (27) at (3.5, 5.75) {};
		\node [style={internal_node}] (28) at (3.5, 0.75) {};
		\node [style={internal_node}] (31) at (2, 5.25) {};
		\node [style={internal_node}] (32) at (1, 4) {};
		\node [style={internal_node}] (33) at (1, 2.5) {};
		\node [style={internal_node}] (34) at (2, 1.25) {};
		\node [style={internal_node}] (35) at (5, 1.25) {};
		\node [style={internal_node}] (36) at (6, 2.5) {};
		\node [style={internal_node}] (37) at (6, 4) {};
		\node [style={internal_node}] (38) at (5, 5.25) {};
		\node [style={leaf_node}, label={left:1}] (39) at (1.25, 6) {};
		\node [style={leaf_node}, label={left:2}] (40) at (0, 1.75) {};
		\node [style={leaf_node}, label={below:3}] (41) at (1.5, 0.25) {};
		\node [style={leaf_node}, label={below:4}] (42) at (5.75, 0.25) {};
		\node [style={leaf_node}, label={right:5}] (43) at (7, 1.75) {};
		\node [style={leaf_node}, label={above:9}] (44) at (5.75, 6.25) {};
		\node [style={leaf_node}, label={above:10}] (45) at (3.5, 6.75) {};
		\node [style={internal_node}] (46) at (7.5, 4.75) {};
		\node [style={internal_node}] (47) at (8, 5.75) {};
		\node [style={internal_node}] (48) at (8.75, 4.75) {};
		\node [style={internal_node}] (49) at (8.25, 6.75) {};
		\node [style={internal_node}] (50) at (8, 5.75) {};
		\node [style={leaf_node}, label={right:6}] (51) at (9.75, 4.25) {};
		\node [style={leaf_node}, label={above:8}] (52) at (7.75, 7.75) {};
		\node [style={leaf_node}, label={above:7}] (53) at (9.25, 7.5) {};
		\node [style={internal_node}] (54) at (15, 7.25) {};
		\node [style={internal_node}] (55) at (16.5, 7.25) {};
		\node [style={internal_node}] (56) at (18, 7.25) {};
		\node [style={internal_node}] (57) at (19.5, 7.25) {};
		\node [style={leaf_node}, label={above:1}] (58) at (15, 8.5) {};
		\node [style={leaf_node}, label={left:2}] (59) at (13.75, 7.25) {};
		\node [style={leaf_node}, label={below:3}] (60) at (15, 6) {};
		\node [style={leaf_node}, label={below:4}] (61) at (16.5, 6) {};
		\node [style={leaf_node}, label={below:5}] (62) at (18, 6) {};
		\node [style={leaf_node}, label={right:6}] (63) at (20.5, 6) {};
		\node [style={leaf_node}, label={right:7}] (64) at (20.5, 8.25) {};
		\node [style={medium_label}] (65) at (18, 9.5) {$\T (\N|_{\X_7})$};
		\node [style={internal_node}] (66) at (24.75, 7.25) {};
		\node [style={internal_node}] (67) at (26.25, 7.25) {};
		\node [style={internal_node}] (68) at (27.75, 7.25) {};
		\node [style={internal_node}] (69) at (29.25, 7.25) {};
		\node [style={leaf_node}, label={above:1}] (70) at (24.75, 8.5) {};
		\node [style={leaf_node}, label={left:2}] (71) at (23.5, 7.25) {};
		\node [style={leaf_node}, label={below:3}] (72) at (24.75, 6) {};
		\node [style={leaf_node}, label={below:4}] (73) at (26.25, 6) {};
		\node [style={leaf_node}, label={below:5}] (74) at (27.75, 6) {};
		\node [style={leaf_node}, label={below:6}] (75) at (29.25, 6) {};
		\node [style={medium_label}] (77) at (28.25, 9) {$\T (\N|_{\X_8})$};
		\node [style={internal_node}] (78) at (30.75, 7.25) {};
		\node [style={leaf_node}, label={right:7}] (79) at (31.75, 6) {};
		\node [style={leaf_node}, label={right:8}] (80) at (31.75, 8.5) {};
		\node [style={internal_node}] (84) at (17.75, 1.25) {};
		\node [style={leaf_node}, label={above:1}] (85) at (15.25, 2.5) {};
		\node [style={leaf_node}, label={above:2}] (86) at (14.25, 2.25) {};
		\node [style={leaf_node}, label={below:3}] (87) at (14.25, 0.25) {};
		\node [style={leaf_node}, label={below:4}] (88) at (15.25, 0) {};
		\node [style={leaf_node}, label={below:5}] (89) at (16.25, 0.25) {};
		\node [style={leaf_node}, label={below:6}] (90) at (17.75, 0) {};
		\node [style={medium_label}] (91) at (18.25, 3.75) {$\T (\N|_{\X_9})$};
		\node [style={internal_node}] (92) at (19.25, 1.25) {};
		\node [style={leaf_node}, label={right:7}] (93) at (20.25, 0) {};
		\node [style={leaf_node}, label={right:8}] (94) at (20.25, 2.5) {};
		\node [style={leaf_node}, label={above:9}] (95) at (16.25, 2.25) {};
		\node [style={active_vertex}] (96) at (15.25, 1.25) {};
		\node [style={internal_node}] (97) at (28.5, 1.25) {};
		\node [style={leaf_node}, label={above:1}] (98) at (25, 2.25) {};
		\node [style={leaf_node}, label={left:2}] (99) at (24.75, 1.25) {};
		\node [style={leaf_node}, label={below:3}] (100) at (25, 0.25) {};
		\node [style={leaf_node}, label={below:4}] (101) at (26, 0) {};
		\node [style={leaf_node}, label={below:5}] (102) at (27, 0.25) {};
		\node [style={leaf_node}, label={below:6}] (103) at (28.5, 0) {};
		\node [style={medium_label}] (104) at (29, 3.25) {$\T (\N)$};
		\node [style={internal_node}] (105) at (30, 1.25) {};
		\node [style={leaf_node}, label={right:7}] (106) at (31, 0) {};
		\node [style={leaf_node}, label={right:8}] (107) at (31, 2.5) {};
		\node [style={leaf_node}, label={above:9}] (108) at (27, 2.25) {};
		\node [style={leaf_node}, label={above:10}] (110) at (26, 2.5) {};
		\node [style={internal_node}] (111) at (26, 1.25) {};
	\end{pgfonlayer}
	\begin{pgfonlayer}{edgelayer}
		\draw [style={ret_arc}, bend left, looseness=1.25] (32) to (28);
		\draw [style={ret_arc}, bend right=15] (34) to (28);
		\draw [style={ret_arc}, bend left=15] (27) to (38);
		\draw [style={ret_arc}, bend right=15] (37) to (38);
		\draw [style=edge, bend left=15] (37) to (36);
		\draw [style=edge, bend left=15] (36) to (35);
		\draw [style=edge, bend left=15] (35) to (28);
		\draw [style=edge, bend left=15] (34) to (33);
		\draw [style=edge, bend left=15] (33) to (32);
		\draw [style=edge, bend left=15] (32) to (31);
		\draw [style=edge, bend left=15] (31) to (27);
		\draw [style=edge] (39) to (31);
		\draw [style=edge] (45) to (27);
		\draw [style=edge] (38) to (44);
		\draw [style=edge] (40) to (33);
		\draw [style=edge] (34) to (41);
		\draw [style=edge] (42) to (35);
		\draw [style=edge] (36) to (43);
		\draw [style=edge] (37) to (46);
		\draw [style=edge] (50) to (46);
		\draw [style=edge] (50) to (49);
		\draw [style={ret_arc}] (50) to (48);
		\draw [style={ret_arc}] (46) to (48);
		\draw [style=edge] (52) to (49);
		\draw [style=edge] (49) to (53);
		\draw [style=edge] (48) to (51);
		\draw [style=active] (64) to (57);
		\draw [style=semi-active] (58) to (54);
		\draw [style=semi-active] (59) to (54);
		\draw [style=semi-active] (60) to (54);
		\draw [style=semi-active] (54) to (55);
		\draw [style=semi-active] (55) to (56);
		\draw [style=semi-active] (56) to (57);
		\draw [style=semi-active] (63) to (57);
		\draw [style=semi-active] (62) to (56);
		\draw [style=semi-active] (61) to (55);
		\draw [style=semi-active] (70) to (66);
		\draw [style=semi-active] (71) to (66);
		\draw [style=semi-active] (72) to (66);
		\draw [style=semi-active] (74) to (68);
		\draw [style=semi-active] (73) to (67);
		\draw [style=passive] (66) to (67);
		\draw [style=passive] (67) to (68);
		\draw [style=semi-active] (69) to (68);
		\draw [style=semi-active] (75) to (69);
		\draw [style=semi-active] (78) to (69);
		\draw [style=semi-active] (80) to (78);
		\draw [style=semi-active] (79) to (78);
		\draw [style=semi-active] (90) to (84);
		\draw [style=semi-active] (92) to (84);
		\draw [style=semi-active] (94) to (92);
		\draw [style=semi-active] (93) to (92);
		\draw [style=semi-active] (85) to (96);
		\draw [style=semi-active] (95) to (96);
		\draw [style=semi-active] (86) to (96);
		\draw [style=semi-active] (87) to (96);
		\draw [style=semi-active] (88) to (96);
		\draw [style=semi-active] (89) to (96);
		\draw [style=semi-active] (84) to (96);
		\draw [style=edge] (107) to (105);
		\draw [style=edge] (105) to (106);
		\draw [style=edge] (105) to (97);
		\draw [style=edge] (97) to (103);
		\draw [style=edge] (97) to (111);
		\draw [style=edge] (111) to (101);
		\draw [style=edge] (111) to (102);
		\draw [style=edge] (111) to (100);
		\draw [style=edge] (111) to (99);
		\draw [style=edge] (111) to (98);
		\draw [style=edge] (111) to (110);
		\draw [style=edge] (111) to (108);
	\end{pgfonlayer}
\end{tikzpicture}
\caption{A semi-directed level-2 network $\N$ on $\X = \{ 1, \ldots, 10 \}$. On the right we have the tree-of-blobs $\T (\N)$ of the whole network and three trees-of-blobs $\T (\N|_{\X_{i}} )$, with $\X_i$ the set of the first $i$ leaves of $\X$. In each of the three intermediate trees-of-blobs, the oriented edges indicate pointing edges. The stems of the trees-of-blobs are highlighted and are in order: a strong stem edge in thick blue, a weak stem subtree (in this case a path) in thick red, and a stem vertex in thick blue.}
\label{fig:treestem}
\end{figure}

\cref{fig:treestem} suggests that there is a unique stem and that the pointing edges are oriented towards this stem. We prove this in the next lemma and therefore we thus refer to \emph{the} stem from now on. Furthermore, as the name suggests, the unique stem is the location where we attach the new leaf $x$ to the tree-of-blobs. Exactly how this needs to be done is also outlined in the next lemma. It is enough to consider the following two attachments for a given leaf $x$. By \emph{attaching $x$ to an edge $uv$}, we mean subdividing $uv$ with a vertex $y$ to create two new edges $uy$ and $yv$, and then adding a new edge $yx$. Similarly, \emph{attaching $x$ to a vertex $v$} means adding the edge~$vx$. 
\begin{lemma}\label{lem:unique_tree_stem}
Let $\N$ be a semi-directed network on $\X$ and let $\Y \subset \X$. Given the tree-of-blobs~$\T ( \N|_{\Y})$ and some $x \in \X \setminus \Y$, there is a unique stem and the orientation of any pointing edge in $\T ( \N|_{\Y})$ is towards this stem. Furthermore,
\begin{enumerate}[label={(\alph*)},noitemsep,topsep=0pt]
    \item if the stem is a strong stem edge, then the tree-of-blobs $\T ( \N|_{\Y \cup \{ x\}})$ can be obtained from $\T ( \N|_{\Y})$ by attaching $x$ to this edge;
    \item if the stem is a weak stem subtree, then the tree-of-blobs $\T ( \N|_{\Y \cup \{ x\}})$ can be obtained from $\T ( \N|_{\Y})$ by contracting this subtree to a single vertex and attaching $x$ to it;
    \item if the stem is a stem vertex, the tree-of-blobs $\T ( \N|_{\Y \cup \{ x\}})$ can be obtained from $\T ( \N|_{\Y})$ by attaching $x$ to this vertex.
\end{enumerate}
\end{lemma}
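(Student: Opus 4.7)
The plan is to prove the lemma by comparing the two trees $T := \T(\N|_{\Y \cup \{x\}})$ and $T' := \T(\N|_{\Y})$ directly. Let $y$ be the internal vertex of $T$ adjacent to $x$ and let $T|_{\Y}$ be the tree obtained from $T$ by deleting $x$ and suppressing $y$ if it becomes of degree two. First I would argue, along the same lines as the discussion following \cref{def:restriction}, that any cut-edge of $\N|_{\Y \cup \{x\}}$ separating $A$ from $B \cup \{x\}$ survives in $\N|_\Y$ as a cut-edge separating $A$ from $B$, so every split of $T|_\Y$ is also a split of $T'$. Consequently $T|_\Y$ is obtained from $T'$ by contracting a set of edges $F \subseteq E(T')$.

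The second step is to identify $F$ with precisely the set of weak stem edges. For an edge $uv$ of $T'$ with split $A|B$, the edge lies in $F$ if and only if $A|B$ is not a split of $T|_\Y$, which is equivalent to neither $A|B \cup \{x\}$ nor $A \cup \{x\}|B$ being a split of $T$; by \cref{def:active_edge} this is exactly the definition of a weak stem edge. For $uv \notin F$ the split $A|B$ is preserved in $T|_\Y$, so at least one of $A|B \cup \{x\}$ and $A \cup \{x\}|B$ is a split of $T$ depending on the position of $y$; both hold exactly when $y$ subdivides the edge of $T|_\Y$ corresponding to $uv$, making $uv$ a strong stem edge. In particular, every pointing edge is oriented toward the side of $uv$ in $T'$ that contains the eventual attachment location of $x$, which immediately yields the claimed consistency of the orientations.

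The hard part will be showing that $F$ forms a single connected subtree of $T'$, which is what makes the weak stem subtree unique when it exists. I would prove this by analysing how blobs and cut-edges of $\N|_{\Y \cup \{x\}}$ evolve when passing to $\N|_\Y$: any edge of $\N|_{\Y \cup \{x\}}$ that gets dropped or suppressed lies on some up-down path with $x$ as an endpoint, and hence inside the single blob $\B$ of $\N|_{\Y \cup \{x\}}$ to which the pendant edge at $x$ is incident. Thus the new cut-edges of $\N|_\Y$ arise solely from the internal fragmentation of $\B$ into a tree-shaped arrangement of sub-blobs, and correspond in $T'$ to a connected subtree that is contracted to the single vertex of $T|_\Y$ representing $\B$. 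Once $F$ is known to be connected, the three cases of the lemma follow by a short case analysis on the location of $y$ in $T|_\Y$: if $F \neq \emptyset$, then $F$ is the unique weak stem subtree and $y$ is the contracted vertex, giving case~(b); if $F = \emptyset$, then $T|_\Y = T'$ and either $y$ subdivides a unique edge of $T'$, giving the strong stem edge in case~(a), or $y$ coincides with a vertex of $T'$ whose incident edges then all point inward, giving the stem vertex in case~(c). In each case, both the uniqueness of the stem and the described procedure for obtaining $T$ from $T'$ follow from the fact that the attachment location of $y$ in $T$ is unique.
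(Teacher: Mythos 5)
Your proposal is essentially the same argument as the paper's, presented with a slightly different bookkeeping. The paper works directly with the blob $\B$ of $\N|_{\Y \cup \{x\}}$ adjacent to $x$, observes that passing to $\N|_\Y$ only modifies $\B$, and splits into the three mutually exclusive outcomes for $\B$ (it disappears, it fragments into several connected sub-blobs, or it contracts to a single smaller blob), which directly yield cases (a), (b), (c). You instead introduce the intermediate tree $T|_\Y$ and the contracted edge set $F$, identify $F$ with the weak stem edges, and prove connectivity of $F$ by the same underlying observation that all topological change is confined to $\B$; the final case analysis on $F = \emptyset$ versus $F \neq \emptyset$ and on where the attachment vertex $y$ lands mirrors the paper's three blob cases exactly. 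Both proofs hinge on the same key facts — that restricting to $\Y$ only refines the blob adjacent to $x$, and that a fragmented blob yields a connected subtree in $\T(\N|_\Y)$ — so this is a reformulation rather than a genuinely different route.
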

\begin{proof}
We let the blob $\B$ of $\N|_{\Y \cup \{ x\}}$ be the blob adjacent to $x$, i.e. the blob with whom $x$ shares an incident edge. $\N|_{\Y \cup \{ x\}}$ contains no 2-blobs, so when taking the subnetwork of $\N|_{\Y \cup \{ x\}}$ induced by $\Y$ we can just delete the leaf $x$. Then, we only need to do some operations on the blob $\B$. The other parts of the network remain the same. We distinguish three mutually exclusive cases covering what happens with $\B$.

(a): $\B$ completely disappears. This means that the deletion of $x$ made $\B$ into a 2-blob, which was then subsequently suppressed. Then, there is a cut-edge $e$ in $\N|_\Y$ that is the result of this suppression. Since all other parts of the network remain the same, we can obtain $\T ( \N|_{\Y \cup \{ x\}})$ from $\T ( \N|_{\Y})$ by attaching $x$ to the unique edge $uv$ that induces the same split $A|B$ as the cut-edge $e$. Then, $A| B \cup \{x\}$ and $A \cup \{x\}|B$ are splits in $\T ( \N|_{\Y \cup \{ x\}})$, so $uv$ is a strong stem edge. Since the rest of the tree $\T ( \N|_{\Y \cup \{ x\}})$ remains the same as in
$\T ( \N|_{\Y})$, it is easy to check that all other edges in $\T ( \N|_{\Y})$ are pointing edges oriented towards $uv$.

(b): $\B$ splits up into multiple blobs $\B'_i$. Then, the blobs $\B_i'$ are connected to each other. So, the vertices in $\T ( \N|_{\Y})$ that represent these blobs form a subtree $T$. Again, since all other parts of the network remain the same, we can obtain $\T ( \N|_{\Y \cup \{ x\}})$ from $\T ( \N|_{\Y})$ by contracting the edges in $T$ and then attaching $x$ to the created vertex. Furthermore, if $uv$ is an edge in $T$ that induces the split $A|B$, then neither $A| B \cup \{x\}$ nor $A \cup \{x\}|B$ can be a split in $\T ( \N|_{\Y \cup \{ x\}})$. Thus, every such edge $uv$ is a weak stem edge and so $T$ is a weak stem subtree. By similar reasoning as before, all other edges in $\T ( \N|_{\Y})$ are pointing edges oriented towards $T$.

(c): $\B$ becomes a single blob $\B'$. Then, we can obtain $\T ( \N|_{\Y \cup \{ x\}})$ from $\T ( \N|_{\Y})$ by attaching $x$ to the vertex $v$ representing $\B'$. Again, since all other parts of the network (and thus of the tree-of-blobs) remain the same, all edges in $\T ( \N|_{\Y})$ are pointing edges oriented towards $v$, showing that $v$ is a stem vertex.
\end{proof}

With \cref{lem:unique_tree_stem} we now know how to attach a leaf to a stem (see \cref{fig:treestem} for examples). Thus, we have reduced the problem of reconstructing a tree-of-blobs from quarnet-splits to finding the stem in a tree-of-blobs. The following corollary of \cref{thm:determine_split} shows how this is done in quadratic time.
\begin{corollary}\label{cor:find_treestem}
Let $\N$ be a semi-directed network on $\X$ and let $\Y \subset \X$ contain $k$ leaves. Given the tree-of-blobs $\T ( \N|_\Y)$, some $x \in \X \setminus \Y$ and the quarnet-splits of $\N$, one can find the stem for $x$ of $\T ( \N|_\Y)$ in $\bigO(k^2)$ time using $\bigO(k^2)$ quarnet-splits of~$\N$.
\end{corollary}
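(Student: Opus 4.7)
The plan is to classify every edge of $\T(\N|_\Y)$ according to Definition \ref{def:active_edge} using Theorem \ref{thm:determine_split}, and then to locate the stem via a linear-time pass using Lemma \ref{lem:unique_tree_stem}. Since $\T(\N|_\Y)$ has $\bigO(k)$ edges, a per-edge cost of $\bigO(k)$ quarnet-splits and $\bigO(k)$ time will deliver the required bounds.

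For each internal edge $uv$ of $\T(\N|_\Y)$ inducing the split $A|B$, I would fix arbitrary representatives $a_1 \in A$ and $b_1 \in B$. Because the blob of $\N|_\Y$ containing $v$ has degree at least three (as $\N$ has no $2$-blobs) and each of its reticulations has at least one leaf of $\Y$ below it, the set $\{a_1,b_1\} \cup (B\setminus\{b_1\}) = \{a_1\} \cup B$ is a distinguishing set, as noted immediately after Theorem \ref{thm:determine_split}. Applying that theorem with $B' = B \setminus \{b_1\}$, the split $A \cup \{x\} | B$ holds in $\N|_{\Y \cup \{x\}}$ iff $a_1 x | b_1 b_2$ is a quarnet-split of $\N$ for every $b_2 \in B \setminus \{b_1\}$. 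Symmetrically, invoking the theorem at the blob of $\N|_\Y$ containing $u$, the split $A | B \cup \{x\}$ holds iff $b_1 x | a_1 a_2$ is a quarnet-split for every $a_2 \in A \setminus \{a_1\}$. Together these two tests consult $|A|+|B|-2 < k$ quarnet-splits and take $\bigO(k)$ time, and they suffice to classify $uv$ as a strong stem edge, a weak stem edge, or a pointing edge with a specified orientation.

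For a leaf edge incident to a leaf $\ell \in \Y$, one side of the split is always a split in $\N|_{\Y \cup \{x\}}$, since the pendant edge at $\ell$ induces $\{\ell\} \,|\, \Y \cup \{x\} \setminus \{\ell\}$. The other direction, namely whether $\{\ell, x\} | \Y \setminus \{\ell\}$ is a split, is handled by applying Theorem \ref{thm:determine_split} with the roles of $A$ and $B$ swapped so that the hypothesis $|B| \geq 2$ is met; this again costs $\bigO(k)$ quarnet-splits. Summing over all $\bigO(k)$ edges yields $\bigO(k^2)$ quarnet-splits and $\bigO(k^2)$ time for the complete classification, and by Lemma \ref{lem:unique_tree_stem} this classification uniquely identifies the stem. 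A final linear-time traversal of $\T(\N|_\Y)$ extracts it: first look for a strong stem edge; failing that, for a stem vertex (an internal vertex all of whose incident edges point towards it); and failing both, return the maximal subtree consisting solely of weak stem edges. No step poses a fundamental obstacle; the only care needed is in the leaf-edge boundary case, which is dispatched by the symmetric use of Theorem \ref{thm:determine_split} described above.
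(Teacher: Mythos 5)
Your proof is correct and takes essentially the same approach as the paper: classify each of the $\bigO(k)$ edges of $\T(\N|_\Y)$ as strong stem, weak stem, or pointing via \cref{thm:determine_split} with $B' = B\setminus\{b_1\}$ (costing $\bigO(k)$ quarnet-splits per edge), handle the trivial pendant side separately, and then read off the unique stem guaranteed by \cref{lem:unique_tree_stem}. Your treatment is slightly more explicit about applying the theorem symmetrically for both directions of each edge and about the leaf-edge boundary case, but the underlying argument and complexity analysis match the paper's.
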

\begin{proof}
We know from the previous lemma that there is a unique stem in $\T ( \N|_\Y )$. To find this stem, it is enough to check for each of the at most $\bigO(k)$ edges in the tree-of-blobs whether it is a strong stem edge, weak stem edge, or pointing edge. If such an edge induces a split $A|B$ with $|B| = 1$, this is trivial since then $A \cup \{ x\}|B$ is always a split. Otherwise, if $|B| \geq 2$, we can use \cref{thm:determine_split} to determine whether $A \cup \{ x\}|B$ is a split in $\bigO(k)$ time, using $\bigO(k)$ quarnet-splits of $\N$. In particular, we arbitrarily pick $a_1 \in A$, $b_1 \in B$ and then check if $a_1 x | b_1 b_2$ is a quarnet-split for all $b_2 \in B' = B \setminus \{b_1\}$. So, we can always determine whether $A \cup \{ x\}|B$ is a split. Hence, we know in $\bigO(k)$ time (using $\bigO(k)$ quarnet-splits) whether the edge is a strong stem edge, weak stem edge, or pointing edge, which proves the result.
\end{proof}

In \cref{alg:blobtree}, we can now present a concise formulation of how to reconstruct the tree-of-blobs of a semi-directed network. \cref{fig:treestem} again serves as an example of a few iterations of this algorithm. The time complexity of the algorithm is stated in \cref{thm:blobtree_alg}, whose proof follows directly from the two previous results. As mentioned in \cref{sec:preliminaries}, the quarnet-splits that are used as input for \cref{alg:blobtree} can also be obtained from the displayed quartets of a semi-directed network.

\begin{algorithm}[htb]
\caption{Reconstruction of the tree-of-blobs of a semi-directed network}\label{alg:blobtree}
\Input{quarnet-splits of a semi-directed network $\N$ on $\X = \{x_1, \ldots, x_n \}$}
\Output{tree-of-blobs of $\N$}
$\X_i \gets \{x_1, \ldots, x_i\}$ for all $i \in \{1, \ldots, n\}$\\

$\T ( \N|_{\X_2}) \gets$ single edge $\{x_1, x_2\}$\\

\For(\tcp*[f]{if $n=2$, $\{3, \ldots, n\} = \emptyset$}){$i \in \{3, \ldots, n \}$}{
    find the stem of $\T ( \N|_{\X_{i-1}})$, using \cref{cor:find_treestem} and the quarnet-splits of $\N$\\
    $\T ( \N|_{\X_i})$ is constructed from $\T ( \N|_{\X_{i-1}})$ as described in \cref{lem:unique_tree_stem}\\
}
\Return{$\T ( \N|_{\X_n})$}
\end{algorithm}

\begin{theorem}\label{thm:blobtree_alg}
Given the quarnet-splits of a semi-directed network $\N$ on $\X = \{ x_1, \ldots, x_n \}$, \cref{alg:blobtree} reconstructs the tree-of-blobs of $\N$ in $\bigO(n^3)$ time using $\bigO(n^3)$ quarnet-splits of $\N$.
\end{theorem}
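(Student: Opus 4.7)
The plan is to establish the claim by induction on the iteration counter $i$, proving that immediately after the iteration at which $i$ takes a given value, the algorithm has constructed exactly $\T(\N|_{\X_i})$. Once this invariant is in place, the result for the full network follows from the case $i=n$, and the stated bounds on time and quarnet-splits come from summing the per-iteration cost.

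First I would handle the base case $i=2$: the tree consisting of the single edge $\{x_1,x_2\}$ is, by definition of a 2-leaf network, the tree-of-blobs $\T(\N|_{\X_2})$. (If $n = 2$, the loop is empty and the algorithm returns this tree, so the statement holds vacuously.) For the inductive step, assume that the tree stored at the start of iteration $i$ is $\T(\N|_{\X_{i-1}})$. Setting $\Y = \X_{i-1}$ and $x = x_i$, \cref{lem:unique_tree_stem} guarantees the existence of a unique stem for $x_i$ in $\T(\N|_{\X_{i-1}})$ and describes precisely which of the three cases (a)--(c) applies and how to obtain $\T(\N|_{\X_i})$ by attaching $x_i$ to that stem. \cref{cor:find_treestem} in turn guarantees that the stem can actually be identified from the quarnet-splits of $\N$, so the algorithm correctly executes the construction prescribed by \cref{lem:unique_tree_stem}, which completes the induction.

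For the complexity, at iteration $i$ the current tree $\T(\N|_{\X_{i-1}})$ has $i-1$ leaves, so \cref{cor:find_treestem} locates the stem in $\bigO((i-1)^2)$ time and using $\bigO((i-1)^2)$ quarnet-splits. Performing the actual attachment prescribed by \cref{lem:unique_tree_stem} — subdividing an edge, contracting a subtree, or attaching to a vertex — costs only $\bigO(i)$ additional time and no further quarnet queries, since it is a local modification of a tree of size $\bigO(i)$. Summing over $i \in \{3, \ldots, n\}$ yields a total of $\sum_{i=3}^{n} \bigO(i^2) = \bigO(n^3)$ for both the running time and the number of quarnet-splits used, as claimed.

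There is no real obstacle here: the theorem is essentially a bookkeeping consequence of the two results it explicitly invokes. The only point worth being careful about is to confirm that \cref{cor:find_treestem} applies at every iteration, which requires noting that $\X_{i-1}$ is a proper subset of $\X$ of size at least $2$ and that $x_i \in \X \setminus \X_{i-1}$, both of which hold for all $i \in \{3, \ldots, n\}$.
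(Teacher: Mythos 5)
Your proof is correct and takes essentially the same approach as the paper, which states only that the theorem "follows directly from the two previous results" (\cref{lem:unique_tree_stem} and \cref{cor:find_treestem}); you have simply written out the implicit induction and the summation $\sum_{i=3}^n \bigO(i^2) = \bigO(n^3)$ that the paper leaves to the reader.
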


We refer to \cref{alg:blobtree} as a \emph{tree-of-blobs reconstruction algorithm}: an algorithm that reconstructs the tree-of-blobs of a given semi-directed network, using only its quarnet-splits. Any tree-of-blobs reconstruction algorithm needs to use $\Omega (n)$ quarnet-splits since it needs to consider every leaf at least once. An interesting question that arises, is whether any tree-of-blobs reconstruction algorithm needs $\Omega (n^3)$ quarnet-splits, which would show that \cref{alg:blobtree} is optimal. Although we are not able to answer this question affirmatively, we do provide a worst-case lower bound of $\Omega (n \log n + \ell \cdot n)$ quarnet-splits for level-$\ell$ networks. This simplifies to $\Omega (n^2)$ quarnet-splits for networks of unbounded level (because then $\ell$ could be $\Omega (n)$, see e.g. \cref{fig:lower_bound}), narrowing the gap to $\Theta (n)$.

\begin{proposition}\label{prop:lower_bound}
Given the quarnet-splits of a semi-directed level-$\ell$ network $\N$ on $\X = \{x_1, \ldots, x_n\}$, any algorithm using only quarnet-splits to reconstruct the tree-of-blobs of $\N$ needs to use $\Omega (n \log n + \ell \cdot n)$ quarnet-splits.
\end{proposition}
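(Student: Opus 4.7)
The plan is to prove the two summands separately; their sum then gives $\Omega(n\log n + \ell\cdot n)$.

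\textbf{The $\Omega(n\log n)$ term.} I would restrict attention to those inputs $\N$ that are themselves binary undirected phylogenetic trees. Such trees are level-$0$ and therefore level-$\ell$ for every $\ell\geq 0$, so they lie in the family under consideration. For them we have $\T(\N)=\N$, and by \cref{fig:quarnets} every quarnet is a quartet tree whose unique non-trivial quarnet-split on $\{a,b,c,d\}$ coincides with the displayed quartet of $\N$ on $\{a,b,c,d\}$. Consequently, any tree-of-blobs reconstruction algorithm, specialised to this subclass, is an algorithm that reconstructs an unrooted binary phylogenetic tree from quartet queries, and the worst-case $\Omega(n\log n)$ lower bound for that problem \cite{brodal2001complexity} transfers directly.

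\textbf{The $\Omega(\ell\cdot n)$ term.} I would exhibit an adversarial family of level-$\ell$ networks (illustrated by \cref{fig:lower_bound}) whose trees-of-blobs possess $\Omega(n)$ non-trivial splits, each adjacent to a blob of level $\ell$ all of whose distinguishing sets have size $\Theta(\ell)$. The adversary answers quarnet-split queries consistently with a pair of networks $\N_1,\N_2$ whose trees-of-blobs differ in a single targeted split at such a blob. By (essentially the contrapositive of) \cref{thm:determine_split}, the algorithm cannot commit to one of $\N_1,\N_2$ until it has queried quarnet-splits pairing leaves from both sides of the cut-edge with every leaf of a distinguishing set of the incident blob, which forces $\Omega(\ell)$ queries per targeted split. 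Playing this subgame at each of the $\Omega(n)$ candidate splits yields $\Omega(\ell \cdot n)$ queries in total.

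The main obstacle is amortisation: in principle, a single quarnet query could contribute simultaneously to several candidate splits, deflating the total below $\Omega(\ell\cdot n)$. To rule this out I would take the tree-of-blobs to be a caterpillar along which many disjoint level-$\ell$ blobs are strung, each equipped with its own private pool of $\Theta(\ell)$ reticulation-witnessing leaves, so that quarnet-splits involving the witnesses of one blob carry no information about splits incident to any other blob. Verifying that this private-pool construction genuinely forces the adversary's answers to remain ambiguous under $o(\ell\cdot n)$ queries is the technical heart of the argument; once that is in place, combining with the first part yields the claimed $\Omega(n\log n + \ell\cdot n)$ lower bound.
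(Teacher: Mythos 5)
Your first summand is fine and proceeds by a slightly different route than the paper: you reduce to quartet reconstruction of binary trees and invoke the known $\Omega(n\log n)$ lower bound from \cite{brodal2001complexity}, whereas the paper makes a direct information-theoretic count (there are $2^{\Omega(n\log n)}$ distinct trees-of-blobs and only four possible outcomes per quarnet-split query, hence $\log_4 2^{\Omega(n\log n)}$ queries). Both are correct; your reduction is cleaner as a black box, the paper's counting is self-contained.

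Your second summand contains a genuine accounting error. You want $\Omega(n)$ candidate splits, each with its own \emph{private} pool of $\Theta(\ell)$ witnesses, forcing $\Omega(\ell)$ disjoint queries per split. But if the pools really are private, $\Omega(n)$ of them at size $\Theta(\ell)$ each would require $\Omega(n\ell)$ leaves, which you do not have; and if you fit the blobs into $n$ leaves, you get only $\Theta(n/\ell)$ blobs and hence $\Theta(n/\ell)$ candidate non-trivial splits, so $\Omega(\ell)$ per split gives only $\Omega(n)$, not $\Omega(\ell n)$. The paper's argument is of a different shape: it takes $p=n/\ell$ blobs of $\ell$ leaves each and, for each blob $k$, builds $\binom{\ell}{2}$ alternatives $\N_{i,j,k}$ whose trees-of-blobs differ from that of $\N$, yet agree with $\N$ on every quarnet-split that does not contain both $x_{k,i}$ and $x_{k,j}$. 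A single quarnet-split query involves only $\binom{4}{2}=6$ leaf pairs and so can rule out at most a constant number of the alternatives, giving $\Omega(\ell^2)$ queries \emph{per blob} (not $\Omega(\ell)$), and $\Omega(p\cdot\ell^2)=\Omega(n\ell)$ in total. This pair-counting also handles your amortisation worry for free, with no need for private pools. Finally, your plan to derive the per-split cost from ``the contrapositive of \cref{thm:determine_split}'' is not the right tool: that theorem gives a \emph{sufficient} small set of queries to certify a split (an upper bound), and its contrapositive does not by itself force any particular query to be made; the lower bound needs an indistinguishability argument such as the paper's explicit family of alternative networks.
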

\begin{proof}
The first part of the bound is based on reasoning from \cite{kannan1996determining}, who provide a lower bound on the number of triplets (3-leaf subtrees) that need to be used for tree reconstruction. They show that there are $2^{\Omega(n \log n)}$ possible binary phylogenetic trees on $n$ leaves. This is also a lower bound on the number of trees-of-blobs of all $n$-leaf semi-directed strict level-$\ell$ networks. Clearly, there are only four possible quarnet-splits for a given set of four leaves (including the possibility that a quarnet has no non-trivial split). Then, by an information-theoretic argument, any tree-of-blobs reconstruction algorithm must use $\log_{4} (2^{\Omega(n \log n)}) = \Omega (n \log n)$ quarnet-splits in the worst case. This shows the first part of the bound for any~$\ell$.

Consider the semi-directed strict level-$\ell$ network $\N$ with $p\geq 2$ internal blobs from \cref{fig:lower_bound}. For all $k \in \{1, \ldots, p \}$ and $i, j \in \{1, \ldots, \ell \}$ with $i\neq j$, let $\N_{i, j, k}$ be the second network from \cref{fig:lower_bound} which is also strict level-$\ell$ for $p \geq 2$. Towards the second bound, suppose that $\ell \geq 2$. For simplicity, we also assume that $n\geq 4$ is a multiple of $\ell$ such that $n = p \cdot \ell$. For values of $n$ which are not multiples of $\ell$ the following exposition can trivially be adapted by letting one of the blobs be bigger (but with a similar structure) to accommodate for the remainder of the leaves.
Every quarnet-split that does not contain both $x_{k, i}$ and $x_{k, j}$ is the exact same in $\N$ and $\N_{i,j, k}$. (Note that the quarnets might differ since $i$ is not necessarily $j-1$, but the quarnet-splits do not.) Since $\N$ has a different tree-of-blobs than the networks $\N_{i,j, k}$, every tree-of-blobs reconstruction algorithm must consider at least one quarnet-split for every combination of $k \in \{1, \ldots, p \}$ and $i \neq j \in \{1, \ldots, \ell \}$. Otherwise, such an algorithm can never unambiguously construct the tree-of-blobs of $\N$. This means that such an algorithm needs to consider at least $\frac12 \cdot p \cdot \ell \cdot (\ell-1) = \frac12 \cdot n \cdot (\ell-1)$ quarnet-splits of $\N$. The bound then immediately follows.
\end{proof}

\begin{figure}[htb]
\centering
\input{fig_lower_bound}
\caption{A semi-directed strict level-$\ell$ network $\N$ with $p$ internal blobs and a set of semi-directed strict level-$\ell$ networks $\N_{i, j, k}$ on the same leaf set with $p+1$ internal blobs, where $i,j \in \{1, \ldots, \ell\}$ such that $i \neq j$ and $k \in \{1, \ldots, p \}$. All networks have $n = \ell\cdot p$ leaves, where $\ell \geq 2$ and $p\geq 2$.}
\label{fig:lower_bound}
\end{figure}

\section{Reconstructing a level-1 network from quarnets}\label{sec:level-1}

Whereas the previous section was aimed at reconstructing the tree-of-blobs of a general semi-directed network, this section is solely devoted to semi-directed level-1 networks. With this restriction we aim for more than in the previous section: the reconstruction of the full network. We first show how to construct a canonical form of such a network from $\bigO (n \log n)$ of its quarnet-splits, after which we use the full quarnets to reconstruct the whole network.

\subsection{Canonical level-1 networks}
Recall from \cref{sec:preliminaries} that every internal blob in a semi-directed level-1 network is either a single vertex or a $k$-cycle. We say that such a cycle is a \emph{large cycle} if it is neither a triangle nor a 4-cycle. An edge between two adjacent non-reticulation vertices $u$ and $v$ of a cycle in a (possibly non-binary) semi-directed level-1 network is a \emph{cycle contraction edge} if $u$ or $v$ is adjacent to the reticulation vertex of the cycle and if $u$ and $v$ are both degree-3 vertices. Clearly, every 4-cycle and every large cycle of a binary semi-directed level-1 network has two distinct cycle contraction edges: one on each side of the reticulation. We are now ready to define the canonical form that is of interest in this section and we refer to the top part of \cref{fig:canonical_network} for an example.

\begin{definition}[Canonical form]\label{def:canonical_network}
Given a semi-directed level-1 network $\N$ on $\X$, the \emph{canonical form} $\C$ of $\N$ is obtained by contracting every triangle and 4-cycle to a single vertex; and by contracting the cycle contraction edges of every large cycle.
\end{definition}

\begin{figure}[htb]
\centering
\input{fig_canonical}
\caption{\emph{Top:} A semi-directed level-1 network $\N$ on $\X = \{1, \ldots, 14\}$ and its canonical form $\C$ obtained by contracting all dotted edges. \emph{Bottom:} The subnetwork of $\N$ induced by $\Y= \{2, 3, 4, 5, 6, 11, 12 \}$, the subnetwork of $\C$ induced by $\Y$, and the canonical subnetwork of $\C$ induced by $\Y$ obtained from $\C|_\Y$ by contracting all dotted edges, which either form a 4-blob, or they are a cycle contraction edge between two degree-3 vertices. Note that this last canonical network is also the canonical form of $\N|_\Y$.}
\label{fig:canonical_network}
\end{figure}

Sometimes we say that $\C$ is a \emph{canonical (level-1) network} when we mean that it is the canonical form of a semi-directed level-1 network $\N$. Note that a canonical network is also a semi-directed level-1 network, albeit with a maximum degree of four, instead of three. Hence, we straightforwardly extend some definitions, such as blobs and trees-of-blobs, in a trivial way to canonical networks. It is easy to see that the canonical form of a semi-directed level-1 network can be viewed as a structure that is more refined than the tree-of-blobs, yet less refined than the original network. Moreover, the tree-of-blobs of the canonical form is equal to the tree-of-blobs of the original network. It will follow from \cref{thm:alg_canonical_network} that, as with the tree-of-blobs, the canonical form $\C$ of a network $\N$ can be reconstructed from the quarnet-splits of $\N$. As we will see in \cref{subsec:canonical_lev1}, it is even the most refined network one can construct from quarnet-splits. 

In this section, we are mostly concerned with the canonical form of a subnetwork, i.e. $ (\N|_\Y)^c$ for some $\Y \subseteq \X$. Unfortunately, we might have that $(\N|_\Y)^c \neq \C|_\Y $ (consider for example the set $\Y$ and the two networks $(\N|_\Y)^c$ and $\C|_\Y $ in \cref{fig:canonical_network}). Thus, it matters whether we first induce a subnetwork, or first create the canonical form. Hence, we define the following procedure for canonical networks which does have the desired property (see \cref{fig:canonical_network} for an example). Note that a 3-blob (resp. 4-blob) in a canonical network need not necessarily be a triangle (resp. 4-cycle) due to the possible degree-4 vertices.

\begin{definition}[Canonical subnetwork]\label{def:canonical_restriction}
Given a canonical network $\C$ on $\X$ and some $\Y \subseteq \X$ with $| \Y | \geq 2$, the \emph{canonical subnetwork} of $\C$ induced by $\Y$ is the canonical network $\C||_\Y$ obtained from $\C$ by first taking the subnetwork of $\C$ induced by $\Y$, followed by contracting every 3-blob and 4-blob to a single vertex, and by contracting every cycle contraction edge.
\end{definition}

Indeed, this canonical subnetwork is equivalent to first inducing a subnetwork of the original network, and then creating the canonical form. This follows from the fact that every 3-blob or 4-blob in the canonical subnetwork of a canonical network must have been a triangle or a 4-cycle in the subnetwork of the original network. These blobs should thus be identified by a single vertex. Furthermore, the cycle contraction edges in $\C|_\Y$ correspond to cycle contraction edges in the subnetwork $\N|_\Y$ of the original binary network, since - by definition - the cycle contraction edges in $\C|_\Y$ are only between degree-3 vertices. We present the formal statement in the following observation, yet omit its rather straightforward proof which follows from the reasoning above. Throughout the rest of this section we will frequently, yet implicitly, use this observation.

\begin{observation}
Given a semi-directed level-1 network $\N$ on $\X$ and its canonical form $\C$, if $\Y \subseteq \X$ with $|\Y | \geq 2$, then $(\N|_\Y)^c = \Cr{\Y}$.
\end{observation}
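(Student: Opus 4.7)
The plan is to prove the observation by a blob-by-blob case analysis, showing that the operations of restriction-then-canonicalization and canonicalization-then-canonical-restriction produce the same structure on every blob of $\N$. Because all edits performed by either canonicalization procedure are local to a blob (they contract edges inside a blob or the entire blob to a vertex), the cut-edges of $\N$ in common to both sides remain identical, so it suffices to check the two procedures agree on each blob.

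First I would establish a correspondence between the blobs of $\N$ and the blobs of $\C$: a blob $\B$ of $\N$ becomes either the same vertex in $\C$ (if trivial), a single vertex $v_\B$ (if a triangle or 4-cycle), or a smaller cycle $\B^c$ with up to two degree-4 vertices (if large, obtained by contracting the two cycle contraction edges). The crucial invariant is that the partition $X_1 | \dots | X_s$ of $\X$ induced by $\B$ is the same as the partition induced by the corresponding structure in $\C$; indeed, the contracted edges in the canonicalization are entirely internal to $\B$. In particular, if $s_\Y$ denotes the number of parts intersecting $\Y$, then both $\B$ in $\N$ and its counterpart in $\C$ ``see'' exactly the same $s_\Y$ pendant subtrees after restricting to $\Y$.

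Next I would run the case analysis on each blob $\B$. If $s_\Y \leq 2$, the blob becomes a 2-blob or trivial under restriction and is suppressed on both sides, leaving matching structures. If $\B$ is a triangle or 4-cycle, then its counterpart in $\C$ is a single vertex; on the left, $\N|_\Y$ contains a triangle or a 4-cycle (for $s_\Y=3$ or $4$) that the canonicalization collapses to a single vertex, matching $\Cr{\Y}$ exactly. If $\B$ is a large cycle, then $\B|_\Y$ in $\N|_\Y$ is a cycle of length at least $s_\Y$ with some cycle contraction edges, while $\B^c|_\Y$ in $\C|_\Y$ is a cycle of length $\leq s_\Y$ possibly containing degree-4 vertices (arising from the degree-4 vertices of $\C$ whose single retained pendant makes them now degree-3, versus those whose two pendants both survive). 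I would check that: (a) when $s_\Y \in \{3,4\}$, the blob in $\C|_\Y$ is a 3-blob or 4-blob which gets contracted to a vertex, matching the fact that the corresponding blob in $\N|_\Y$ is a triangle or 4-cycle that is contracted in $(\N|_\Y)^c$; and (b) when $s_\Y \geq 5$, the remaining cycle contraction edges of $\N|_\Y$ get contracted on the left, producing precisely the degree-4 vertices already present on the right, since both notions of ``cycle contraction edge'' coincide with the requirement that two adjacent degree-3 non-reticulation vertices flank the reticulation.

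The main obstacle I anticipate is the bookkeeping for large cycles when some pendants adjacent to cycle contraction edges are removed. In particular, one has to verify carefully that: (i) a degree-4 vertex in $\C$ losing one of its two pendants under restriction corresponds exactly to a surviving cycle contraction edge in $\N|_\Y$; (ii) a degree-4 vertex in $\C$ losing both pendants corresponds to two adjacent pendant-less vertices in $\N$ that are suppressed together, leaving the same incident cut-edges on both sides; and (iii) the reticulation vertex and its neighbours are handled consistently when the reticulation pendant is removed, using the directed-to-semi-directed commutativity of \cite{banos2019identifying} to sidestep subtleties in suppressing the reticulation in the semi-directed setting. With these checks in place, the descendants of every blob agree on both sides, and the identity $(\N|_\Y)^c = \Cr{\Y}$ follows.
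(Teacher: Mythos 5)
Your proposal is correct and follows essentially the same reasoning that the paper sketches before omitting the formal proof: the paper's key observations are that the operations are local to blobs, that every 3-blob or 4-blob in $\C|_\Y$ corresponds to a triangle or 4-cycle in $\N|_\Y$, and that cycle contraction edges of $\C|_\Y$ (by the degree-3 restriction in their definition) coincide with those of $\N|_\Y$; your blob-by-blob case analysis on $s_\Y$ is the natural way to flesh this out. The care you flag around degree-4 vertices losing pendants and around reticulation removal is precisely the bookkeeping the paper waves away as ``rather straightforward,'' and your plan handles it correctly.
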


%Lastly, note that a canonical network is not necessarily triangle-free, since it can still have a triangle obtained from a 5-cycle (see e.g. \cref{fig:canonical_network}).

\subsection{Reconstructing a canonical network}
Our strategy to construct a canonical network will be similar to the strategy that was used to construct trees-of-blobs. However, instead of contracting parts of the intermediate trees-of-blobs (which signifies the addition of a reticulation), we add a reticulation. The reduction in the number of used quarnet-splits from $\bigO( n^3)$ to $\bigO ( n \log n)$ has two reasons. On the one hand, we get a reduction of $\Theta (n)$ by using \cref{thm:determine_split} to our advantage. Specifically, we know where the reticulations are in a canonical network (except for those in triangles and 4-cycles) and we can thus find a constant size distinguishing set. Getting the logarithmic factor requires some more care but is loosely based on recursively picking cut-edges that create `balanced' cuts in the network. We now start by transferring the concept of a stem to canonical networks.

Since the tree-of-blobs of a canonical network is equal to the tree-of-blobs of the original network, we can generalize the notions from \cref{def:active_edge} to canonical networks in a straightforward way. To this end, let $\C$ be a canonical network on $\X$ and let $\Y \subset \X$ with $x \in \X \setminus \Y$. Then, a cut-edge of $\Cr{\Y}$ is a \emph{strong stem edge / weak stem edge / pointing edge (for $x$)} if the unique edge that induces the same split in its tree-of-blobs $\T (\Cr{\Y}) = \T (\N |_{\Y})$ is a strong stem edge / weak stem edge / pointing edge for $x$. Similarly, we let a blob in $\Cr{\Y}$ be a \emph{stem blob (for $x$)} if the vertex that represents it in its tree-of-blobs is a stem vertex for $x$. Specifically, if such a blob is a single vertex it is a \emph{stem vertex (for $x$)}, while it is a \emph{stem cycle (for $x$)} if the blob is a cycle. A \emph{cut-path} is a path that only contains cut-edges of $\Cr{\Y}$, and such a cut-path is a \emph{weak stem path (for $x$)} if the corresponding path in its tree-of-blobs is a weak stem subtree for $x$. As was the case for trees-of-blobs, we often omit `for $x$' if the leaf $x$ is clear from the context. We can now define the analogue of a stem for canonical networks.
\begin{definition}[Canonical stem]\label{def:network_stem}
Let $\N$ be a semi-directed level-1 network on $\X$ and let $\Y \subset \X$. Given the canonical network $\Cr{\Y}$ and some $x \in \X \setminus \Y$, a \emph{canonical stem (for $x$)} of $\Cr{\Y}$ is a strong stem edge for $x$, stem vertex for $x$, stem cycle for $x$, or weak stem path for $x$.
\end{definition}

We refer to \cref{fig:network_stem} for examples of the different types of canonical stems. Analogous to \cref{lem:unique_tree_stem}, the next lemma shows that a canonical network has a unique canonical stem and that all pointing edges are oriented towards it. The lemma also details how to attach a leaf $x$ to a stem, for which we need the following new attachment operation. When we \emph{attach $x$ to two different vertices $u$ and $v$}, we mean that we add a new reticulation vertex $w$ with directed edges $uw$ and $vw$, and then attach $x$ to $w$. The different ways to attach the leaves can also be seen in \cref{fig:network_stem}. Note that the lemma does not cover the case when the canonical stem is a stem cycle. Then, we need to attach the leaf somewhere on the cycle, but this cannot be done using $\bigO(1)$ quarnet-splits. Later in this section in \cref{lem:close_cycle} we will solve this issue by deleting the reticulation of the cycle, attaching the leaf, and then putting the reticulation back in. 

\begin{figure}[htb]
\centering
\input{fig_canonic_alg}
\caption{A semi-directed level-1 network $\N$ on $\X = \{ 1, \ldots , 10 \}$. On the right, we have the canonical network $\C$ and five canonical forms of subnetwork $\Cr{\X_i}$, with $\X_i$ the set of the first $i$ leaves of $\X$. In each of the five intermediate canonical networks the oriented edges (with the arrowhead in the middle) indicate a pointing edge. The canonical stems of the canonical networks are highlighted and are in order: a stem vertex of degree-3 in thick blue, a stem vertex of degree-4 in thick blue, a stem cycle in thick blue, a strong stem edge in thick blue, and a weak stem path in thick red. Note that the dashed edges with the arrowhead at the end of the edge are reticulation edges and not pointing edges.}
\label{fig:network_stem}
\end{figure}

\begin{lemma}\label{lem:unique_network_stem}
Let $\N$ be a semi-directed level-1 network on $\X$ and let $\Y \subset \X$. Given the canonical network $\Cr{\Y}$ and some $x \in \X \setminus \Y$, there is a unique canonical stem and the orientation of any pointing edge in $\Cr{\Y}$ is towards this canonical stem. Furthermore,
\begin{enumerate}[label={(\alph*)},noitemsep,topsep=0pt]
    \item if the canonical stem is a strong stem edge, then the canonical network $\Cr{ \Y \cup \{ x\}}$ can be obtained from $\Cr{\Y}$ by attaching $x$ to this edge;
    \item if the canonical stem is a weak stem path, then the canonical network $\Cr{ \Y \cup \{ x\}}$ can be obtained from $\Cr{\Y}$ by attaching $x$ to both end vertices of this path;
    \item if the canonical stem is a stem vertex of degree-3, the canonical network $\Cr{ \Y \cup \{ x\}}$ can be obtained from $\Cr{\Y}$ by attaching $x$ to this vertex; % We might not need this case?
    \item if the canonical stem is a stem vertex of degree-4, the canonical network $\Cr{ \Y \cup \{ x\}}$ can be obtained from $\Cr{\Y}$ in constant time using at most five quarnet-splits of $\N$. % Or maybe say by expanding this vertex and attaching $x$
\end{enumerate}
\end{lemma}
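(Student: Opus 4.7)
The plan is to lift the analogous result for the tree-of-blobs, \cref{lem:unique_tree_stem}, to the canonical network via the identity $\T(\Cr{\Y}) = \T(\N|_\Y)$. I would first observe that canonical stems of $\Cr{\Y}$ are in a natural correspondence with stems of $\T(\Cr{\Y})$: a strong stem edge, a stem blob (stem vertex or stem cycle), and a weak stem path correspond respectively to a strong stem edge, a stem vertex, and a weak stem subtree. Uniqueness of the canonical stem and the orientation of the pointing edges then follow immediately from \cref{lem:unique_tree_stem}. To make the correspondence between weak stem paths and weak stem subtrees valid, one also needs that every weak stem subtree in $\T(\Cr{\Y})$ is actually a path; this is forced by the level-1 structure, since when $x$ lies below the reticulation of a $k$-cycle ($k \geq 5$) in $\N|_{\Y \cup \{x\}}$, removing $x$ causes the up-down paths to avoid the reticulation, and the cycle decomposes into a caterpillar of cut-edges whose internal edges form a path.

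For cases (a)--(c) the argument reduces to describing the structural change of the blob $\B$ in $\N|_{\Y \cup \{x\}}$ containing $x$ and checking that the stated local operation in the canonical network reverses that change. Case (a) corresponds to $\B$ being a triangle, which shrinks to a $2$-blob when $x$ is removed and is suppressed to the strong stem cut-edge; subdividing that edge with $x$ clearly reverses the collapse. Case (b) corresponds to a $k$-cycle ($k \geq 5$) with $x$ below the reticulation: the two endpoints of the weak stem path are precisely the cycle vertices that become the two degree-$4$ vertices of the canonical $(k-2)$-cycle in $\Cr{\Y \cup \{x\}}$, so attaching $x$ to both via a new reticulation restores this cycle. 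Case (c) covers the two subcases in which $\B$ is a $4$-cycle: either $x$ is attached to a non-reticulation vertex (so $\B$ shrinks to a triangle in $\N|_\Y$, contracted to a degree-$3$ vertex in $\Cr{\Y}$), or $x$ is the unique leaf below the reticulation (so $\B$ collapses to a degree-$3$ cut-vertex). In both subcases, adding $x$ to the degree-$3$ vertex promotes it to the degree-$4$ vertex that represents the contracted $4$-cycle in $\Cr{\Y \cup \{x\}}$.

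The main obstacle is case (d). Here the degree-$4$ stem vertex of $\Cr{\Y}$ is a contracted $4$-cycle that, upon adding $x$, becomes a $5$-cycle in $\N|_{\Y \cup \{x\}}$ and hence a $3$-cycle with two degree-$4$ vertices in $\Cr{\Y \cup \{x\}}$. To reconstruct this $3$-cycle I would pick one leaf $\ell_i$ from each of the four subtrees incident to the stem vertex and examine the five quarnets on the $4$-element subsets of $\{\ell_1, \ell_2, \ell_3, \ell_4, x\}$. A short case analysis shows that exactly one of these quarnets is a quartet tree with a non-trivial split, while the other four are four-cycle quarnets with no non-trivial split. The unique quarnet with a split is the one obtained by omitting the leaf from the subtree below the reticulation, which identifies that subtree; its split, of the form $\ell_a x | \ell_b \ell_c$ after suitable relabeling, tells us which of the three non-reticulation subtrees shares a degree-$4$ vertex with $x$ in the new $3$-cycle. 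This determines $\Cr{\Y \cup \{x\}}$ in constant time using at most five quarnet-splits.
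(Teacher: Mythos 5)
Your proof is correct and takes essentially the same approach as the paper's: lift \cref{lem:unique_tree_stem} to canonical networks via $\T(\Cr{\Y}) = \T(\N|_\Y)$, argue that any weak stem subtree must in fact be a path because of the level-1 constraint (a $k$-cycle whose reticulation is abandoned when $x$ is removed collapses to a cut-path of trivial blobs), and handle parts (a)--(d) by examining the local change in the blob $\B$ adjacent to $x$, with (d) settled by inspecting the five quarnets on subsets of $\{y_1,y_2,y_3,y_4,x\}$ and noting that exactly one is a quartet tree while the remaining four are four-cycle quarnets. The only small wrinkle is that in case (a) you assume $\B$ is a triangle, whereas $\B$ can also be a trivial degree-3 blob (a single cut vertex adjacent to $x$); the attach-to-edge operation works identically there, so this does not affect correctness, but it is worth noting for completeness.
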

\begin{proof}
From \cref{lem:unique_tree_stem} we know that the tree-of-blobs $ \T (\N|_{\Y}) = \T (\Cr{\Y})$ contains a unique stem and all other edges are pointing edges oriented towards this stem. If the stem in this tree-of-blobs is a strong stem edge, $\Cr{\Y}$ has a strong stem edge. Similarly, if this stem is a stem vertex, $\Cr{\Y}$ has a stem vertex or a stem cycle. Lastly, if this stem is a weak stem subtree $T$, we know from \cref{lem:unique_tree_stem}b that $ \T (\N|_{\Y \cup \{x\}}) = \T (\Cr{ \Y \cup \{ x\}})$ can be obtained by contracting $T$ and then attaching $x$ to it. To not contradict the fact that $\Cr{\Y}$ is level-1, all vertices of $T$ must represent trivial blobs in $\N|_\Y$ (and thus in $\Cr{\Y}$) and they form a cut-path. Therefore, there is a weak stem path in $\Cr{\Y}$. This proves the first part of the lemma.

Part (a) (resp. part (c)) of the lemma follows directly from \cref{lem:unique_tree_stem}a (resp. \cref{lem:unique_tree_stem}c) and the fact that triangles (resp. 4-cycles) are suppressed in canonical networks.

For part (b), we already know by the previous reasoning that all vertices of the weak stem path become part of the same cycle $\Cycle$ when we attach $x$. Thus, $x$ is the first leaf (with respect to all other leaves in $\Y$) below the reticulation of $\Cycle$. For canonical networks, we do not care about the ordering of the two leaves in a reticulation cycle pair. Clearly, we can then just attach $x$ to both end vertices of the path (and direct the two new edges towards $x$) to create the cycle $\Cycle$.

For part (d), we let $Y_1|Y_2|Y_3|Y_4$ be the partition of $\Y$ induced by the stem vertex $v$ and we also pick one arbitrary leaf $y_i$ from every $Y_i$. The stem vertex $v$ represents a 4-cycle in $\N|_\Y$ and \cref{lem:unique_tree_stem}c tells us that $v$ becomes a 5-blob in $\Cr{ \Y \cup \{ x\}}$ with induced partition $Y_1|Y_2|Y_3|Y_4|\{x\}$. Such a 5-blob will be a triangle with five cut-edges incident to it in $\Cr{ \Y \cup \{ x\}}$. Then, there will be exactly one subset of four leaves from $\{ y_1, y_2, y_3, y_4, x\}$ that induces a quarnet-split. The remaining fifth leaf then corresponds to the subset of leaves below the reticulation of $\Cycle$, while the quarnet-split is enough to order the other leaves around the triangle.
\end{proof}

The next lemma, which follows from \cref{thm:determine_split}, is crucial for finding out whether a cut-edge is a weak/strong stem edge or a pointing edge. In contrast with the linear amount of quarnet-splits that was needed in the previous section, we need only a constant amount in the case of level-1 networks.
\begin{lemma}\label{cor:find_active_passive}
Let $\N$ be a semi-directed level-1 network on $\X$ and let $\Y \subset \X$ contain $k$ leaves. Given a cut-edge $uv$ in the canonical network $\Cr{\Y}$ and some $x \in \X \setminus \Y$, one can determine in $\bigO(k)$ time whether $uv$ is a strong stem edge, a weak stem edge, or a pointing edge, using at most four quarnet-splits of $\N$.
\end{lemma}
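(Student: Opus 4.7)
My plan is to reduce the three-way classification of $uv$ to two independent yes/no questions answered via \cref{thm:determine_split}. Since the canonical form is obtained from $\N|_\Y$ by contracting edges strictly inside blobs, every cut-edge of $\Cr{\Y}$ is also a cut-edge of $\N|_\Y$ inducing the same split; in particular $uv$ induces a split $A|B$ in both networks, with $u$ on the $A$-side and $v$ on the $B$-side. By \cref{def:active_edge}, the type of $uv$ is determined by the pair of answers to ``is $A\cup\{x\}|B$ a split of $\N|_{\Y\cup\{x\}}$?'' and ``is $A|B\cup\{x\}$ a split of $\N|_{\Y\cup\{x\}}$?''. If $|B|=1$ (respectively $|A|=1$) the first (respectively second) question is trivially positive and uses zero quarnet-splits, so I focus on the non-trivial case.

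Fix arbitrary $a_1\in A$ and $b_1\in B$, and let $\B$ (respectively $\B'$) be the blob of $\N|_\Y$ (respectively $\Cr{\Y}$) incident to $v$, so $\B'$ is obtained from $\B$ by the canonical-form contractions. Because $\N$ is level-$1$, $\B$ carries at most one reticulation and hence admits a distinguishing set of size at most $\max\{3,1\}=3$. My plan is to construct, by inspection of $\B'$ in $\Cr{\Y}$, an explicit $B'\subseteq B\setminus\{b_1\}$ with $|B'|\leq 2$ making $\{a_1,b_1\}\cup B'$ distinguishing for $\B$; \cref{thm:determine_split} then answers the $B$-side question by examining $|B'|$ quarnet-splits of the form $a_1x|b_1b_j$. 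If $\B'$ is a single vertex of degree $3$, then $\B$ is trivial or a triangle and its induced partition has exactly three parts, so any $b_2\in B$ lying in a cut-edge subtree of $v$ different from that of $b_1$ makes $\{a_1,b_1,b_2\}$ cover all three parts (and therefore includes a leaf below the reticulation if one exists), giving $|B'|=1$. If $\B'$ is a cycle of $\Cr{\Y}$, then $\B$ is a large cycle and its reticulation survives the contraction; I can read off from $\Cr{\Y}$ which cut-edge subtree of $v$ lies below the reticulation and choose $b_2$ so that $\{a_1,b_1,b_2\}$ simultaneously covers three parts and includes a leaf below the reticulation, again giving $|B'|=1$. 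If $\B'$ is a single vertex of degree $4$, then $\B$ is a $4$-cycle and the contraction has erased the reticulation; here I pick two leaves $b_2,b_3\in B$ lying in two different cut-edge subtrees of $v$, both distinct from that of $b_1$, so that $\{a_1,b_1,b_2,b_3\}$ covers all four parts of the partition induced by $\B$ and therefore certainly contains a leaf below the reticulation, giving $|B'|=2$.

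The $B$-side check thus uses at most $2$ quarnet-splits, and a symmetric argument on the $A$-side contributes at most $2$ more, for a total of $4$. Each queried quarnet lives on a $4$-subset of $\Y\cup\{x\}\subseteq\X$ and is, by \cref{obs:restriction}, directly a quarnet of $\N$, so all queries can be answered from the input. The step I expect to be most delicate is the degree-$4$ case: because the canonical form hides the reticulation of a $4$-cycle, the ``obvious'' $|B'|=1$ distinguishing set that works elsewhere breaks down, and I have to cover every cut-edge subtree of the blob so that the pigeonhole principle guarantees a leaf below the hidden reticulation. All remaining bookkeeping—extracting $A$ and $B$ from $uv$, classifying $\B'$, locating the reticulation's outgoing cut-edge when $\B'$ is a cycle, and selecting leaves in prescribed subtrees—reduces to a constant number of traversals of $\Cr{\Y}$ and thus runs in $\bigO(k)$ time.
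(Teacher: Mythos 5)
Your proof is correct and follows the same strategy as the paper's: reduce the three-way classification to two independent applications of \cref{thm:determine_split}, each requiring a distinguishing set of size at most two. You spell out the case analysis by blob type (trivial/triangle, large cycle, 4-cycle) that the paper's proof only sketches via the remark that reticulations are known ``except for the triangles and 4-cycles,'' and you correctly pinpoint the 4-cycle as the one case forcing $|B'|=2$ by covering all four parts.
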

\begin{proof}
If the cut-edge $uv$ induces a split $A|B$ with $|B|=1$, we know that $A \cup \{ x\}|B$ is always a split. If $|B| \geq 2$, we can use \cref{thm:determine_split} to determine whether $A \cup \{ x\}|B$ is a split in $\bigO(k)$ time, using at most two quarnet-splits of $\N$. To be precise, we can always pick the set $B'$ in $\bigO(k)$ time such that it has size at most two. This follows from the fact that, except for the triangles and 4-cycles, we always know where the reticulation vertices are in a canonical network. Finally, reversing the roles of $A$ and $B$ allows us to determine what type $uv$ is.
\end{proof}

Whenever we use the previous lemma and find out that a given edge is a strong stem edge, \cref{lem:unique_network_stem} describes how to add the leaf to the canonical network. If a weak stem edge is found, we first need to find the complete weak stem path before we can use \cref{lem:unique_network_stem} to attach a new leaf. The following lemma proves that this can be done using a logarithmic number of quarnet-splits, making it the first puzzle piece to get the number of quarnet-splits that our algorithm uses down to $\bigO (n \log n)$. \cref{fig:find_passive_cutpath} is a visualization of part of the algorithm described in the proof of the lemma.

\begin{lemma}\label{lem:find_passive_cutpath}
Let $\N$ be a semi-directed level-1 network on $\X$ and let $\Y \subset \X$ contain $k$ leaves. Given a weak stem edge in the canonical network $\Cr{\Y}$ and some $x \in \X \setminus \Y$, one can find the weak stem path in $\Cr{\Y}$ in $\bigO(k)$ time using $\bigO(\log k)$ quarnet-splits of $\N$.
\end{lemma}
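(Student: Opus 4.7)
The plan is to locate the two endpoints of the weak stem path via a centroid-decomposition search in the tree-of-blobs $\T := \T(\Cr{\Y})$. Let $e_0 = uv$ denote the given weak stem edge. From the proof of \cref{lem:unique_network_stem}, the weak stem edges in $\T$ form a simple path whose vertices all represent trivial blobs (i.e.\ degree-3 or degree-4 vertices in $\Cr{\Y}$), so it is enough to find the far endpoint of the sub-path in each of the components $T_u$ and $T_v$ of $\T \setminus \{e_0\}$. Both components contain $\bigO(k)$ vertices, and the same routine is run on each.

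For $T_u$ (the other case is symmetric), I would call a routine \textsc{FindEndpoint}$(T, u)$ that returns the far endpoint of the weak stem sub-path in the current subtree $T$ starting at $u$. The routine first computes a centroid $c$ of $T$. If $c \neq u$, it lets $e_c$ be the last edge on the unique $u$-to-$c$ path in $T$ and applies \cref{cor:find_active_passive} at a cost of $\bigO(1)$ quarnet-splits to test whether $e_c$ is a weak stem edge; if it is not, the sub-path never reaches $c$, so the routine recurses on the component of $T \setminus \{c\}$ containing $u$ (keeping $u$ fixed). If $c = u$ or if $e_c$ is a weak stem edge, then $c$ lies on the weak stem sub-path; since $c$ represents a trivial blob, it has at most three other incident cut-edges in $\Cr{\Y}$, each of which is probed with \cref{cor:find_active_passive}, again at $\bigO(1)$ cost. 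If none of them is a weak stem edge, the routine returns $c$; otherwise the (unique) further weak stem edge $cc'$ identifies the direction in which the sub-path continues, and the routine recurses on the component of $T \setminus \{c\}$ containing $c'$ with $c'$ as the new starting vertex.

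Correctness rests on the following characterization: because the weak stem edges inside $T_u$ form a simple path emanating from $u$, a vertex $c \neq u$ lies on this sub-path if and only if the last edge $e_c$ on the unique $u$-to-$c$ path is a weak stem edge; once $c$ is confirmed on the sub-path, the degree bound leaves at most one remaining weak stem edge at $c$, which the probe step identifies and uses to pick the correct subtree to recurse into. For the complexity, the centroid property ensures that every recursive call operates on a subtree of at most half the previous size, so the recursion depth is $\bigO(\log k)$ and the total number of quarnet-split probes is $\bigO(\log k)$. Each recursive step computes a centroid and the $u$-to-$c$ path in time linear in the current subtree size, and the geometric sum over the $\bigO(\log k)$ levels yields the claimed $\bigO(k)$ running time.

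The main obstacle is guaranteeing that only $\bigO(1)$ applications of \cref{cor:find_active_passive} at each centroid suffice to decide the correct direction of recursion; this relies crucially on the structural fact inherited from \cref{lem:unique_network_stem} that the weak stem subtree of $\T$ is a simple path through trivial blobs, which prevents any blow-up at centroids that happen to represent large non-trivial blobs (since such centroids cannot lie on the sub-path and are therefore handled by the cheaper $e_c$-probe branch).
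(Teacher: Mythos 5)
Your proof is correct and achieves the same bounds, but it takes a route that is somewhat different from the paper's. The paper first restricts attention to the unique maximal binary subtree $T$ of $\Cr{\Y}$ containing the given weak stem edge and avoiding degree-4 vertices, leaves, and cycle vertices, then uses a \emph{centroid edge} within $T$ (possible precisely because $T$ is binary). Its decision rule for which side to recurse on exploits the orientation of pointing edges from \cref{lem:unique_network_stem}, which always point towards the stem, so a single probe per level suffices. You instead search directly in the tree-of-blobs with a \emph{vertex centroid}, and your decision rule is a local characterization with respect to the anchor $u$: a vertex $c$ lies on the sub-path iff the last edge $e_c$ on the $u$-to-$c$ path is a weak stem edge. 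This is a clean alternative that avoids invoking the orientation of pointing edges, at the harmless cost of also probing $c$'s incident cut-edges once $c$ is confirmed to lie on the path. A few points are worth tightening: (i) the weak stem path in a canonical network actually passes only through degree-3 vertices, not degree-4 ones, for the same level-1 reason the paper uses to exclude degree-4 vertices from its subtree $T$, so your degree bound can be sharpened; (ii) when probing the ``other'' cut-edges of a centroid $c$ on the path, you must exclude the edge leading back towards the current anchor (in particular the given edge $e_0$ when $c=u$), since otherwise two weak stem edges could be found at $c$; and (iii) your claimed $\bigO(k)$ running time, like the paper's, tacitly requires that the split of each edge and the constant-size distinguishing set used by \cref{cor:find_active_passive} be precomputed once in $\bigO(k)$ time, since a naive probe costs $\Theta(k)$ time and there are $\Theta(\log k)$ of them.
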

\begin{proof}
Let $u_1 u_2$ be the given weak stem edge. By \cref{lem:unique_network_stem} there exists a unique cut-path $P$ that is a weak stem path and that contains $u_1 u_2$. By the same lemma, we also know that we can obtain $\Cr{\Y \cup \{x\}}$ from $\Cr{\Y}$ by attaching $x$ to the end vertices of~$P$. Now let $T$ be the unique maximal subtree of $\Cr{\Y}$ that contains $u_1 u_2$ and does not contain degree-4 vertices, leaves, or vertices that are part of a cycle. Then, $T$ is a binary tree and $P$ is completely in $T$. Otherwise, attaching $x$ to the end vertices of $P$ would not result in a canonical network. \cref{fig:find_passive_cutpath} shows an example of such a tree $T$ for a given weak stem edge. Note that finding $T$ can be done in $\bigO(k)$ time. We also store for each edge of $T$ what its corresponding split is in $\Cr{\Y}$. This allows us to say that an edge in $T$ (or in any of its subtrees) is a strong stem edge / weak stem edge / pointing edge, when we mean that the corresponding edge in $\Cr{\Y}$ is a strong stem edge / weak stem edge / pointing edge.

The algorithm now continues as follows. We delete the edge $u_1 u_2$ from $T$ to create two smaller subtrees $T_1$ and $T_2$ with $u_1$ in $T_1$ and $u_2$ in $T_2$. Clearly, one end vertex $p_1$ of $P$ must be in $T_1$, and the other end vertex $p_2$ in $T_2$. We will now show that one can recursively locate the vertex $p_i$ in the tree $T_i$ in $\bigO( k_i)$ time using $\bigO( \log k_i )$ quarnet-splits, where $k_i$ is the number of vertices in $T_i$. Since knowing $p_1$ and $p_2$ is enough to find $P$ in linear time, the result will then follow.

Whenever $k_i = 1$, we trivially know that the only vertex in $T_i$ is $p_i$ and we are done. Otherwise, if $k_i \geq 2$, we first find a \emph{centroid} $v$ of $T_i$: a vertex such that every component of $T_i - \{ v\}$ contains at most $\frac23 k_i$ vertices. A centroid always exists and can be found in linear time (see e.g. \cite{kannan1996determining,bodlaender1995}). Since $T_i$ is binary and we are in the case with $k_i \geq 2$, there is one edge $vw$ incident to $v$ such that both components of $T_i - \{vw\}$ have at most $\frac23 k_i$ vertices. We will now describe how to find out which of the two components contains $p_i$ and which component we should thus recurse on. This will then prove correctness since both components contain at most $\lfloor \frac23 k_i \rfloor < k_i$ vertices. To this end, we test with \cref{cor:find_active_passive} whether $vw$ is a weak stem edge or a pointing edge in linear time using at most four quarnet-splits. Note that by the uniqueness of the stem (see \cref{lem:unique_network_stem}) $vw$ cannot be a strong stem edge. If $vw$ is a pointing edge, we know from \cref{lem:unique_network_stem} that its corresponding edge in $\Cr{\Y}$ is oriented towards $P$ (and thus $p_i$). Thus, $p_i$ is in the component of $T_i - \{vw\}$ to which $vw$ is pointing. On the other hand, if $vw$ is a weak stem edge, $p_i$ can only be in the component of $T_i - \{vw\}$ that does not contain $u_i$ (or, if we have already recursed and $u_i$ is not in the tree any more, the component furthest away from $u_i$ in the original tree $T$).

We now let $D(k')$ be the recursion depth of this algorithm when applied to a tree with $k'$ vertices. From the above analysis it follows that $D(k') \leq D \left( \lfloor \frac23 k' \rfloor \right) + 1$ with $D(2) = 1$, which results in $D(k_i) = \bigO (\log k_i)$. Since we use $\bigO (k_i)$ time and at most four quarnet-splits per recursive level, our algorithm to find $p_i$ takes $\bigO( k_i)$ time and $\bigO( \log k_i)$ quarnet-splits.
\end{proof}

\begin{figure}[htb]
\centering
\tikzstyle{special_edge}=[-, draw=red, thick]

\begin{tikzpicture}[scale=0.55]
	\begin{pgfonlayer}{nodelayer}
		\node [style={internal_node}] (0) at (-3.75, 5.75) {};
		\node [style={internal_node}] (1) at (-2.75, 5) {};
		\node [style={internal_node}] (2) at (-1.75, 4.25) {};
		\node [style={internal_node}] (3) at (0, 4.25) {};
		\node [style={internal_node}] (4) at (0, 5.25) {};
		\node [style={internal_node}] (5) at (-2.75, 3.25) {};
		\node [style={internal_node}] (6) at (-3.75, 3) {};
		\node [style={internal_node}] (7) at (-2.75, 2.25) {};
		\node [style={internal_node}] (8) at (1.75, 4.25) {};
		\node [style={internal_node}] (9) at (2.75, 4.25) {};
		\node [style={leaf_node}] (10) at (-4.75, 6.5) {};
		\node [style={leaf_node}] (11) at (-4.25, 5) {};
		\node [style={leaf_node}] (12) at (-3.25, 6.5) {};
		\node [style={leaf_node}] (13) at (-2.25, 5.75) {};
		\node [style={leaf_node}] (14) at (-0.5, 6) {};
		\node [style={leaf_node}] (15) at (0.5, 6) {};
		\node [style={leaf_node}] (16) at (-4.5, 2.5) {};
		\node [style={leaf_node}] (17) at (-2.75, 1.25) {};
		\node [style={leaf_node}] (18) at (-2, 1.75) {};
		\node [style={leaf_node}] (19) at (-4.5, 3.5) {};
		\node [style={leaf_node}] (20) at (3.5, 4.75) {};
		\node [style={leaf_node}] (21) at (3.5, 3.75) {};
		\node [style={internal_node}] (22) at (1.75, 3.25) {};
		\node [style={internal_node}] (23) at (2.5, 2.5) {};
		\node [style={internal_node}] (24) at (3.5, 2) {};
		\node [style={internal_node}] (25) at (5, 2) {};
		\node [style={leaf_node}] (26) at (1, 2.5) {};
		\node [style={leaf_node}] (27) at (2, 1.75) {};
		\node [style={leaf_node}] (28) at (3.5, 1.25) {};
		\node [style={internal_node}] (29) at (6, 3) {};
		\node [style={internal_node}] (30) at (7, 2) {};
		\node [style={internal_node}] (31) at (6, 1) {};
		\node [style={leaf_node}] (32) at (5.5, 0.25) {};
		\node [style={leaf_node}] (33) at (6.5, 0.25) {};
		\node [style={leaf_node}] (34) at (8, 2) {};
		\node [style={internal_node}] (35) at (6.75, 3.75) {};
		\node [style={internal_node}] (36) at (6.75, 4.75) {};
		\node [style={leaf_node}] (37) at (6.25, 5.5) {};
		\node [style={leaf_node}] (38) at (7.25, 5.5) {};
		\node [style={leaf_node}] (39) at (7.75, 4) {};
		\node [style={leaf_node}] (40) at (5.25, 3.75) {};
		\node [style={main_label}] (41) at (4, 6.75) {$\C$};
	\end{pgfonlayer}
	\begin{pgfonlayer}{edgelayer}
		\draw [style=edge] (10) to (0);
		\draw [style=edge] (0) to (11);
		\draw [style=edge] (0) to (12);
		\draw [style=edge] (0) to (1);
		\draw [style=edge] (1) to (13);
		\draw [style=edge] (4) to (14);
		\draw [style=edge] (4) to (15);
		\draw [style=edge] (2) to (5);
		\draw [style=edge] (19) to (6);
		\draw [style=edge] (6) to (16);
		\draw [style=edge] (7) to (17);
		\draw [style=edge] (7) to (18);
		\draw [style=edge] (9) to (20);
		\draw [style=edge] (9) to (21);
		\draw [style=edge] (22) to (26);
		\draw [style=edge] (23) to (27);
		\draw [style=edge] (24) to (28);
		\draw [style=edge] (24) to (25);
		\draw [style=edge] (29) to (40);
		\draw [style=edge] (29) to (35);
		\draw [style=edge] (35) to (39);
		\draw [style=edge] (35) to (36);
		\draw [style=edge] (36) to (37);
		\draw [style=edge] (36) to (38);
		\draw [style=edge] (31) to (32);
		\draw [style=edge] (31) to (33);
		\draw [style=edge] (30) to (34);
		\draw [style=edge, bend right=45] (29) to (25);
		\draw [style=edge, bend right=45] (25) to (31);
		\draw [style={ret_arc}, bend left=45] (29) to (30);
		\draw [style={ret_arc}, bend right=45] (31) to (30);
		\draw [style=passive] (3) to (8);
		\draw [style={special_edge}] (8) to (22);
		\draw [style={special_edge}] (8) to (9);
		\draw [style={special_edge}] (22) to (23);
		\draw [style={special_edge}] (23) to (24);
		\draw [style={special_edge}] (1) to (2);
		\draw [style={special_edge}] (2) to (3);
		\draw [style={special_edge}] (3) to (4);
		\draw [style=edge] (6) to (7);
		\draw [style={ret_arc}] (6) to (5);
		\draw [style={ret_arc}] (7) to (5);
	\end{pgfonlayer}
\end{tikzpicture}
\caption{A canonical network $\C$ with unlabeled leaves (indicated by the filled black vertices) and a given weak stem edge in thick red. Together with the other red edges, they induce the binary subtree $T$ that has to contain the weak stem path. This tree is used in the algorithm from \cref{lem:find_passive_cutpath}. }
\label{fig:find_passive_cutpath}
\end{figure}

The following result is a consequence of \cref{lem:unique_network_stem}, which states that pointing edges are always oriented towards the canonical stem. Given two canonical networks $\C_1$ and $\C_2$ on leaf sets $\X_1$ and $\X_2$ with $x_1 \in \X_1$ and $x_2 \in \X_2$ such that $\X_1 \cap \X_2 \subseteq \{x_1, x_2 \}$, we first define the following operation. The graph obtained by \emph{glueing $\C_1$ at $x_1$ to $\C_2$ at $x_2$} is the union of $\C_1$ and $\C_2$, where $x_1$ and $x_2$ are identified as a single vertex and the resulting degree-2 vertex is suppressed. We refer to the left and middle part of \cref{fig:glueing} to get some intuition for the glueing operation and the lemma.

\begin{lemma}\label{cor:glue_restrictions}
Let $\N$ be a semi-directed level-1 network on $\X$ and let $\Y \subset \X$. Let $uv$ be a cut-edge in the canonical network inducing the split $A|B$ and let $x \in \X \setminus \Y$, $a \in A$, and $b\in B$. If $uv$ is a pointing edge for $x$ with orientation $uv$, then $\Cr{\Y \cup \{x\}}$ can be constructed by glueing $\Cr{A \cup \{b\}}$ at $b$ to $\Cr{B \cup \{a, x\}}$ at $a$.
\end{lemma}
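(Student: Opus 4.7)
The plan is to identify both glued pieces with the two sides of the cut-edge that realises the split $A|B\cup\{x\}$ in $\Cr{\Y\cup\{x\}}$, and then verify that the glueing operation reassembles this cut-edge. Since $uv$ is a pointing edge for $x$ with orientation $uv$, the definition of a pointing edge immediately gives that $A|B\cup\{x\}$ is a split in $\Cr{\Y\cup\{x\}}$, so there is a cut-edge $e^{*}=u^{*}v^{*}$ in $\Cr{\Y\cup\{x\}}$ (with $u^{*}$ on the $A$-side and $v^{*}$ on the $(B\cup\{x\})$-side) inducing this split. Combining this with \cref{lem:unique_network_stem} — every pointing edge orients towards the unique canonical stem — the unique canonical stem lies on the $B$-side of $uv$ in $\Cr{\Y}$, so all structural modifications needed to pass from $\Cr{\Y}$ to $\Cr{\Y\cup\{x\}}$ happen on the $B$-side. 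In particular, the $A$-side of $e^{*}$ in $\Cr{\Y\cup\{x\}}$ coincides, as a partial network, with the $A$-side of $uv$ in $\Cr{\Y}$.

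Next I would establish two ``marker leaf'' identities. Using the observation $(\N|_{A\cup\{b\}})^{c}=\Cr{A\cup\{b\}}$, I want to show that $\Cr{A\cup\{b\}}$ is exactly the $A$-side component of $\Cr{\Y\cup\{x\}}\setminus\{e^{*}\}$, with a pendant edge from $u^{*}$ leading to the leaf $b$. The key fact is that every up-down path in $\N$ between a leaf of $A$ and $b$ must traverse the structure underlying $e^{*}$; hence, in $\N|_{A\cup\{b\}}$, the entire $B$-side portion of $\N$ collapses, after suppressing degree-$2$ vertices and identifying parallel edges, to a single edge terminating at $b$. Canonicalising preserves this shape. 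A symmetric argument applied to $B\cup\{a,x\}$ shows that $\Cr{B\cup\{a,x\}}$ is the $(B\cup\{x\})$-side component of $\Cr{\Y\cup\{x\}}\setminus\{e^{*}\}$ with a pendant edge from $v^{*}$ to the leaf $a$; here I would additionally note that the leaf $x$ attaches on the $B$-side exactly as it does in $\Cr{\Y\cup\{x\}}$, since the stem for $x$ sits on that side.

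Finally, glueing $\Cr{A\cup\{b\}}$ at $b$ to $\Cr{B\cup\{a,x\}}$ at $a$ identifies the two marker leaves into a single internal vertex of degree two, which is then suppressed; this exactly reconstitutes the cut-edge $e^{*}$ and therefore reproduces $\Cr{\Y\cup\{x\}}$. The main obstacle I anticipate is the rigorous verification of the marker-leaf identity, namely that canonicalisation plays nicely with the collapse of the opposite side to a single pendant edge. One has to argue that no triangle or $4$-cycle of $\N$ straddles the cut-edge underlying $e^{*}$ (immediate, since $e^{*}$ is a cut-edge and blobs cannot cross it), and that the newly created pendant edge to the marker leaf is not inadvertently a cycle contraction edge (it is not, because its other endpoint is $u^{*}$ or $v^{*}$, whose local structure on the retained side is preserved by the induction).
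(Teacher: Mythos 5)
The paper itself gives no written proof of this lemma: it simply asserts the statement as a consequence of \cref{lem:unique_network_stem}, accompanied by the glueing definition and \cref{fig:glueing}, and then explains informally (after the lemma) how $\Cr{B\cup\{x,a\}}$ can be viewed as "cutting off" the $A$-side. So there is no paper proof to compare against line-by-line; what matters is whether your expansion is sound.

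Your argument does capture the intended decomposition and is essentially correct. The chain is right: being a pointing edge with orientation $uv$ means $A\mid B\cup\{x\}$ is a split in $\N|_{\Y\cup\{x\}}$ and hence in $\Cr{\Y\cup\{x\}}$; \cref{lem:unique_network_stem} places the unique canonical stem on the $B$-side, so the $A$-side of $\Cr{\Y}$ is unchanged in $\Cr{\Y\cup\{x\}}$; restricting to $A\cup\{b\}$ collapses the $(B\cup\{x\})$-side to a pendant leaf $b$, and symmetrically for $B\cup\{a,x\}$; and glueing reconstitutes the cut-edge $e^*$. Two small cleanups are warranted. First, the phrase "every up-down path in $\N$ between a leaf of $A$ and $b$ must traverse the structure underlying $e^*$" should be stated in $\N|_{\Y\cup\{x\}}$, not in $\N$: the split $A\mid B\cup\{x\}$ need not be a split of $\N$ (cut-edges of subnetworks need not lift to the full network, cf.~the discussion around \cref{def:restriction}), and the correct lever here is the general version of \cref{obs:restriction}, i.e. $\N|_{A\cup\{b\}}=(\N|_{\Y\cup\{x\}})|_{A\cup\{b\}}$. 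Second, the closing sentence "preserved by the induction" is a slip: there is no induction in this argument; what you mean is that the local structure at $u^*$ and $v^*$ on their respective retained sides is unchanged under restriction and canonicalisation. With those two points tightened, your proposal is a valid fleshing-out of the step the paper leaves implicit.
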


\begin{figure}[htb]
\centering
% \draw[black!40,fill=red!7,rounded corners=2mm,opacity=0.35] ($(207.center)$) -- ($(208.center)$) -- ($(209.center)$) -- ($(210.center)$) -- cycle;

%$\Cr{A \cup \{b, \textcolor{gray}{10} \}}$

\tikzstyle{special_edge3}=[-]
\tikzstyle{special_edge4}=[-, draw=red, thick]
\tikzstyle{special_edge5}=[draw=gray]
\tikzstyle{special_edge6}=[densely dashed, -{Latex[scale=.9]}, draw=gray]
\tikzstyle{special_edge7}=[densely dashed, draw=gray]
\tikzstyle{special_edge8}=[-, draw=red, ultra thick]
\tikzstyle{special_node2}=[circle, draw=gray, fill=gray, scale=0.275]
\tikzstyle{special_node3}=[circle, draw=gray, fill=white, scale=0.2]
\tikzstyle{semi-active}=[decoration={markings, mark=at position .5 with {\arrow{to}}},postaction={decorate}, ultra thick]

\begin{tikzpicture}[scale=0.5]
	\begin{pgfonlayer}{nodelayer}
		\node [style=none] (207) at (16, -2.25) {};
		\node [style=none] (208) at (16, -3.75) {};
		\node [style=none] (209) at (17.75, -3.75) {};
		\node [style=none] (210) at (17.75, -2.25) {};
		\node [style=none] (233) at (4.75, -1.5) {};
		\node [style=none] (234) at (4.75, -6.25) {};
		\node [style={small_label}] (235) at (5.25, -5.75) {$B$};
		\node [style={small_label}] (236) at (4.25, -5.75) {$A$};
		\node [style={internal_node}] (237) at (3.5, -3.25) {};
		\node [style={leaf_node}, label={left:1}] (238) at (1, -2.5) {};
		\node [style={leaf_node}, label={left:2}] (239) at (1, -4) {};
		\node [style={leaf_node}, label={left:3}] (240) at (3.5, -4.5) {};
		\node [style={leaf_node}, label={left:4}] (241) at (6.75, -5.25) {};
		\node [style={leaf_node}, label={right:5}] (242) at (7.75, -5) {};
		\node [style={medium_label}] (243) at (2.75, -1.25) {$\Cr{\X_7}$};
		\node [style={leaf_node}, label={right:6}] (244) at (7.75, -1.5) {};
		\node [style={internal_node}] (245) at (2, -3.25) {};
		\node [style={leaf_node}, label={left:7}] (246) at (6.75, -1.25) {};
		\node [style={internal_node}] (250) at (7, -2.25) {};
		\node [style={internal_node}] (251) at (7, -4.25) {};
		\node [style={internal_node}] (252) at (5.5, -3.25) {};
		\node [style={internal_node}] (253) at (14.25, -3.25) {};
		\node [style={leaf_node}, label={left:1}] (254) at (11.75, -2.5) {};
		\node [style={leaf_node}, label={left:2}] (255) at (11.75, -4) {};
		\node [style={leaf_node}, label={left:3}] (256) at (14.25, -4.5) {};
		\node [style={internal_node}] (257) at (12.75, -3.25) {};
		\node [style={leaf_node}, label={above:4}] (258) at (16.5, -3.25) {};
		\node [style={medium_label}] (259) at (13.5, -1.25) {$\Cr{A \cup \{4\}}$};
		\node [style={leaf_node}, label={left:4}] (261) at (20, -5.25) {};
		\node [style={leaf_node}, label={right:5}] (262) at (21, -5.25) {};
		\node [style={leaf_node}, label={right:6}] (263) at (22.5, -3.25) {};
		\node [style={leaf_node}, label={right:7}] (264) at (21, -1.25) {};
		\node [style={internal_node}] (265) at (19.5, -3.25) {};
		\node [style={internal_node}] (266) at (20.5, -4.25) {};
		\node [style={internal_node}] (267) at (21.5, -3.25) {};
		\node [style={leaf_node}, label={left:8}] (268) at (20, -1.25) {};
		\node [style={internal_node}] (269) at (20.5, -2.25) {};
		\node [style={leaf_node}, label={above:1}] (270) at (17.25, -3.25) {};
		\node [style={medium_label}] (271) at (20.5, 0) {$\Cr{B \cup \{1, 8 \}}$};
		\node [style={medium_label}] (272) at (17, -4.5) {\textcolor{red!40}{$\Cr{\X_8}$}};
		\node [style={leaf_node}, label={left:4}] (273) at (29, -5.25) {};
		\node [style={leaf_node}, label={right:5}] (274) at (30, -5.25) {};
		\node [style={leaf_node}, label={right:6}] (275) at (31.5, -3.25) {};
		\node [style={leaf_node}, label={right:7}] (276) at (30, -1.25) {};
		\node [style={internal_node}] (278) at (29.5, -4.25) {};
		\node [style={internal_node}] (279) at (30.5, -3.25) {};
		\node [style={leaf_node}, label={left:8}] (280) at (29, -1.25) {};
		\node [style={internal_node}] (281) at (29.5, -2.25) {};
		\node [style={special_node3}] (283) at (28.5, -3.25) {};
		\node [style={special_node2}, label={left:\textcolor{gray}{1}}] (284) at (27.25, -3.25) {};
		\node [style={medium_label}] (285) at (29.5, 0) {$\Cr{B \cup \{ \textcolor{gray}{1}, 8 \}}$};
		\node [style={small_label}] (286) at (6.5, -3.25) {$\Cycle$};
	\end{pgfonlayer}
	\begin{pgfonlayer}{edgelayer}
		\draw [style={special_edge7}] (233.center) to (234.center);
		\draw [style=edge] (238) to (245);
		\draw [style=edge] (245) to (239);
		\draw [style=edge] (245) to (237);
		\draw [style=edge] (240) to (237);
		\draw [style=edge] (250) to (246);
		\draw [style=edge] (250) to (244);
		\draw [style=edge] (250) to (251);
		\draw [style=edge] (251) to (241);
		\draw [style=edge] (251) to (242);
		\draw [style={ret_arc}] (250) to (252);
		\draw [style={ret_arc}] (251) to (252);
		\draw [style=semi-active] (237) to (252);
		\draw [style=edge] (254) to (257);
		\draw [style=edge] (257) to (255);
		\draw [style=edge] (257) to (253);
		\draw [style=edge] (256) to (253);
		\draw [style=edge] (253) to (258);
		\draw [style=edge, bend left] (269) to (267);
		\draw [style=edge, bend left] (267) to (266);
		\draw [style={ret_arc}, bend left] (266) to (265);
		\draw [style={ret_arc}, bend right] (269) to (265);
		\draw [style=edge] (261) to (266);
		\draw [style=edge] (262) to (266);
		\draw [style=edge] (267) to (263);
		\draw [style=edge] (269) to (264);
		\draw [style=edge] (269) to (268);
		\draw [style=edge] (265) to (270);
		\draw [style=edge] (281) to (276);
		\draw [style=edge] (281) to (280);
		\draw [style={special_edge6}, bend right] (281) to (283);
		\draw [style={special_edge6}, bend left] (278) to (283);
		\draw [style={special_edge5}] (284) to (283);
		\draw [style={special_edge3}] (275) to (279);
		\draw [style={special_edge3}] (278) to (274);
		\draw [style={special_edge3}] (278) to (273);
		\draw [style={special_edge4}, bend left] (281) to (279);
		\draw [style={special_edge8}, bend left] (279) to (278);
	\end{pgfonlayer}

 \draw[black!40,fill=red!7,rounded corners=2mm,opacity=0.35] ($(207.center)$) -- ($(208.center)$) -- ($(209.center)$) -- ($(210.center)$) -- cycle;

\end{tikzpicture}
\caption{\emph{Left:} The canonical subnetwork of the canonical network $\C$ from \cref{fig:network_stem} induced by its first seven leaves $\X_7$. The thick cut-edge is a pointing edge for leaf $8$ and induces the split $A|B$ with $A = \{1, 2, 3\}$ and $B = \{4, 5, 6, 7\}$. \emph{Middle:} The canonical subnetworks $\Cr{A \cup \{4\}}$ and $\Cr{B \cup \{1, 8\}}$. Glueing 
$\Cr{A \cup \{4\}}$ at $4$ to $\Cr{B \cup \{1, 8\}}$ at $1$ is done by identifying the leaves in the shaded box and then suppressing the resulting degree-2 vertex. This results in the canonical subnetwork of $\C$ induced by its first eight leaves $\X_8$. \emph{Right:} The canonical subnetwork $\Cr{B \cup \{8\}}$ (excluding the grey edges and vertices) with the weak stem path for leaf $1$ in red. Including the grey edges and vertices produces $\Cr{B \cup \{1, 8\}}$. Note that he thick red cut-edge in $\Cr{B \cup \{8\}}$ corresponds to the single non-reticulation edge of the cycle $\Cycle$ in $\Cr{\X_7}$, while the thinner red cut-edge is one of the four cut-edges incident to the end-points of the thick red cut-edge.}
\label{fig:glueing}
\end{figure}

Observe that to get $\Cr{B\cup \{x, a\}}$ in the lemma, we only need to attach $x$ to $\Cr{B \cup \{a\}}$, which in turn can be obtained from $\Cr{\Y}$ by identifying all vertices on the `$A$-side' of the cut-edge by the single leaf $a$. Thus, the lemma allows us to first cut off part of the network, then attach the new leaf to the remaining part, and afterwards glue the two parts together again. 

In the case where $uv$ points directly towards a reticulation $v$ of a cycle $\Cycle$ (as is the case in \cref{fig:glueing}), we can even get rid of the leaf $a$. That is, instead of attaching $x$ to $\Cr{B \cup \{a\}}$, we can first attach $x$ to $\Cr{B}$ and afterwards attach $a$ to its stem again to obtain $\Cr{B\cup \{x, a\}}$ (see e.g. the right part of \cref{fig:glueing}). Note that the cycle $\Cycle$ does not appear any more in $\Cr{B}$ since none of the leaves below its reticulation are in $B$. Thus, in some sense, we are `unfolding' the cycle $\Cycle$. This also solves the problem of attaching a leaf to a stem cycle mentioned in the discussion before \cref{lem:unique_network_stem}. We prove that this process can be done efficiently in the following lemma and refer to \cref{fig:glueing} for an illustration of parts of the proof.

\begin{lemma}\label{lem:close_cycle}
Let $\N$ be a semi-directed level-1 network on $\X$ and let $\Y \subseteq \X$ contain $k$ leaves. Let $uv$ be a cut-edge in the canonical network $\Cr{\Y}$ inducing the split $A|B$ and let $x \in \X \setminus \Y$, $a\in A$ and $b\in B$. If $uv$ is a pointing edge for $x$ with orientation $uv$ and $v$ is the reticulation of a cycle, then $\Cr{B \cup \{a, x\}}$ can be constructed from $\Cr{B \cup \{x\}}$ in $\bigO(k)$ time using $\bigO (1)$ quarnet-splits of $\N$.
\end{lemma}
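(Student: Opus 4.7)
The plan is to recognize the construction as the inverse of deleting $a$ from $\Cr{B\cup\{a,x\}}$. Since the set $A$ is below the reticulation $v$ of $\Cycle$ and $a$ is the only leaf of $A$ that appears, deleting $a$ leaves no leaves below $v$; the reticulation and its two reticulation edges then drop out (they no longer lie on any up-down path between leaves of $B\cup\{x\}$) and the cycle $\Cycle$ collapses. The two cycle-vertices $u_1^*, u_2^*$ of $\Cycle$ adjacent to $v$ drop from degree $4$ to degree $3$ in $\Cr{B\cup\{x\}}$, while the non-reticulation arc of $\Cycle$ survives as a cut-path between them. To reverse this, I will locate this cut-path and reinstate $v$ together with~$a$.

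First I would read off the structure of $\Cycle$ from $\Cr{\Y}$, which the overall algorithm has at hand when this lemma is invoked. In a large cycle, after canonical cycle-contraction, each of $u_1^*, u_2^*$ has two external cut-edges in $\Cr{\Y}$ whose incident subtrees carry identifiable leaf sets $S_1, S_2 \subseteq B$. Picking any $b_1 \in S_1$ and $b_2 \in S_2$, I would then locate in $\Cr{B \cup \{x\}}$ the unique cut-edge whose split has $S_i$ (or $S_i \cup \{x\}$, in case $x \in S_i$) on one side; its endpoint on the non-$S_i$ side is the vertex $v_i'$ that plays the role of $u_i^*$ in $\Cr{B \cup \{x\}}$. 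A single traversal of $\Cr{B \cup \{x\}}$ in $\bigO(k)$ time suffices to find both. Equipped with $v_1', v_2'$, I construct $\Cr{B\cup\{a,x\}}$ by adding a new reticulation vertex $v$ together with directed reticulation edges $v_1'v, v_2'v$ and a leaf edge $va$; this is precisely the recipe in \cref{lem:unique_network_stem}b applied to the weak stem path between $v_1'$ and $v_2'$, and it reintroduces $\Cycle$ with $a$ as the sole leaf below the reticulation.

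The hard part will be arguing that this identification of $v_1', v_2'$ remains unambiguous when $x$ is attached inside or near the former cycle region. Since $x$ lies on the $B$-side of $uv$ but is not below $v$ (the leaves below $v$ are precisely those of $A$), attaching $x$ in $\Cr{B\cup\{x\}}$ either subdivides one of the non-reticulation arc edges of the former $\Cycle$ or hangs off an external subtree of some cycle-vertex, and in either case the partition induced by $S_1, S_2$ still singles out the correct $v_1', v_2'$. A constant number of quarnet-splits, such as checking whether $ab_1 \mid b_2 b$ is a quarnet-split of $\N$, can act as a sanity check that the new reticulation indeed sits between $v_1'$ and $v_2'$ and can dispose of the few degenerate cases (for instance when some $S_i$ is a singleton attached directly as a leaf). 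This yields the stated $\bigO(k)$ time and $\bigO(1)$ quarnet-split bounds.
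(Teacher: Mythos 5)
Your overall plan---reverse the deletion of $a$, locate where the reticulation and its two reticulation edges must be re-inserted, and invoke \cref{lem:unique_network_stem}(b)---does follow the same structure as the paper's argument. Both approaches reduce the problem to finding the weak stem path for $a$ in $\Cr{B\cup\{x\}}$ and attaching $a$ to its end vertices.

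However, your method of \emph{locating} that weak stem path has a genuine gap. You propose to read off the leaf sets $S_1,S_2\subseteq B$ hanging off the two degree-$4$ vertices of $\Cycle$ in $\Cr{\Y}$ and then find the endpoints $v_1',v_2'$ as the endpoints of the cut-edges in $\Cr{B\cup\{x\}}$ carrying $S_i$ (or $S_i\cup\{x\}$). Your claim that ``in either case the partition induced by $S_1,S_2$ still singles out the correct $v_1',v_2'$'' fails precisely when $x$ attaches so that the cycle in $\Cr{\Y\cup\{x\}}$---and hence in $\Cr{B\cup\{a,x\}}$---gains a vertex, i.e.\ becomes a $(c+1)$-cycle. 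In that case the weak stem path for $a$ in $\Cr{B\cup\{x\}}$ is strictly longer than the cut-path corresponding to the non-reticulation edges of $\Cycle$, and the $S_i$-based identification stops one vertex short. This is not a fringe case: \cref{fig:glueing} in the paper depicts exactly this situation. There, the thick red cut-edge corresponds to the single non-reticulation edge of $\Cycle$ in $\Cr{\X_7}$, yet the weak stem path for $a=1$ also contains the thinner red edge. With $S_1=\{6,7\}$ your procedure would find the cut-edge separating $\{6,7,8\}$ from $\{4,5\}$ and place the corresponding endpoint one vertex too early, reconstructing a $3$-cycle instead of the correct $4$-cycle in $\Cr{B\cup\{1,8\}}$.

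The paper closes this gap by noting that the known cut-path has $c-1$ or $c$ vertices, the true weak stem path also has $c-1$ or $c$ vertices, and the former is contained in the latter, so they can differ by at most one vertex; it therefore suffices to test the at most four cut-edges incident to the end vertices of the known path via \cref{cor:find_active_passive}, using $\bigO(1)$ quarnet-splits in total. Your ``sanity check'' gestures at something like this, but it is not worked out: the suggested quarnet-split $a b_1\mid b_2 b$ does not distinguish where the path should terminate, and without an argument bounding the number of possible extensions you do not secure the $\bigO(1)$ quarnet-split guarantee.
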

\begin{proof}
Let $\Cycle$ be the $c$-cycle (with $c \geq 3$) in $\Cr{\Y}$ of which $v$ is the reticulation and note that $\Cycle$ (and its reticulation~$v$) also exist in $\Cr{B \cup \{a\}}$ (which is just a canonical subnetwork of $\Cr{\Y}$). Since $uv$ is a pointing edge for $x$ in $\Cr{\Y}$, the leaf $x$ is not attached somewhere below the reticulation $v$ when constructing $\Cr{\Y \cup \{x\}}$ from $\Cr{\Y}$ (see \cref{lem:unique_network_stem}). Therefore, to construct $\Cr{B \cup \{a, x\}}$ from $\Cr{B \cup \{x\}}$ we need to attach $a$ such that it creates a cycle with $a$ being the only leaf below its reticulation. The resulting cycle will then be a $c$-cycle or $(c+1)$-cycle (depending on whether $x$ is attached to the cycle or somewhere else). By \cref{lem:unique_network_stem}, this means that we need to attach $a$ to a weak stem path for $a$ in $\Cr{B \cup \{x\}}$. In particular, this weak stem path has to be a path of $c-1$ or $c$ vertices for the new cycle to be a $c$-cycle or $(c+1)$-cycle. We will now describe how to find this weak stem path in $\bigO( k)$ time using $\bigO (1)$ quarnet-splits, which will then prove the lemma.

Note that every non-reticulation edge of $\Cycle$ in $\Cr{B \cup \{a\}}$ (or equivalently in $\Cr{\Y}$) corresponds to a unique cut-edge in $\Cr{B}$ (see e.g. \cref{fig:glueing}). In $\Cr{B \cup \{x\}}$ these cut-edges still exist, although one of them could have been subdivided in the case that $x$ was attached to it. Clearly, for the resulting cycle in $\Cr{B \cup \{a, x\}}$ to be compatible with the cycle $\Cycle$ in $\Cr{B \cup \{a\}}$, all of these cut-edges have to be part of the weak stem path for $a$. Now note that these cut-edges form a path in $\Cr{B \cup \{x\}}$ and that there are either $c-2$ or $c-1$ of them (depending on whether one of them was subdivided by an attachment of $x$). Therefore, we have already found a path of $c-1$ or $c$ vertices containing only weak stem edges. Recall that we know that the weak stem path for $a$ has length at most $c$. Therefore, it is sufficient to check whether the four cut-edges incident to the end-vertices of the path are weak stem edges. By \cref{cor:find_active_passive} this takes $\bigO(k)$ time and $\bigO( 1)$ quarnet-splits for each of the four cut-edges.
\end{proof}

In light of \cref{cor:glue_restrictions} we ideally want to find cut-edges that split the leaf set in a `balanced' way, thus cutting off roughly half of the leaves each time we recurse. This would allow us to get the desired logarithm in the number of quarnet-splits that are used. Unfortunately, such an edge does not always exist (consider e.g. a canonical network with one very large cycle as its only internal blob). However, if we choose our cut-edges smartly, we can use \cref{lem:close_cycle} to `unfold' such a cycle and find a `balanced' cut-edge within the unfolded cycle. To formalize this we will introduce some new `central' structures in canonical networks.

Given a canonical network $\C$ on $\X$, an internal blob $\B$ is a \emph{central blob} if every component of $\C - \B$ contains at most half of the leaves of $\X$. Since a canonical network is obtained from a level-1 network, a central blob is either a \emph{central vertex} or a \emph{central cycle}. Given a central blob, we let a cut-edge $uv$ incident to it be a \emph{central edge}. Clearly, if $uv$ is a central edge inducing the split $A|B$ (where $u$ is in the blob and so $A$ is on the side of the blob) then $|\X| \geq 3$, and so $|A|\geq 2$ and $|B| \leq \frac12 |\X|$. A central edge $uv$ incident to a central blob $\B$ (where $u$ is in $\B$) of a canonical network is \emph{nice} if $uv$ satisfies one of the following two statements: (i) $\B$ is a central vertex, $|A|\leq \frac34 |\X|$ and $|B| \geq 2$; (ii) $\B$ is a central cycle and $uv$ is the cut-edge incident to the reticulation of this cycle. See the left part of \cref{fig:central_edges} for an illustration of central blobs and central edges. Note that a canonical network does not always have both a central vertex and a central cycle.

\begin{figure}[htb]
\centering
\tikzstyle{special_edge}=[-, draw=customred, very thick]
\tikzstyle{special_edge2}=[-, draw=customgreen, thick]
\tikzstyle{special_edge3}=[densely dashed, -{Latex[scale=.7]}, draw=customgreen, thick]
\tikzstyle{special_node}=[circle, draw=customgreen, fill=customgreen, scale=0.45]

\tikzstyle{special_edge4}=[draw=orange, thick]
\tikzstyle{special_edge5}=[draw=gray]
\tikzstyle{special_edge6}=[densely dashed, -{Latex[scale=.9]}, draw=gray]
\tikzstyle{special_edge7}=[draw=orange, densely dotted, very thick]
\tikzstyle{special_node2}=[circle, draw=gray, fill=gray, scale=0.275]
\tikzstyle{special_node3}=[circle, draw=gray, fill=white, scale=0.2]

\begin{tikzpicture}[scale=0.5]
	\begin{pgfonlayer}{nodelayer}
		\node [style={internal_node}] (0) at (5.5, 3.5) {};
		\node [style={internal_node}] (1) at (9, 3.5) {};
		\node [style={internal_node}] (2) at (6.5, 2) {};
		\node [style={internal_node}] (3) at (8, 2) {};
		\node [style={internal_node}] (4) at (8, 5) {};
		\node [style={internal_node}] (5) at (6.5, 5) {};
		\node [style={leaf_node}, label={right:16}] (7) at (5.75, 6) {};
		\node [style={leaf_node}, label={right:9}] (8) at (5.75, 1) {};
		\node [style={leaf_node}, label={left:10}] (9) at (8, 0.75) {};
		\node [style={leaf_node}, label={right:11}] (10) at (9, 1.25) {};
		\node [style={leaf_node}, label={left:15}] (11) at (8, 6.25) {};
		\node [style={leaf_node}, label={right:14}] (12) at (9, 5.75) {};
		\node [style={internal_node}] (14) at (3.75, 5) {};
		\node [style={leaf_node}, label={above:1}] (15) at (4.5, 5) {};
		\node [style={leaf_node}, label={above:3}] (16) at (3, 5) {};
		\node [style={leaf_node}, label={above:2}] (17) at (3.75, 5.75) {};
		\node [style={internal_node}] (18) at (0.25, 3.5) {};
		\node [style={internal_node}] (19) at (-0.75, 4.25) {};
		\node [style={internal_node}] (20) at (-0.75, 2.75) {};
		\node [style={leaf_node}, label={left:6}] (21) at (-1.5, 2) {};
		\node [style={leaf_node}, label={right:7}] (22) at (-0.5, 1.75) {};
		\node [style={leaf_node}, label={left:5}] (23) at (-1.5, 5) {};
		\node [style={leaf_node}, label={right:4}] (24) at (-0.5, 5.25) {};
		\node [style={small_label}] (25) at (7.25, 3.5) {$\Cycle$};
		\node [style={special_node}, label={below:$v$}] (28) at (3.75, 3.5) {};
		\node [style={internal_node}] (29) at (2, 3.5) {};
		\node [style={leaf_node}, label={right:8}] (30) at (2, 2.25) {};
		\node [style={internal_node}] (31) at (10.25, 3.5) {};
		\node [style={leaf_node}, label={right:13}] (32) at (11, 4.25) {};
		\node [style={leaf_node}, label={right:12}] (33) at (11, 2.75) {};
		\node [style={main_label}] (58) at (1.25, 6.25) {$\C$};
		\node [style={internal_node}] (71) at (21.75, 3.5) {};
		\node [style={internal_node}] (73) at (22.75, 2) {};
		\node [style={internal_node}] (74) at (24.25, 2) {};
		\node [style={internal_node}] (75) at (24.25, 5) {};
		\node [style={internal_node}] (76) at (22.75, 5) {};
		\node [style={leaf_node}, label={right:16}] (77) at (22, 6) {};
		\node [style={leaf_node}, label={right:9}] (78) at (22, 1) {};
		\node [style={leaf_node}, label={left:10}] (79) at (24.25, 0.75) {};
		\node [style={leaf_node}, label={right:11}] (80) at (25.25, 1.25) {};
		\node [style={leaf_node}, label={left:15}] (81) at (24.25, 6.25) {};
		\node [style={leaf_node}, label={right:14}] (82) at (25.25, 5.75) {};
		\node [style={internal_node}] (83) at (20, 5) {};
		\node [style={leaf_node}, label={above:1}] (84) at (20.75, 5) {};
		\node [style={leaf_node}, label={above:3}] (85) at (19.25, 5) {};
		\node [style={leaf_node}, label={above:2}] (86) at (20, 5.75) {};
		\node [style={internal_node}] (87) at (16.5, 3.5) {};
		\node [style={internal_node}] (88) at (15.5, 4.25) {};
		\node [style={internal_node}] (89) at (15.5, 2.75) {};
		\node [style={leaf_node}, label={left:6}] (90) at (14.75, 2) {};
		\node [style={leaf_node}, label={right:7}] (91) at (15.75, 1.75) {};
		\node [style={leaf_node}, label={left:5}] (92) at (14.75, 5) {};
		\node [style={leaf_node}, label={right:4}] (93) at (15.75, 5.25) {};
		\node [style={small_label}] (94) at (23.5, 3.5) {$\Cycle$};
		\node [style={internal_node}] (95) at (20, 3.5) {};
		\node [style={internal_node}] (96) at (18.25, 3.5) {};
		\node [style={leaf_node}, label={right:8}] (97) at (18.25, 2.25) {};
		\node [style={main_label}] (101) at (17.5, 6.25) {$\C$};
		\node [style={special_node3}] (102) at (25.25, 3.5) {};
		\node [style={special_node3}] (103) at (26.5, 3.5) {};
		\node [style={special_node2}, label={right:13}] (104) at (27.25, 4.25) {};
		\node [style={special_node2}, label={right:12}] (105) at (27.25, 2.75) {};
	\end{pgfonlayer}
	\begin{pgfonlayer}{edgelayer}
		\draw [style=edge] (14) to (15);
		\draw [style=edge] (14) to (17);
		\draw [style=edge] (14) to (16);
		\draw [style=edge] (19) to (20);
		\draw [style=edge] (19) to (24);
		\draw [style=edge] (19) to (23);
		\draw [style=edge] (20) to (21);
		\draw [style=edge] (20) to (22);
		\draw [style={ret_arc}] (19) to (18);
		\draw [style={ret_arc}] (20) to (18);
		\draw [style={special_edge2}, bend left=345] (4) to (5);
		\draw [style={special_edge2}, bend left=15] (3) to (2);
		\draw [style={special_edge2}, bend left] (2) to (0);
		\draw [style=edge] (14) to (28);
		\draw [style=edge] (28) to (29);
		\draw [style=edge] (29) to (18);
		\draw [style=edge] (30) to (29);
		\draw [style=edge] (32) to (31);
		\draw [style=edge] (31) to (33);
		\draw [style=edge] (5) to (7);
		\draw [style=edge] (2) to (8);
		\draw [style={special_edge2}, bend right] (5) to (0);
		\draw [style=edge] (3) to (9);
		\draw [style=edge] (3) to (10);
		\draw [style=edge] (11) to (4);
		\draw [style=edge] (4) to (12);
		\draw [style={special_edge3}, bend left] (4) to (1);
		\draw [style={special_edge3}, bend right] (3) to (1);
		\draw [style={special_edge}] (28) to (0);
		\draw [style={special_edge}] (1) to (31);
		\draw [style=edge] (83) to (84);
		\draw [style=edge] (83) to (86);
		\draw [style=edge] (83) to (85);
		\draw [style=edge] (88) to (89);
		\draw [style=edge] (88) to (93);
		\draw [style=edge] (88) to (92);
		\draw [style=edge] (89) to (90);
		\draw [style=edge] (89) to (91);
		\draw [style={ret_arc}] (88) to (87);
		\draw [style={ret_arc}] (89) to (87);
		\draw [style=edge] (83) to (95);
		\draw [style=edge] (95) to (96);
		\draw [style=edge] (96) to (87);
		\draw [style=edge] (97) to (96);
		\draw [style=edge] (76) to (77);
		\draw [style=edge] (73) to (78);
		\draw [style=edge] (95) to (71);
		\draw [style={special_edge4}, bend right] (71) to (73);
		\draw [style={special_edge4}, bend right=15] (73) to (74);
		\draw [style={special_edge4}, bend left=345] (75) to (76);
		\draw [style={special_edge7}, bend right] (76) to (71);
		\draw [style={special_edge6}, bend left] (75) to (102);
		\draw [style={special_edge6}, bend right] (74) to (102);
		\draw [style={special_edge5}] (102) to (103);
		\draw [style={special_edge5}] (103) to (104);
		\draw [style={special_edge5}] (103) to (105);
		\draw [style={special_edge4}] (81) to (75);
		\draw [style={special_edge4}] (82) to (75);
		\draw [style={special_edge4}] (74) to (79);
		\draw [style={special_edge4}] (74) to (80);
	\end{pgfonlayer}
\end{tikzpicture}
\caption{A canonical network $\C$ on $\X = \{1, \ldots, 16 \}$. \emph{Left:} The two central blobs of~$\C$: a central vertex $v$ in pink and a central cycle $\Cycle$ in pink. The left thick blue edge is a nice central edge incident to~$v$, and the right thick blue edge is a nice central edge incident to $\Cycle$. \emph{Right:} The spine edges of the cycle $\Cycle$ in orange with the dotted orange edge being a nice spine edge of $\Cycle$. The filled black leaves are the leaves $\Y = \{1, \ldots, 11, 14, 15, 16\}$ not below the reticulation of $\Cycle$. Deleting the grey edges and vertices forms the canonical subnetwork $\Cr{\Y}$, in which the nice spine edge is a cut-edge.}
\label{fig:central_edges}
\end{figure}

By definition, a cycle $\Cycle$ in a canonical network $\C$ will always contain two degree-4 vertices. We call the four cut-edges incident to the two degree-4 vertices the \emph{outer spine edges} of the cycle and the non-reticulation edges within the cycle the \emph{inner spine edges}. Together, they form the \emph{spine edges} of $\Cycle$ (see \cref{fig:central_edges}). If we let $\Y \subset \X$ be the set of leaves not below the reticulation of $\Cycle$, then every spine edge of $\Cycle$ \emph{corresponds} to a unique cut-edge in $\Cr{\Y}$. The \emph{spine split} $A'|B'$ induced by a spine edge of $\Cycle$ is the split induced by the corresponding cut-edge in $\Cr{\Y}$. Thus, $A'|B'$ is a partition of $\Y$ and not of $\X$. A spine edge of a central cycle is \emph{nice} if it has spine split $A'|B'$ such that $2 \leq |A'| \leq \frac34 |\X|$ and $2 \leq |B'| \leq \frac34 |\X|$ (see \cref{fig:central_edges}). That means that a nice spine edge is `balanced' with respect to the leaves in $\X$ (and not necessarily to those in $\Y \subset \X$). \cref{lem:nice_central_edge} shows that we can efficiently find the different types of nice edges.

\begin{lemma}\label{lem:nice_central_edge}
Given a canonical level-1 network $\C$ on $\X$ with $n\geq 5$ leaves, one can find a central blob $\B$ and a nice central edge incident to $\B$ in $\bigO(n)$ time. Furthermore, if $\B$ is a central cycle, one can find a nice spine edge of $\B$ in $\bigO(n)$ time.
\end{lemma}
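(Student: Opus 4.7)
The plan is to handle the three claims separately, each in $\bigO(n)$ time via a single DFS on $\C$, plus a linear walk around the central cycle for the third part.

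For finding a central blob of $\C$, I would compute the tree-of-blobs $\T(\C)$ in linear time. A central blob of $\C$ corresponds precisely to an internal vertex $v$ of $\T(\C)$ such that every component of $\T(\C)-v$ contains at most $n/2$ leaves. I would prove existence by a standard centroid argument: at any non-centroid internal vertex $v$ the unique component of $\T(\C)-v$ with $>n/2$ leaves (uniqueness since two such components would give $>n$ leaves total) defines a ``heavy direction'' at $v$; following these directions cannot reverse (two adjacent heavy directions would also yield $>n$ leaves) and therefore terminates at a centroid. Algorithmically: root $\T(\C)$ arbitrarily, compute subtree leaf-counts via one DFS, then identify in $\bigO(n)$ the unique vertex all of whose incident components have $\leq n/2$ leaves.

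For finding a nice central edge at $\B$, I would split into the two cases in the definition. If $\B$ is a central vertex, it has degree $d\in\{3,4\}$ in $\C$, and its $d$ incident cut-edges partition $\X$ into components of sizes $n_1,\ldots,n_d$ each at most $n/2$ and summing to $n$. The cut-edge to the largest component has $|B|=n_{\max}\geq n/d\geq n/4$, hence $|A|\leq 3n/4$; integrality together with $n\geq 5$ forces $n_{\max}\geq 2$, so the edge is nice. If $\B$ is a central cycle, locate its reticulation in linear time by scanning $\B$ for the two reticulation edges, and return the unique cut-edge incident to the reticulation; this edge is nice by definition.

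For finding a nice spine edge of a central cycle $\B$, I would walk once around $\B$ to identify its reticulation $v_0$, its two degree-4 vertices $u_1,u_{k-1}$, and its interior cycle vertices $u_2,\ldots,u_{k-2}$ in cyclic order. Read off from the DFS the sizes $s_1^a,s_1^b$ of the two subtrees hanging off $u_1$, the analogous $s_{k-1}^a,s_{k-1}^b$ at $u_{k-1}$, the interior sizes $s_i$, and the number $r\geq 1$ of leaves below $v_0$; by centrality every such individual size is at most $n/2$, and $|\Y|=n-r$. Compute the prefix sums $L_i = s_1^a + s_1^b + s_2 + \cdots + s_i$ for $i = 1,\ldots,k-2$ in one pass; these are the left-sizes of the splits induced by the inner spine edges $u_i u_{i+1}$, and they form a strictly increasing sequence. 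An inner spine split is nice exactly when $L_i \in [\,|\Y|-3n/4,\ 3n/4\,]$, since both sides are automatically $\geq 2$ as they dominate $s_1^a+s_1^b$ respectively $s_{k-1}^a+s_{k-1}^b$. This interval has width $n/2+r > n/2 \geq s_{i+1}=L_{i+1}-L_i$, so the increasing sequence $L_i$ cannot jump across it: either some $L_i$ lies in it (giving a nice inner spine edge), or the entire sequence lies strictly above (so $s_1^a+s_1^b>3n/4$, forcing $s_1^a,s_1^b>n/4\geq 2$ for $n\geq 5$, and the outer spine edge at $u_1$ with split $s_1^a \mid |\Y|-s_1^a$ is nice), or it lies strictly below (a symmetric argument yields a nice outer spine edge at $u_{k-1}$). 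Testing the inner edges and then the four outer spine edges gives an $\bigO(n)$-time procedure.

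The main obstacle is this last existence argument. Niceness combines a two-sided upper bound $\leq 3n/4$ on the split sizes with a lower bound $\geq 2$, while the only hypothesis from centrality is the per-component bound $\leq n/2$. The width-versus-step inequality above, together with the symmetric outer-edge fallback and a small integrality check in the regime $n\in\{5,6,7\}$ to ensure the $\geq 2$ condition is automatic whenever the relevant subtree size exceeds $n/4$, is what ties these constraints together.
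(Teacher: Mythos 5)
Your proof is correct and follows the same high-level plan as the paper's: compute the tree-of-blobs and find a leaf-centroid to locate a central blob; take the reticulation's cut-edge (central cycle) or the max-component cut-edge (central vertex) as the nice central edge; and, for a central cycle, walk along the unfolded spine and use the hanging-subtree sizes to locate a nice spine edge. The first two parts coincide with the paper (which invokes a known linear-time leaf-centroid algorithm where you re-derive existence via the heavy-direction argument — same content). Your treatment of the third part, however, genuinely differs and is the more careful of the two. The paper scans prefix sums $|Y_1|, |Y_1\cup Y_2|,\ldots$ and stops at the first one to reach $n/4$, then asserts both sides of the resulting split lie in $[2,3n/4]$. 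You instead characterize niceness of an inner spine edge exactly — its left prefix sum must lie in $[\,|\Y|-3n/4,\;3n/4\,]$, with the $\geq 2$ lower bounds automatic because each side contains a full pair of degree-4 subtrees — observe this interval has width $n/2+|Z|>n/2$, strictly wider than any step of the monotone prefix-sum sequence, and conclude the sequence must either hit the interval or lie wholly above or below it, in which case you fall back to an outer spine edge at the relevant degree-4 vertex and verify niceness directly. This explicitly handles corner cases that the paper's one-sided stopping rule leaves implicit: whether a stopping index $\ell<k$ actually exists, and whether $|B'|$ can collapse to $1$ when $Y_k$ is a singleton (both can occur for the literal procedure unless the labeling of the two outer subtrees at a degree-4 vertex is chosen carefully). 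Both routes are $\bigO(n)$; yours pays with a little more case analysis. The small arithmetic slip you flag yourself (``$n/4\geq 2$'' only holds for $n\geq 8$; for $n\in\{5,6,7\}$ one needs integrality of the subtree size) is a real point, and your closing remark resolves it correctly.
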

\begin{proof}
The following algorithm shows the existence of a central blob and a nice central edge and how to find them. First, create the tree-of-blobs $\T (\C)$ in $\bigO (n)$ time. An internal vertex $v$ of $\T (\C)$ is \emph{leaf-centroid} if every component of $\T (\C) - v$ contains at most $\frac{n}{2}$ leaves. A straightforward greedy-type algorithm from \cite[Lem.\,5]{bodlaender1995} shows that a leaf-centroid always exists and how to find one in linear time. Clearly, an internal vertex of $\T (\C)$ is a leaf-centroid if and only if the blob it represents is a central blob. Thus, we have found a central blob $\B$. If $\B$ is a central cycle with reticulation $r$, we can pick the cut-edge incident to $r$ as our nice central edge. Otherwise, if $\B$ is a central vertex $v$, we pick a cut-edge $vw$ (inducing the split $A|B$) that cuts off most of the leaves from $v$. That is, $|B|$ is as large as possible. Since $v$ has at most four incident cut-edges, we have $|A|\leq \frac34 n$. Furthermore, since $n \geq 5$ and $vw$ cuts off most of the leaves, we must also have $|B| \geq 2$. Thus, $vw$ is a nice central edge.

Now assume that $\B$ is a central cycle and let $Y_1 | Y_2 | \ldots | Y_{k-1} | Y_k | Z$ be the partition of $\X$ induced by $\B$. Assume that $Z$ is the set of leaves below the reticulation, and that the $Y_i$ follow the order of the cycle. Note that $Y_1, Y_2, Y_{k-1}$ and $Y_k$ correspond to the outer spine edges of the cycle and that the order of $Y_1$ and $Y_2$ (resp. $Y_{k-1}$ and $Y_k$) is arbitrary. Then, $\Y = \bigcup_{i=1}^k Y_i$ is the set of leaves not below the reticulation. Note that for every $\ell \in \{ 1, \ldots, k-1\}$ there is a cut-edge in $\Cr{\Y}$ inducing the split $Y_1 \cup \cdots \cup Y_{\ell} | Y_{\ell+1} \cup \cdots \cup Y_k$. In particular, these are all edges corresponding to spine edges of~$\B$ (see \cref{fig:central_edges}).

Finding a nice spine edge of $\B$ is now fairly simple. Since $\B$ was a central cycle, we know that $1 \leq |Y_i| \leq \frac{n}{2}$ for each $i$. Our first candidate is the cut-edge that cuts off $Y_1$. If $|Y_1| \geq \frac{n}{4}$, that will be our nice spine edge. Otherwise, we move to the cut-edge that cuts off $Y_1$ and $Y_2$. Again, if $|Y_1 \cup Y_2| \geq \frac{n}{4}$ we stop, and if $|Y_1 \cup Y_2| < \frac{n}{4}$ we move to the next cut-edge. We keep continuing this procedure until $|Y_1 \cup \cdots \cup Y_{\ell}| \geq \frac{n}{4}$ for some $\ell < k$. Then, the cut-edge $e$ inducing the split $A'|B'$ (where $A' = Y_1 \cup \cdots \cup Y_{\ell}$ and $B'= Y_{\ell+1} \cup \cdots \cup Y_k$) has $|A'|, |B'| \leq \frac34 n$. Furthermore, since $n\geq 5$, this also forces $|A'|, |B'| \geq 2$. Thus, $e$ is a nice spine edge of $\B$.
\end{proof}

Finally, we are ready to devise \cref{alg:canonical_network_recursive} which recursively adds a leaf to a canonical network. In the spirit of \cref{alg:blobtree}, our final reconstruction algorithm will then repeatedly apply \cref{alg:canonical_network_recursive} to attach $n$ leaves (in any arbitrary order) to a canonical network in $\bigO( n^2)$ time using $\bigO (n \log n)$ quarnet-splits. Before proving this in \cref{thm:alg_canonical_network}, we first roughly sketch how \cref{alg:canonical_network_recursive} works.  We refer to \cref{fig:alg_canonical} for an illustration of the recursive part of the algorithm. Apart from the quarnet-splits, the canonical network $\Cr{\Y}$ and the new leaf $x$, the algorithm takes as optional input a cut-edge $uv$. At the top level of the recursion this cut-edge is not provided, so we will leave this variation to the end of our discussion. 

Unless we are in the base case, we first find a central blob $\B$ and incident central edge $uv$, after which we check whether it is a strong stem edge, weak stem edge or pointing edge. If $uv$ turns out to be a strong stem edge, we can attach $x$ to it (see \cref{lem:unique_network_stem,fig:network_stem}). In the case that $uv$ is a weak stem edge, we first find the weak stem path, after which we attach $x$ to it (again see \cref{lem:unique_network_stem,fig:network_stem}). 

The final case is when $uv$ (with induced split $A|B$) is a pointing edge. Then, it is enough to construct $\Cr{B \cup \{x, a\}}$, which can subsequently be glued to $\Cr{A \cup \{b\}}$ to create $\Cr{\Y \cup \{x\}}$ on line \ref{line:glue} (see also the middle part of \cref{fig:glueing}). On the one hand, if $uv$ is oriented away from the blob $\B$ (i.e. $u$ is in $\B$) or $\B$ is a central vertex, we simply attach $x$ to $\Cr{B \cup \{a\}}$ by recursion. Since $uv$ is a nice central edge this will always cut off some leaves. In the other case, $\B$ is a central cycle and $uv$ is oriented towards $\B$ with $v$ the reticulation of $\B$. We cannot recurse directly on $\Cr{B \cup \{a\}}$ since it might not cut off anything (consider again a network with one large cycle). So, we recurse on $\Cr{B}$ and attach $x$. Here, we also provide a cut-edge $st$ which will take over the role of $uv$ in the next level of recursion. Note that we do not provide an incident blob, so in the next level of recursion $\B$ is undefined. After $\Cr{B \cup \{x\}}$ is created, we attach the leaf $a$ again (see \cref{lem:close_cycle} and the right part of \cref{fig:glueing}).

\begin{algorithm}[htb]
\caption{Recursive procedure to attach a leaf to a canonical network}
\label{alg:canonical_network_recursive}
\Input{quarnet-splits of a semi-directed level-1 network $\N$ on $\X$; canonical network $\Cr{\Y}$ on $\Y \subset \X$; leaf $x \in \X \setminus \Y$ to be attached; \emph{optional:} a non-trivial cut-edge $uv$ corresponding to a nice spine edge of the last recursive call}
\Output{canonical network $\Cr{\Y \cup \{x\}}$}

\If{$|\Y| \in \{2, 3, 4\}$}{
    find the canonical stem of $\Cr{\Y}$ by determining for each edge of $\Cr{\Y}$ whether it is a strong stem edge, weak stem edge or pointing edge, using the quarnet-splits of $\N$ and \cref{cor:find_active_passive}\\
    $\Cr{\Y \cup \{x\}}$ is constructed from $\Cr{\Y}$ as described in \cref{lem:unique_network_stem}\\
    
    \Return{$\Cr{\Y \cup \{x\}}$}
}
\If{no cut-edge $uv$ is provided}{
    $\B \gets$ central blob of $\Cr{\Y}$, using \cref{lem:nice_central_edge} \\
    $uv \gets$ nice central edge of $\Cr{\Y}$ incident to $\B$, using \cref{lem:nice_central_edge} \label{line:central}
}
determine whether $uv$ is a strong stem edge, weak stem edge or pointing edge, using the quarnet-splits of $\N$ and \cref{cor:find_active_passive}\\

\If{$uv$ is a strong stem edge}{
    $\Cr{\Y \cup \{x\}}$ is constructed from $\Cr{\Y}$ by attaching $x$ to $uv$ \tcp*{see \cref{fig:network_stem,fig:alg_canonical}}
}

\ElseIf{$uv$ is a weak stem edge}{
    determine the weak stem path in $\Cr{\Y}$ containing $uv$, using the quarnet-splits of $\N$ and \cref{lem:find_passive_cutpath} \\
    $\Cr{\Y \cup \{x\}}$ is constructed from $\Cr{\Y}$ by attaching $x$ to the end vertices of the cut-path \tcp*{see \cref{fig:network_stem}}
}

\ElseIf(\tcp*[f]{otherwise, reverse roles of $u,v$}){$uv$ is a pointing edge with orientation $uv$\label{line:case3}}{
    $A|B \gets$ split induced by $uv$ \tcp*{this implies that $A$ is on the side of $u$}
    $a \gets$ arbitrary leaf from $A$; $b \gets$ arbitrary leaf from $B$ \\

    \If{$\B$ is undefined \KwOr $u$ is in $\B$ \KwOr $\B$ is a central vertex \label{line:subcase1}}{
        $\Cr{B \cup \{x, a\}}$ is constructed from $\Cr{B \cup \{a\}}$ by attaching $x$ with (recursion on) \cref{alg:canonical_network_recursive}
    }
    \Else{
        $st \gets$ non-trivial cut-edge of $\Cr{B}$ corresponding to a nice spine edge of $\B$ in $\Cr{\Y}$, using \cref{lem:nice_central_edge}\\
        $\Cr{B \cup \{x\}}$ is constructed from $\Cr{B}$ by attaching $x$ with (recursion on) \cref{alg:canonical_network_recursive} and providing $st$ as optional argument \label{line:spine_recurse}\\
        $\Cr{B \cup \{x, a\}}$ is constructed from $\Cr{B \cup \{x\}}$ by attaching $a$, using the quarnet-splits of $\N$ and \cref{lem:close_cycle} \label{line:close_cycle}\\
    }

    $\Cr{\Y \cup \{x\}} $ is constructed by glueing $\Cr{A \cup \{b\}}$ at $b$ to $\Cr{B \cup \{a, x\}}$ at $a$ \label{line:glue} \tcp*{see \cref{fig:glueing}}
}

\Return{$\Cr{\Y \cup \{x\}}$}

\end{algorithm}

\begin{figure}[htb]
\centering
\tikzstyle{special_edge2}=[draw=customgreen, thick]
\tikzstyle{special_edge3}=[densely dashed, -{Latex[scale=.7]}, draw=customgreen, thick]
\tikzstyle{special_edge6}=[draw=customgreen, densely dotted, very thick]
\tikzstyle{special_node}=[circle, draw=customgreen, fill=customgreen, scale=0.45]

\tikzstyle{special_node3}=[circle, draw=gray, fill=gray, scale=0.2]
\tikzstyle{special_edge7}=[draw=gray]
\tikzstyle{special_edge8}=[draw=gray]
\tikzstyle{special_node2}=[circle, draw=gray, fill=gray, scale=0.2]

\tikzstyle{special_edge4}=[-{Latex[scale=.9]}, draw=gray!55, dashed]

\tikzstyle{semi-active}=[decoration={markings, mark=at position .7 with {\arrow{to}}},postaction={decorate}, ultra thick, draw=customred]
\tikzstyle{active}=[-, draw=blue, ultra thick]

\begin{tikzpicture}[scale=0.44]
	\begin{pgfonlayer}{nodelayer}
		\node [style={internal_node}] (0) at (7.25, 0) {};
		\node [style={internal_node}] (1) at (10.75, 0) {};
		\node [style={internal_node}] (2) at (8.25, -1.5) {};
		\node [style={internal_node}] (3) at (9.75, -1.5) {};
		\node [style={internal_node}] (4) at (9.75, 1.5) {};
		\node [style={leaf_node}, label={right:9}] (7) at (7.5, -2.5) {};
		\node [style={leaf_node}, label={left:10}] (8) at (9.75, -2.75) {};
		\node [style={leaf_node}, label={right:11}] (9) at (10.75, -2.25) {};
		\node [style={leaf_node}, label={left:15}] (10) at (9.75, 2.75) {};
		\node [style={leaf_node}, label={right:14}] (11) at (10.75, 2.25) {};
		\node [style={internal_node}] (12) at (5.5, 1.5) {};
		\node [style={leaf_node}, label={above:1}] (13) at (6.25, 1.5) {};
		\node [style={leaf_node}, label={above:3}] (14) at (4.75, 1.5) {};
		\node [style={leaf_node}, label={above:2}] (15) at (5.5, 2.25) {};
		\node [style={internal_node}] (16) at (2, 0) {};
		\node [style={internal_node}] (17) at (1, 0.75) {};
		\node [style={internal_node}] (18) at (1, -0.75) {};
		\node [style={leaf_node}, label={left:6}] (19) at (0.25, -1.5) {};
		\node [style={leaf_node}, label={right:7}] (20) at (1.25, -1.75) {};
		\node [style={leaf_node}, label={left:5}] (21) at (0.25, 1.5) {};
		\node [style={leaf_node}, label={right:4}] (22) at (1.25, 1.75) {};
		\node [style={special_node}, label={below:$\B_1$}] (24) at (5.5, 0) {};
		\node [style={internal_node}] (25) at (3.75, 0) {};
		\node [style={leaf_node}, label={right:8}] (26) at (3.75, -1.25) {};
		\node [style={internal_node}] (27) at (12, 0) {};
		\node [style={leaf_node}, label={right:13}] (28) at (12.75, 0.75) {};
		\node [style={leaf_node}, label={right:12}] (29) at (12.75, -0.75) {};
		\node [style={medium_label}] (30) at (5.5, 5) {$\Cr{\Y_1 }$};
		\node [style={internal_node}] (63) at (16.75, 0) {};
		\node [style={internal_node}] (64) at (20.25, 0) {};
		\node [style={internal_node}] (65) at (17.75, -1.5) {};
		\node [style={internal_node}] (66) at (19.25, -1.5) {};
		\node [style={internal_node}] (67) at (19.25, 1.5) {};
		\node [style={leaf_node}, label={right:9}] (70) at (17, -2.5) {};
		\node [style={leaf_node}, label={left:10}] (71) at (19.25, -2.75) {};
		\node [style={leaf_node}, label={right:11}] (72) at (20.25, -2.25) {};
		\node [style={leaf_node}, label={left:15}] (73) at (19.25, 2.75) {};
		\node [style={leaf_node}, label={right:14}] (74) at (20.25, 2.25) {};
		\node [style={small_label}] (75) at (18.5, 0) {$\B_2$};
		\node [style={internal_node}] (77) at (21.5, 0) {};
		\node [style={leaf_node}, label={right:13}] (78) at (22.25, 0.75) {};
		\node [style={leaf_node}, label={right:12}] (79) at (22.25, -0.75) {};
		\node [style={leaf_node}, label={above:1}] (80) at (15.5, 0) {};
		\node [style={medium_label}] (82) at (17, 5) {$\Cr{\Y_2 }$};
		\node [style={internal_node}] (83) at (26.25, 0) {};
		\node [style={internal_node}] (85) at (26.5, -1) {};
		\node [style={internal_node}] (86) at (27.5, -2) {};
		\node [style={internal_node}] (87) at (27.5, 1.75) {};
		\node [style={leaf_node}, label={above:9}] (88) at (25.5, -1.5) {};
		\node [style={leaf_node}, label={left:10}] (89) at (27.25, -3) {};
		\node [style={leaf_node}, label={right:11}] (90) at (28.25, -2.75) {};
		\node [style={leaf_node}, label={left:15}] (91) at (27.25, 2.75) {};
		\node [style={leaf_node}, label={right:14}] (92) at (28.25, 2.5) {};
		\node [style={leaf_node}, label={above:1}] (97) at (25, 0) {};
		\node [style={medium_label}] (99) at (25.5, 5) {$\Cr{\Y_3 }$};
		\node [style={internal_node}] (114) at (32, -0.5) {};
		\node [style={internal_node}] (115) at (32.5, -1.5) {};
		\node [style={leaf_node}, label={left:9}] (116) at (32.25, -2.5) {};
		\node [style={leaf_node}, label={right:10}] (117) at (33.25, -2.25) {};
		\node [style={leaf_node}, label={left:15}] (118) at (32.25, 2.5) {};
		\node [style={leaf_node}, label={right:14}] (119) at (33.25, 2.25) {};
		\node [style={leaf_node}, label={above:1}] (120) at (31, -0.75) {};
		\node [style={medium_label}] (121) at (31.25, 5) {$\Cr{\Y_4}$};
		\node [style=none] (122) at (32, 0.5) {};
		\node [style={special_node2}, label={above:\textcolor{gray}{16}}] (123) at (31, 0.75) {};
		\node [style={special_node}, label={right:$\B_4$}] (124) at (32.5, 1.5) {};
		\node [style=none] (125) at (8.25, 1.5) {};
		\node [style=none] (126) at (17.75, 1.5) {};
		\node [style=none] (127) at (26.5, 1) {};
		\node [style={special_node3}, label={above:\textcolor{gray}{16}}] (128) at (7.5, 2.5) {};
		\node [style={special_node3}, label={above:\textcolor{gray}{16}}] (130) at (17, 2.5) {};
		\node [style={special_node3}, label={above:\textcolor{gray}{16}}] (131) at (25.5, 1.75) {};
		\node [style={small_label}] (132) at (27.75, -1) {$\cancel{\B_3}$};
		\node [style={medium_label}] (133) at (5.5, -4.75) {$\Cr{\Y_1 \textcolor{gray}{\cup \{16\}} }$};
		\node [style={medium_label}] (134) at (17, -4.75) {$\Cr{\Y_2 \textcolor{gray}{\cup \{16\}} }$};
		\node [style={medium_label}] (135) at (25.5, -4.75) {$\Cr{\Y_3\textcolor{gray}{\cup \{16\}} }$};
		\node [style={medium_label}] (136) at (31.25, -4.75) {$\Cr{\Y_4 \textcolor{gray}{\cup \{16\}} }$};
		\node [style=none] (137) at (7.5, 5) {};
		\node [style=none] (138) at (15, 5) {};
		\node [style=none] (139) at (19, 5) {};
		\node [style=none] (140) at (23.5, 5) {};
		\node [style=none] (141) at (19.5, -4.75) {};
		\node [style=none] (142) at (23, -4.75) {};
		\node [style=none] (143) at (8, -4.75) {};
		\node [style=none] (144) at (14.5, -4.75) {};
		\node [style=none] (145) at (27.5, 5) {};
		\node [style=none] (146) at (29.25, 5) {};
		\node [style=none] (147) at (28, -4.75) {};
		\node [style=none] (148) at (28.75, -4.75) {};
		\node [style=none] (149) at (33.5, 5) {};
		\node [style=none] (150) at (33.75, -4.75) {};
	\end{pgfonlayer}
	\begin{pgfonlayer}{edgelayer}
		\draw [style=edge] (12) to (13);
		\draw [style=edge] (12) to (15);
		\draw [style=edge] (12) to (14);
		\draw [style=edge] (17) to (18);
		\draw [style=edge] (17) to (22);
		\draw [style=edge] (17) to (21);
		\draw [style=edge] (18) to (19);
		\draw [style=edge] (18) to (20);
		\draw [style={ret_arc}] (17) to (16);
		\draw [style={ret_arc}] (18) to (16);
		\draw [style=edge] (12) to (24);
		\draw [style=edge] (24) to (25);
		\draw [style=edge] (25) to (16);
		\draw [style=edge] (26) to (25);
		\draw [style=edge] (28) to (27);
		\draw [style=edge] (27) to (29);
		\draw [style=edge] (2) to (7);
		\draw [style=edge] (3) to (8);
		\draw [style=edge] (3) to (9);
		\draw [style=edge] (10) to (4);
		\draw [style=edge] (4) to (11);
		\draw [style={ret_arc}, bend left] (4) to (1);
		\draw [style={ret_arc}, bend right] (3) to (1);
		\draw [style=edge] (1) to (27);
		\draw [style=edge, bend right=15] (2) to (3);
		\draw [style=edge, bend right] (0) to (2);
		\draw [style=semi-active] (24) to (0);
		\draw [style={special_edge2}, bend left] (65) to (63);
		\draw [style=edge] (78) to (77);
		\draw [style=edge] (77) to (79);
		\draw [style=edge] (65) to (70);
		\draw [style=edge] (66) to (71);
		\draw [style=edge] (66) to (72);
		\draw [style=edge] (73) to (67);
		\draw [style=edge] (67) to (74);
		\draw [style={special_edge3}, bend left] (67) to (64);
		\draw [style={special_edge3}, bend right] (66) to (64);
		\draw [style=edge] (63) to (80);
		\draw [style=semi-active] (77) to (64);
		\draw [style={special_edge6}, bend right=15] (65) to (66);
		\draw [style=edge] (85) to (88);
		\draw [style=edge] (86) to (89);
		\draw [style=edge] (86) to (90);
		\draw [style=edge] (91) to (87);
		\draw [style=edge] (87) to (92);
		\draw [style=edge] (83) to (97);
		\draw [style=semi-active, bend left=15] (86) to (85);
		\draw [style=edge, bend right=15] (83) to (85);
		\draw [style=edge] (115) to (116);
		\draw [style=edge] (114) to (120);
		\draw [style=edge, bend right=15] (114) to (115);
		\draw [style=edge] (117) to (115);
		\draw [style=active] (122.center) to (114);
		\draw [style={special_edge8}] (122.center) to (123);
		\draw [style=edge] (118) to (124);
		\draw [style=edge] (124) to (119);
		\draw [style=active, bend right=15] (124) to (122.center);
		\draw [style=edge, bend left=345] (4) to (125.center);
		\draw [style=edge, bend right] (125.center) to (0);
		\draw [style={special_edge2}, bend left=15] (126.center) to (67);
		\draw [style={special_edge2}, bend right] (126.center) to (63);
		\draw [style=edge, bend left=15] (127.center) to (87);
		\draw [style=edge, bend right=15] (127.center) to (83);
		\draw [style={special_edge7}] (128) to (125.center);
		\draw [style={special_edge7}] (130) to (126.center);
		\draw [style={special_edge7}] (131) to (127.center);
		\draw [style={special_edge4}] (137.center) to (138.center);
		\draw [style={special_edge4}] (139.center) to (140.center);
		\draw [style={special_edge4}] (145.center) to (146.center);
		\draw [style={special_edge4}, bend left=90, looseness=0.75] (149.center) to (150.center);
		\draw [style={special_edge4}] (148.center) to (147.center);
		\draw [style={special_edge4}] (142.center) to (141.center);
		\draw [style={special_edge4}] (144.center) to (143.center);
	\end{pgfonlayer}
\end{tikzpicture}
\caption{Illustration of \cref{alg:canonical_network_recursive} when used to attach leaf 16 to $\Cr{\Y_1}$, where $\C$ is the canonical network from \cref{fig:central_edges} and $\Y_1 = \{1, \ldots, 15\}$ contains its first fifteen leaves. The algorithm recurses (from left to right, excluding the grey leaf 16) on the leaf sets $\Y_1$, $\Y_2 = \{1, 9, \ldots, 15\}$, $\Y_3 = \{1, 9, 10, 11, 14, 15\}$ and $\Y_4 = \{1, 9, 10, 14, 15\}$. In the recursive steps $i \in \{ 1, 2, 4 \}$, a central blob $\B_i$ (in pink) and an incident nice central edge $u_i v_i$ (the thick light/dark blue edge) are found. In recursive step $3$, the central blob $\B_3$ is undefined and the thick light blue edge $u_3 v_3$ instead originates from a nice spine edge (in dotted pink) of the previous central cycle~$\B_2$. In the recursive steps $i \in \{ 1, 2, 3 \}$, the light blue edge $u_i v_i$ is a pointing edge for leaf 16 (with the arrowhead showing the orientation) which is used to cut off part of the network. In the last recursive step, the dark blue edge $u_4 v_4$ is a strong stem edge used to attach leaf 16 (in grey). The algorithm subsequently moves back up to the top level of the recursion, each time gluing the relevant parts back to the canonical subnetworks of the network (from right to left, now including the grey leaf 16).}
\label{fig:alg_canonical}
\end{figure}

\begin{theorem}\label{thm:alg_canonical_network}
Given the quarnet-splits of a semi-directed level-1 network $\N$ on $\X = \{ x_1, \ldots, x_n \}$, there exists an algorithm that constructs the canonical form of $\N$ in $\bigO(n^2)$ time using $\bigO(n \log n)$ quarnet-splits of $\N$.
\end{theorem}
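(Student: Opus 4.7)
The plan is to show that a single invocation of \cref{alg:canonical_network_recursive} correctly attaches one leaf in $\bigO(|\Y|)$ time using $\bigO(\log |\Y|)$ quarnet-splits, and then iterate this procedure over the $n$ leaves of $\X$, starting from the trivial two-leaf canonical network $\Cr{\{x_1, x_2\}}$ (a single edge) and ending at $\Cr{\X} = \C$. Summing the costs across the $n-2$ leaf additions will immediately yield the claimed $\bigO(n^2)$ time and $\bigO(n \log n) = \sum_{i=2}^{n-1} \bigO(\log i)$ quarnet-splits.

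Correctness of a single leaf attachment follows by induction on $|\Y|$ through case analysis of the cut-edge $uv$ chosen by the algorithm. For the base case $|\Y| \leq 4$, a constant number of edges is tested with \cref{cor:find_active_passive} to locate the unique canonical stem of \cref{lem:unique_network_stem} and $x$ is attached as prescribed there. For larger $|\Y|$, if $uv$ is a strong or weak stem edge, \cref{lem:unique_network_stem} describes the attachment directly (using \cref{lem:find_passive_cutpath} to locate the full weak stem path if necessary). If $uv$ is a pointing edge, \cref{cor:glue_restrictions} reduces the task to computing $\Cr{A \cup \{b\}}$ (which is obtained from $\Cr{\Y}$ in linear time by contracting the $A$-side to the single leaf $b$) and $\Cr{B \cup \{a, x\}}$, then glueing them; the latter is obtained either by a recursive direct attachment of $x$ to $\Cr{B \cup \{a\}}$, or, in the central-cycle subcase, by recursively computing $\Cr{B \cup \{x\}}$ (the cycle being ``unfolded'' once its below-reticulation leaves are removed) and re-inserting $a$ via \cref{lem:close_cycle}.

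For the complexity bound, the critical estimate is that the recursion tree of a single leaf attachment has depth $\bigO(\log |\Y|)$. \cref{lem:nice_central_edge} guarantees that a nice central edge passes to a subproblem with at most $3|\Y|/4 + \bigO(1)$ leaves whenever we recurse directly; in the unfolding subcase the immediate call on $\Cr{B}$ may not itself shrink the problem, but the supplied nice spine edge $st$ has spine split $2 \leq |A'|, |B'| \leq 3|\Y|/4$, so the following call contracts by the required constant fraction. Each level costs $\bigO(|\Y|)$ time (locating the central blob and the central/spine edge, checking the edge type, contracting, glueing, and closing cycles, by \cref{lem:nice_central_edge,cor:find_active_passive,lem:close_cycle}) and $\bigO(1)$ quarnet-splits, with at most one level along the recursion invoking the weak-stem-path search of \cref{lem:find_passive_cutpath} at its additional $\bigO(\log |\Y|)$ quarnet-split cost. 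A geometric sum over the $\bigO(\log |\Y|)$ levels yields $\bigO(|\Y|)$ time and $\bigO(\log |\Y|)$ quarnet-splits per leaf, and summing over all leaves completes the proof.

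The hard part is the depth analysis through the spine-edge unfolding: a direct recursion on $\Cr{B}$ in the central-cycle/pointing-inward subcase may preserve essentially all of the leaves of $\Y$, so logarithmic depth must be proved \emph{amortised} over pairs of consecutive calls. This will require exploiting that the nice spine-split bound $3|\Y|/4$ is stated relative to the current $|\Y|$, and verifying by case analysis on whether the supplied $st$ is itself a strong stem, weak stem, or pointing edge that the pair ``unfold-then-use-$st$'' indeed cuts off at least a $1/4$-fraction of the leaves before any further recursion occurs.
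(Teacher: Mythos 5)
Your proposal follows the paper's argument closely: iterate \cref{alg:canonical_network_recursive} over the leaves starting from a two-leaf edge, prove each invocation correct by strong induction on $|\Y|$ via case analysis on the type of the chosen cut-edge $uv$, and obtain the complexity bounds by showing that the recursion depth is $\bigO(\log |\Y|)$ because pairs of consecutive levels shrink the subproblem by a constant fraction. The amortisation you flag as ``the hard part'' is resolved in the paper more directly than you anticipate, with no case analysis on the type of the supplied cut-edge $st$: since $st$ induces in $\Cr{B}$ exactly the spine split $A'|B'$ with $2 \leq |A'|,|B'| \leq \frac34 k$ (where $k = |\Y|$), any \emph{subsequent} recursive call must be on at most $\frac34 k + 1$ leaves whatever the type of $st$ turns out to be, yielding the recurrence $D(k) \leq D\bigl(\frac34 k + 1\bigr) + 2$ with $D(5)\leq 1$ and hence $D(k)=\bigO(\log k)$.
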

\begin{proof}
\emph{Correctness:} The algorithm starts with a single edge between two arbitrary leaves and then repeatedly attaches the remaining $\bigO (n)$ leaves in arbitrary order with \cref{alg:canonical_network_recursive}. Note that we do not provide the optional cut-edge in the top level recursive calls of \cref{alg:canonical_network_recursive}. To prove correctness it will be enough to prove the following claim by strong induction on $k$: given a set $\Y \subset \X$ of $k$ leaves, the canonical network $\Cr{\Y}$, some leaf $x \in \X \setminus \Y$, and optionally a non-trivial cut-edge $uv$, \cref{alg:canonical_network_recursive} constructs the canonical network $\Cr{\Y \cup \{x \}}$. In the base cases where $k \in \{2,3,4\}$, the claim follows directly from \cref{lem:unique_network_stem,cor:find_active_passive}. This is true since the canonical stem can never be a stem cycle in a canonical network with at most four leaves, so \cref{lem:unique_network_stem} covers all cases.

Now let $k\geq 5$ be arbitrary and assume that the claim holds for all values $k'$ with $2 \leq k' < k$. If no cut-edge $uv$ is provided, \cref{lem:nice_central_edge} will always find a central blob $\B$ and incident nice central edge $uv$. \cref{cor:find_active_passive} will correctly determine whether $uv$ is a strong stem edge, weak stem edge or pointing edge. If $uv$ is a strong or weak stem edge, \cref{lem:unique_network_stem,lem:find_passive_cutpath} prove correctness.

The if-statement on line~\ref{line:case3} is entered when $uv$ is a pointing edge with orientation $uv$. The glueing operation to obtain $\Cr{\Y \cup \{x\}}$ on line \ref{line:glue} is then correct by \cref{cor:glue_restrictions}. It remains to show that the creation of $\Cr{B \cup \{x, a\}}$ is correct. First, consider the case where $\B$ is undefined, $u$ is in $\B$ and/or $\B$ is a central vertex. If $\B$ is undefined, a non-trivial cut-edge was provided and we have $2 \leq |B \cup \{a\}| \leq k-1 < k$. Otherwise, if $u$ is in $\B$ and/or $\B$ is a central vertex, $uv$ is a nice central edge. Specifically, if $u$ is in $\B$ (so $uv$ points away from $\B$), we have by the definition of a central edge that $|B \cup \{a\}| \leq \frac12 k +1 < k$. On the other hand, if $v$ is in $\B$ and $\B$ is a central vertex, then $|B \cup \{a\}| \leq \frac34 k +1 < k$ by the definition of a nice central edge. In all cases we recurse on a canonical network with strictly fewer leaves (and more than two leaves), which shows correctness by the induction hypothesis.

For the other case, note that since $\B$ is not undefined, no cut-edge $uv$ could have been provided as an argument. Thus, because $\B$ is also not a central vertex, it is a central cycle and $uv$ is a nice central edge incident to it such that $v$ is the reticulation of $\B$. By \cref{lem:nice_central_edge}, we can indeed find a nice spine edge of the cycle $\B$, which in turn corresponds to a non-trivial cut-edge $st$ in $\Cr{B}$ (see e.g. \cref{fig:central_edges}). Thus, $2 \leq |B| < k$ and we recurse on a strictly smaller network (with at least two leaves), proving that $\Cr{B \cup \{x\}}$ is constructed correctly by the induction hypothesis. Since $\B$ is a central cycle and $v$ is the reticulation of $\B$, \cref{lem:close_cycle} then proves that $\Cr{B \cup \{x, a\}}$ is constructed correctly.

\emph{Complexity:} Recall that the main algorithm repeatedly invokes \cref{alg:canonical_network_recursive} (without providing a cut-edge at the top level of recursion) to attach all the $\bigO (n)$ leaves. Thus, for the theorem it will be enough to show that \cref{alg:canonical_network_recursive} takes $\bigO (k)$ time and uses $\bigO( \log k)$ quarnet-splits on a canonical network of $k$ leaves (when no cut-edge is provided at the top level of recursion) .

We will first consider the maximum recursion depth $D(k)$ of \cref{alg:canonical_network_recursive} applied to a canonical network of $k$ leaves. At the top level of recursion no cut-edge is provided and $\B$ is defined. Thus, when the algorithm recurses and $u$ is in $\B$ or $\B$ is a central vertex, it recurses on a canonical network with at most $\frac34 k + 1$ leaves (see the correctness proof). The other case when the algorithm recurses is if $v$ is in $\B$ and $\B$ is a central cycle. Then, again by the previous correctness proof, it recurses on at most $k-1$ leaves. However, a cut-edge derived from a nice spine edge with spine split $A'|B'$ is also provided for the next level of recursion. We then know that $2 \leq |A'| \leq \frac34 k$ and $2 \leq |B'| \leq \frac34 k$. But this means that the recursion on the next level is on a leaf set with at most $\frac34 k + 1$ leaves. Hence, we obtain the recurrence formula $D(k) \leq D \left(\frac{3}{4} k + 1\right) + 2$ for $k \geq 5$ with $D(5) \leq 1$. Whenever $k\geq 5$ we have $\frac34 k + 1 < \frac{19}{20} k$ and therefore the formula simplifies to $D(k) \leq D \left(\frac{19}{20} k \right) + 2$. It then easily follows that the recursion depth is $\bigO( \log k)$.

For the final complexity, note that the glueing operations, attaching leaves, finding splits and (implicitly) inducing canonical subnetworks all take no more than $\bigO(k)$ time per recursive level. This holds because the number of vertices and edges in a canonical network is a linear function of its leaves. All the other non-recursive operations also take $\bigO (k)$ time as outlined in \cref{cor:find_active_passive,lem:find_passive_cutpath,lem:nice_central_edge,lem:close_cycle}. Together with the recursion depth of $\bigO( \log k)$ this results in a total time complexity of $\bigO (k)$ for \cref{alg:canonical_network_recursive}. The operations from \cref{cor:find_active_passive,lem:nice_central_edge,lem:close_cycle} all use $\bigO (1)$ quarnet-splits, whereas \cref{lem:find_passive_cutpath} uses $\bigO (\log k)$ quarnet-splits. However, the method from \cref{lem:find_passive_cutpath} is only used when the algorithm does not recurse any further. Hence, it follows from the recursion depth that the total number of quarnet-splits used by \cref{alg:canonical_network_recursive} is $\bigO( \log k)$.
\end{proof}

\subsection{Reconstructing a level-1 network from its canonical form}\label{subsec:canonical_lev1}
The previous subsection was concluded with an algorithm that constructs the canonical form of a semi-directed level-1 network from its quarnet-splits. If more information is available, i.e. the full quarnets are known, we can create the complete network from its canonical form. This is described in the following theorem. Assuming that the network is triangle-free even allows us to construct the complete network while disregarding any triangles in the quarnets. As explained in the introductory section, this aligns nicely with recent work showing that the triangles in quarnets are hard to locate in practice \cite{holtgrefe2024squirrel,barton2022statistical,martin2023algebraic}. Furthermore, we also describe a variant of our algorithm that constructs most of the semi-directed level-1 network from its displayed quartets instead, aligning with the identifiability results from \cite{banos2019identifying}.

\begin{theorem}\label{thm:level1_construction}
Let $\N$ be a semi-directed level-1 network on $\X = \{ x_1, \ldots, x_n\}$ with $n\geq 4$. 
\begin{enumerate}[label={(\alph*)},noitemsep,topsep=0pt]
    \item Given the quarnets of $\N$, there exists an algorithm that reconstructs $\N$ in $\bigO(n^2)$ time using $\bigO(n \log n)$ quarnets of $\N$.
    \item Given the quarnet-splits and the four-cycle quarnets of $\N$, there exists an algorithm that reconstructs $\N$ up to placing the reticulations in its triangles and up to collapsing its triangles in $\bigO(n^2)$ time using $\bigO(n \log n)$ quarnet-splits and $\bigO(n)$ four-cycle quarnets of $\N$.
    \item Given the displayed quartets of $\N$, there exists an algorithm that reconstructs $\N$ up to placing the reticulations in its triangles and 4-cycles and up to collapsing its triangles in $\bigO(n^2)$ time using $\bigO(n \log n)$ displayed quartets of $\N$.
\end{enumerate}
\end{theorem}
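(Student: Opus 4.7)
The plan is to reduce all three parts to a common two-phase strategy. In the first phase we reconstruct the canonical form $\C$ of $\N$ by invoking \cref{thm:alg_canonical_network} on the quarnet-splits of $\N$. These splits are directly available in parts (a) and (b), and in part (c) they can be derived on demand from the displayed quartets using the characterisation noted in \cref{sec:preliminaries} (namely that $ab\mid cd$ is a quarnet-split iff the unique displayed quartet on $\{a,b,c,d\}$ has split $ab\mid cd$), which costs at most three displayed-quartet queries per quarnet-split query. This phase costs $\bigO(n\log n)$ queries and $\bigO(n^2)$ time in all three settings.

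In the second phase we refine $\C$ into $\N$, or the appropriate weakening of $\N$, by locally expanding the blobs that were collapsed during canonicalisation. The structure of $\C$ makes the classification transparent: a $4$-cycle blob of $\N$ appears as a degree-$4$ vertex of $\C$ not lying on any cycle; a large cycle of $\N$ appears as a cycle in $\C$ whose two degree-$4$ vertices flank the reticulation (its two cycle contraction edges having been suppressed); and a triangle blob of $\N$ appears as an ordinary degree-$3$ vertex, indistinguishable from a non-blob internal vertex. For a $4$-cycle blob with pendant subnetworks $\N_1,\dots,\N_4$, I would pick one leaf $a_i\in\N_i$; by \cref{obs:restriction} the induced quarnet is a four-cycle, whose cyclic order and reticulation position fully recover the blob. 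For each degree-$4$ vertex $P$ of a large cycle, I would pick one leaf $a$ and one leaf $b$ from the two external pendant subnetworks of $P$, one leaf $r$ below the reticulation of the cycle, and one leaf $c$ on the far side of the cycle; a short case analysis (tracing the up-down paths and suppressing degree-$2$ vertices) shows that the induced quarnet is a four-cycle which reveals which of $a,b$ is ``inner'' (adjacent to the reticulation). For a degree-$3$ vertex of $\C$ with pendant subnetworks $\N_1,\N_2,\N_3$ in part (a), I would pick $a_i\in\N_i$ together with a repeated leaf $a_4\in\N_1$: the quarnet is a quartet tree (vertex is non-blob) or a single triangle (vertex is a triangle blob, and the cherry position shows whether $\N_1$ is below the reticulation), and at most one follow-up query distinguishes $\N_2$ from $\N_3$ in the latter case. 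In parts (b) and (c), the quarnet-splits and displayed quartets cannot tell a triangle blob apart from a tree-like degree-$3$ vertex, nor locate its reticulation; this matches exactly the slack in the theorem statement. For part (c), the two displayed quartets of an induced four-cycle quarnet pin down the cyclic order but not the reticulation vertex, which again matches the stated ``up to placing reticulations in its 4-cycles'' slack; however, for large cycle expansion the known reticulation-side leaf $r$ plays the role of the reticulation marker, so the inner/outer assignment is still determined in part (c).

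Each refinement operation uses $\bigO(1)$ queries and $\bigO(n)$ local time, and there are $\bigO(n)$ blobs, so phase two contributes $\bigO(n)$ queries and $\bigO(n^2)$ time. Combined with phase one this yields the claimed $\bigO(n\log n)$ query and $\bigO(n^2)$ time bounds in all three parts. The main obstacle will be the case analysis required to verify that each proposed 4-leaf query actually induces the asserted quarnet type; in particular, for the large-cycle expansion one has to check that the four chosen leaves do not accidentally collapse the cycle further, and for part (c) one has to argue that the pair of displayed quartets of the induced four-cycle quarnet, together with the known identity of $r$, uniquely determines the inner/outer assignment. A secondary technical point will be to argue that the listed ambiguities in parts (b) and (c) are genuinely the only obstructions, so that the reconstruction is as sharp as the information content of the input permits.
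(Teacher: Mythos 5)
Your proposal follows exactly the paper's two-phase structure: Phase~1 runs the algorithm from \cref{thm:alg_canonical_network} to obtain the canonical form $\C$, and Phase~2 expands each collapsed blob of $\C$ with $\bigO(1)$ additional full-quarnet queries. Your handling of degree-$4$ vertices and of the two degree-$4$ vertices flanking a large cycle matches the paper's Steps~1 and~3, and your observations about which information survives to parts~(b) and~(c) (drop triangle resolution; for quartets additionally drop reticulation placement in $4$-cycles, while large cycles still work because the reticulation-side leaf $r$ serves as a marker) are correct and match the paper's proof.

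The one place where you are imprecise --- and the paper is more careful --- is the degree-$3$ vertex case. You claim the quarnet on $\{a_1,a_2,a_3,a_4\}$, with $a_1,a_4$ in the same pendant $\N_1$, is ``a quartet tree or a single triangle'' and that a single triangle implies $v$ is a triangle blob. Neither is quite right: if a triangle blob lies inside $\N_1$ separating $a_1$ from $a_4$, the quarnet is a single triangle even when $v$ is a trivial vertex, and if both $v$ and such an internal blob are triangles, the quarnet is a double triangle --- a case you omit entirely. The cherry position does encode the distinction, but the paper sidesteps this case analysis altogether by restricting to the trinet on three leaves $\{y_1,y_2,y_3\}$ (one per pendant): any blob that is wholly inside a single pendant then has degree~$\leq 2$ and is suppressed, so the trinet is a triangle if and only if $v$ itself is a triangle. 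You should also choose the repeated leaf from a pendant that actually has at least two leaves (the paper takes $|Y_3|\geq 2$, which exists since $n\geq 4$); your phrasing picks $\N_1$ arbitrarily. These are repairable oversights in an otherwise faithful reconstruction of the paper's argument, and you correctly flag that the case analysis needs verification --- the trinet trick is exactly the tool that discharges it cleanly.
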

\begin{proof}
We first prove part (a) of the theorem and treat parts (b) and (c) separately at the end of the proof. \cref{thm:alg_canonical_network} proves that we can construct the canonical network $\C$ in quadratic time from $\bigO(n \log n)$ quarnet-splits of $\N$, which can be formed from the quarnets of $\N$. It remains to show that we can transform every blob of $\C$ into the correct blob of $\N$, using the following three steps. The complexity bounds will then follow directly from the fact that $\N$ has $\bigO (n)$ blobs and that the described methods take no more than $\bigO (n)$ time (and $\bigO(1)$ quarnets) per blob. Given a cycle $\Cycle$, we will refer to the circular order of the subnetworks around $\Cycle$ and the placement of the reticulation as the \emph{orientation} of $\Cycle$. 

\emph{Step 1.} Every blob of $\C$ that consists of a single degree-4 vertex $v$ is a 4-cycle $\Cycle$ in $\N$. Suppose the partition of $\X$ induced by $v$ is $Y_1 | Y_2 | Y_3 | Y_4$ and choose one leaf $y_i$ from every $Y_i$. Then the quarnet $\N|_{\{y_1, y_2, y_3, y_4 \}}$ will be a four-cycle with one reticulation vertex and its orientation tells us exactly what the orientation of $\Cycle$ must be in~$\N$.

\emph{Step 2.} Similarly, every blob of $\C$ that consists of a single degree-3 vertex $v$ is either a single vertex $v'$ or a triangle $\Cycle$ in~$\N$. Suppose the partition of $\X$ induced by $v$ is $Y_1 | Y_2 | Y_3 $ such that $|Y_3|\geq 2$ (this exists since $n\geq 4$). Choose one leaf $y_i$ from every $Y_i$ and choose $y_4$ as a different leaf from $Y_3$. The quarnet $\N|_{\{y_1, y_2, y_3, y_4 \}}$ will then be a quartet tree, a single triangle, or a double triangle (see e.g. \cref{fig:quarnets}). If we take the subnetwork of this quarnet induced by $\{y_1, y_2, y_3\}$, we either get a triangle or a 3-star. If this \emph{trinet} is a triangle, we know the exact orientation of the triangle $\Cycle$ in $\N$. If the trinet is a 3-star, we know that $v$ is a single vertex $v'$ in $\N$.

\emph{Step 3.} Lastly, in every large cycle $\Cycle$ of $\N$ the two cycle contraction edges are contracted when obtaining $\C$. A similar strategy as above allows us to determine how to undo this operation. Specifically, suppose that $Y_1 | Y_2 | \ldots |Y_{k-1} | Y_k| Z$ is the partition of $\X$ induced by $\Cycle$ with $Z$ below the reticulation, and $Y_1, Y_2$ and $Y_{k-1}, Y_k$ the leaf sets corresponding to the two cycle contraction edges. If we let $y_i \in Y_i$ and $z \in Z$ be arbitrary, then the two four-cycle quarnets $\N|_{\{z, y_1, y_2, y_{k-1}\}}$ and $\N|_{\{z, y_{k-1}, y_k, y_1 \}}$ directly show what the orientation of $\Cycle$ must be in $\N$.

For part (b) of the theorem it is enough to note that Step 2 is redundant if we want to reconstruct $\N$ up to collapsing its triangles and up to placing reticulations in them. For part (c), we use that from the displayed quartets of a semi-directed network we can obtain the quarnet-splits and the circular ordering of the four-cycle quarnets without the placement of the reticulation (see \cite[Lem.\,5.1]{rhodes2024identifying}). To see the result, note that we can again skip Step~2, whereas we can perform most of Step~1 except for placing the reticulation in the 4-cycles of the network. Finally, note that the number of displayed quartets used is at most three times the corresponding number of quarnets used.
\end{proof}

The proof of the previous result also implicitly shows why the canonical network is the most refined network one can unambiguously construct from quarnet-splits. In particular, it is impossible to locate triangles using quarnet-splits, to infer the circular order of subnetworks and the reticulation vertex of 4-cycles, and to determine how the two cut-edge pairs next to a reticulation vertex of a large cycle are ordered. Therefore, quarnet-splits do not uniquely determine a complete semi-directed level-1 network.

Recall that \cref{prop:lower_bound} showed that any algorithm using only quarnet-splits to reconstruct the tree-of-blobs of a semi-directed level-$\ell$ network needs to use $\Omega (n \log n + \ell \cdot n)$ quarnet-splits. Since one can easily create the tree-of-blobs of a level-1 network from its canonical form, \cref{thm:alg_canonical_network} implies that one can reconstruct the tree-of-blobs of a level-$1$ network from $\bigO (n \log n)$ quarnet-splits. Thus, the lower bound from \cref{prop:lower_bound} is tight for level-1 networks. Furthermore, a similar information-theoretic argument as in \cref{prop:lower_bound} shows that the algorithm from \cref{thm:level1_construction} is optimal in terms of the number of quarnets it uses. That is, no algorithm exists that uses asymptotically less than $\bigO( n \log n)$ quarnets to reconstruct a semi-directed level-1 network. We present the formal statement in the following proposition. The argument can trivially be adapted to show that the number of displayed quartets used in part (c) of the previous theorem is also asymptotically optimal.

\begin{proposition}\label{prop:lower_bound2}
Given the quarnets of a semi-directed level-$1$ network $\N$ on $\X = \{x_1, \ldots, x_n\}$, any algorithm using only quarnets to reconstruct $\N$ needs to use $\Omega (n \log n)$ quarnets.
\end{proposition}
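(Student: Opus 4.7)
The plan is to mirror the information-theoretic argument given in the proof of \cref{prop:lower_bound}, but now accounting for the fact that a full quarnet carries more information than a mere quarnet-split. I will argue as follows.

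First, I would lower-bound the number of pairwise distinct semi-directed level-$1$ networks on $\X$. Every (binary) undirected phylogenetic tree on $\X$ is itself a semi-directed level-$1$ network, and it is classical that there are $(2n-5)!! = 2^{\Omega(n \log n)}$ such trees. Hence the number of $n$-leaf semi-directed level-$1$ networks is at least $2^{\Omega(n \log n)}$, and each of them must be distinguishable from all the others using only quarnets (this is guaranteed by the identifiability of semi-directed level-$1$ networks from their quarnets \cite{huber2024splits}).

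Next, I would upper-bound the amount of information carried by a single quarnet query. For any fixed $4$-element subset of $\X$, the quarnet that the algorithm reads off is one of the six topologies depicted in the figure of quarnets, with the four leaves assigned to it. A straightforward enumeration shows that the total number of distinct labeled quarnets on a fixed $4$-leaf set is some constant $c$ (independent of $n$). Therefore, the answer to any single quarnet query delivers at most $\log_2 c = O(1)$ bits of information.

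Finally, by a standard information-theoretic (decision-tree) argument, any deterministic algorithm that uses only quarnet queries to reconstruct $\N$ induces a decision tree whose leaves correspond to the possible output networks. Since the tree must have at least $2^{\Omega(n \log n)}$ leaves and each internal node has out-degree at most $c$, its depth is at least $\log_c\!\bigl(2^{\Omega(n \log n)}\bigr) = \Omega(n \log n)$, which is exactly the worst-case number of quarnets used by the algorithm. The main (and essentially only) thing to verify carefully is the constant-size upper bound on the number of labeled quarnets per four-leaf set, but this is immediate from the classification in the quarnet figure. Combining the three steps gives the claimed $\Omega(n \log n)$ bound.
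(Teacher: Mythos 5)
Your proposal is correct and follows essentially the same information-theoretic argument as the paper: lower bound the number of level-1 semi-directed networks by the number of binary trees ($2^{\Omega(n\log n)}$), upper bound the number of possible answers to a single quarnet query by a constant, and conclude via a decision-tree depth argument. The paper just makes the constant explicit (it counts $3+18+18+12 = 51$ labeled quarnets per four-leaf set), whereas you leave it as an unspecified $c = O(1)$, which is all that is needed.
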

\begin{proof}
In \cite{kannan1996determining} it is shown that there are $2^{\Omega (n \log n)}$ possible binary phylogenetic trees on $n$ leaves, which is thus surely a lower bound on the number of $n$-leaf semi-directed level-1 networks. Up to labeling the leaves, there are only six possible level-1 quarnets (see \cref{fig:quarnets}). Counting the number of different labelings then reveals that on any given set of four leaves there are 3 possible quartet trees, 18 single triangles, 18 double triangles, and 12 four-cycles. Thus, by an information-theoretic argument, any algorithm using only quarnets to reconstruct $\N$ needs to use $\log_{3+18+18+12} (2^{\Omega(n \log n)}) = \Omega (n \log n)$ quarnets in the worst case.
\end{proof}

\section{Discussion}\label{sec:discussion}
The main contributions of this paper are two-fold. First, we presented an $\bigO(n^3)$ time algorithm that reconstructs the tree-of-blobs of any binary $n$-leaf semi-directed network with unbounded level, using $\bigO(n^3)$ splits of its quarnets (assuming direct access to the splits of all quarnets). We have not shown that this is optimal but have shown that any such algorithm needs $\Omega (n^2)$ quarnet-splits in the worst case. Secondly, we created an algorithm that reconstructs binary $n$-leaf semi-directed level-1 networks in $\bigO( n^2)$ time, using an optimal number of $\bigO(n \log n)$ quarnets (assuming direct access to all quarnets). A variant of this algorithm can also reconstruct most of the semi-directed level-1 network using its displayed quartets instead.

Two obvious open questions that remain are whether the $\Theta (n)$ gap for the tree-of-blobs reconstruction algorithm can be bridged and whether the time complexity of the level-1 network reconstruction algorithm can be reduced from $\bigO (n^2)$ to $\bigO( n \log n)$. For the latter improvement, one might draw some inspiration from \cite{brodal2001complexity}, who present an optimal algorithm that reconstructs undirected phylogenetic trees of degree $d$ from $\Theta (d\cdot n \cdot \log_d n)$ quartets and running in similar time.\footnote{This result follows from the equivalence between undirected tree reconstruction from quartets and directed tree reconstruction from triplets~\cite{lingas1999efficient}.} The optimality of that algorithm also signifies an inherent difference between the seemingly similar tree-of-blobs reconstruction from quarnet-splits and the phylogenetic tree reconstruction from quartets. Whereas the number of quartets used in the algorithm by \cite{brodal2001complexity} reduces to $\Theta (n \log n)$ for binary trees, we showed that reconstructing the tree-of-blobs of any binary semi-directed network requires $\Omega (n^2)$ quarnet-splits. One might suspect this difference follows from the fact that the trees-of-blobs themselves could be of high degree. This is not true however, since we showed that trees-of-blobs of binary semi-directed level-1 networks can be reconstructed with $\bigO (n \log n)$ quarnet-splits, even if they have high-degree blobs.

A further research direction related to this paper is the step from level-1 to level-2. In \cite{iersel2022algorithm} this was achieved for the construction of rooted/directed level-2 phylogenetic networks from trinets (3-leaf subnetworks). Recent findings by Huber et al. \cite{huber2024splits} show that semi-directed level-2 networks can be distinguished by their quarnets. This suggests the jump from level-1 to level-2 is possible for the semi-directed case. We already know how to reconstruct the tree-of-blobs of a level-2 network (although this could perhaps be sped up) which only leaves the separate blobs to be reconstructed. To this end, it might be worthwhile to generalize our canonical form to higher-level networks, thus formalizing the most refined level-2 network that can be reconstructed from quarnet-splits. Note that \cite{ardiyansyah2021distinguishing} has already made some progress on identifiability results for the level-2 case under group-based models of evolution.

Lastly, from a practical point of view, a more robust algorithm that also works for imperfect data would be a logical next step. One objective for such a method might be to find a semi-directed level-1 network that induces as many given quarnets as possible: the \emph{maximum quarnet compatibility} problem. However, this is NP-hard, which can easily be shown using the fact that it already is for trees and quartets \cite{bryant2001constructing}. Thus, there is a need for heuristics, ideally ones that are \emph{consistent}: returning the true network whenever the data is perfect. The software tool \textsc{Squirrel} has taken a first step in this direction by allowing a \emph{dense} set of quarnets (i.e. exactly one quarnet for each set of four leaves) that do not necessarily come from a single network \cite{holtgrefe2024squirrel}. In light of our work, a next step could be to allow a \emph{non-dense} set of quarnets (i.e. at most one quarnet for each set of four leaves) as input, where the quarnets are not necessarily all induced by the same network. Allowing fewer quarnets in the input could lead to a significant speed-up. A sensible starting point is to investigate methods that construct rooted/directed level-1 phylogenetic networks from 
a non-dense set of subtrees on few leaves (e.g. \textsc{Lev1athan} \cite{huber2010practical}).

\section*{Acknowledgements}
We thank the reviewers for their helpful comments and suggestions to improve the paper.

\bibliographystyle{elsarticle-num} 
\bibliography{references}

\appendix 

\section{Proof of Lemma 1}
\label{sec:appendix}

\splitlemma*
\begin{proof}
As discussed in \cref{sec:splits}, the proof of this lemma is along the lines of the proof of a slightly weaker result by Huber~et~al.~\cite{huber2024splits}. In particular, they show that $A|B$ is a split in $\N$ if and only if $a_1 a_2 | b_1 b_2$ is a quarnet-split of $\N$ for all $a_1, a_2 \in A$ and $b_1, b_2 \in B$. The first direction of our lemma follows directly from that result. As in \cite{huber2024splits}, we prove the reverse direction by induction on the number of non-trivial splits in $\N$. 

For the base case, assume that $\N$ is simple and let $A|B$ be any non-trivial partition of $\X$. The proof by Huber~et~al.~\cite{huber2024splits} shows that in that case there exist $\abar_1 \in A$ and $\bbar_1 \in B$ such that $\abar_1 \abar_2 | \bbar_1 \bbar_2$ is not a quarnet-split for any $\abar_2 \in A \setminus \{\abar_1 \}$ and $\bbar_2 \in B \setminus \{ \bbar_1\}$. Our base case follows directly from this statement.

It remains to do the induction step. First, let $A|B$ be any non-trivial partition of $\X$, fix some $a_1 \in A, b_1 \in B$ and assume that $a_1 a_2 | b_1 b_2$ is a quarnet-split for all $a_2 \in A \setminus \{a_1\}$ and $b_2 \in B \setminus \{b_2\}$ [Assumption 1]. Furthermore, suppose that $\N$ is not simple and that there exists a non-trivial split $C|D$. Towards a contradiction, assume that $A \not\subseteq C$, $A \not\subseteq D$, $B \not\subseteq C$, and $B \not\subseteq D$. Without loss of generality, suppose that $a_1 \in C$ and $b_1 \in D$. (The other cases are analogous.) Since $A \not\subseteq C$, there exists an $\abar\notin C$, so $\abar \in D$. Similarly, since $B \not\subseteq D$, there exists a $\bbar \notin D$ such that $\bbar \in C$. Note that we must have that $\abar \neq a_1$ and $\bbar \neq b_1$. Thus, we have that $a_1, \bbar \in C$ and $b_1, \abar \in D$. Since $C|D$ is a split in $\N$, the forward direction of the lemma says that then $a_1 \bbar | \abar b_1$ is a quarnet-split of $\N$. This means that $a_1 \abar | b_1 \bbar$ is not a quarnet-split of $\N$: a contradiction with Assumption~1. All in all, we must have that $A \subseteq C$, $A \subseteq D$, $B \subseteq C$, or $B \subseteq D$.

Without loss of generality, assume that $B\subseteq D$, and consequently $C \subseteq A$. Let $uv$ be the cut-edge in $\N$ inducing the split $C|D$. Then, its removal from $\N$ creates two connected components: $\N_C$ containing $u$ and all leaves from~$C$, and $\N_D$ containing $v$ and all leaves from $D$. We now construct a new network $\N'$ from $\N$ by identifying all vertices of $\N$ that are in $\N_C$ as a single leaf $c^*$. Then, $\N'$ has one non-trivial split less than $\N$. Furthermore, if we let $A' = (A \setminus C) \cup \{ c^* \}$ and $B' = B$, then $A'|B'$ is a partition of the leaves of $\N'$. Distinguishing between two cases, we now show that $A'|B'$ is a split in $\N'$. By the construction of $\N'$, this means that $A|B$ is a split in $\N$, proving the reverse direction.

\emph{Case 1:} $a_1 \in C$. Let $\abar_2 \in A' \setminus \{c^*\}$ and $\bbar_2 \in B' \setminus \{b_1 \}$ be arbitrary. Then, $\abar_2 \in (A \setminus C) \setminus \{a_1\}$ and $\bbar_2 \in B \setminus \{b_1\}$. Furthermore, $\abar_2 \neq a_1$ by definition of $A'$. Assumption 1 then ensures that $a_1 \abar_2 | b_1 \bbar_2$ is a quarnet-split of~$\N$. Since $a_1 \in C$, we then have $c^* \abar_2 | b_1 \bbar_2$ as a quarnet-split of $\N'$. Because $\abar_2 \in A' \setminus \{c^*\}$ and $\bbar_2 \in B' \setminus \{b_1 \}$ were arbitrary, the induction hypothesis (and the fact that $\N'$ has a non-trivial split less than $\N$) shows that $A'|B'$ is a split in $\N'$.

\emph{Case 2:} $a_1 \notin C$. Let $\abar_2 \in A' \setminus \{a_1\}$ and $\bbar_2 \in B' \setminus \{b_1 \}$ be arbitrary. Clearly, $\bbar_2 \in B \setminus \{ b_1\}$. By Assumption~1, $a_1 c | b_1 \bbar_2$ is a quarnet-split of $\N$ for all $c \in C \subseteq A \setminus \{a_1\}$. Consequently, if $\abar_2 = c^*$, then $a_1 \abar_2 | b_1 \bbar_2$ is a quarnet-split of $\N'$. Otherwise, if $\abar_2 \in (A \setminus C ) \setminus \{a_1\}$, Assumption 1 implies that $a_1 \abar_2 | b_1 \bbar_2$ is a quarnet-split of~$\N$ and thus of $\N'$. Because $\abar_2 \in A' \setminus \{a_1\}$ and $\bbar_2 \in B' \setminus \{b_1 \}$ were arbitrary, the induction hypothesis (and the fact that $\N'$ has a non-trivial split less than $\N$) shows that $A'|B'$ is a split in $\N'$.
\end{proof}

\end{document}